\DeclareFontShape{T1}{lmr}{b}{sc}{<->ssub*cmr/bx/sc}{}
\DeclareFontShape{T1}{lmr}{bx}{sc}{<->ssub*cmr/bx/sc}{}
\numberwithin{equation}{section}
\newcommandx{\unsure}[2][1=]{\todo[linecolor=red,backgroundcolor=red!25,bordercolor=red,#1]{#2}}
\newcommandx{\change}[2][1=]{\todo[linecolor=blue,backgroundcolor=blue!25,bordercolor=blue,#1]{#2}}
\newcommandx{\info}[2][1=]{\todo[linecolor=OliveGreen,backgroundcolor=OliveGreen!25,bordercolor=OliveGreen,#1]{#2}}
\newcommandx{\improvement}[2][1=]{\todo[linecolor=black,backgroundcolor=black!25,bordercolor=black,#1]{#2}}
\newcommandx{\thiswillnotshow}[2][1=]{\todo[disable,#1]{#2}}
\crefname{proposition}{Proposition}{Propositions}
\crefname{equation}{}{}
\newtheorem{theorem}{Theorem}[section]
\newtheorem{lemma}[theorem]{Lemma}
\newtheorem{proposition}[theorem]{Proposition}
\newtheorem{corollary}[theorem]{Corollary}
\theoremstyle{definition}
\newtheorem{definition}[theorem]{Definition}
\newtheorem{example}[theorem]{Example}
\newtheorem{remark}[theorem]{Remark}
\crefname{assumption}{Assumption}{Assumptions}
\crefname{definition}{Definition}{Definitions}
\crefname{corollary}{Corollary}{Corollaries}
\crefname{enumi}{item}{items}
\newsavebox\myboxA
\newsavebox\myboxB
\newlength\mylenA
\newcommand*\xoverline[2][0.75]{%
  \sbox{\myboxA}{$\m@th#2$}%
  \setbox\myboxB\null
  \ht\myboxB=\ht\myboxA%
  \dp\myboxB=\dp\myboxA%
  \wd\myboxB=#1\wd\myboxA
  \sbox\myboxB{$\m@th\overline{\copy\myboxB}$}
  \setlength\mylenA{\the\wd\myboxA}
  \addtolength\mylenA{-\the\wd\myboxB}%
  \ifdim\wd\myboxB<\wd\myboxA%
    \rlap{\hskip 0.5\mylenA\usebox\myboxB}{\usebox\myboxA}%
  \else
    \hskip -0.5\mylenA\rlap{\usebox\myboxA}{\hskip 0.5\mylenA\usebox\myboxB}%
  \fi}
\newcommand{\lz}{\ell^2(\Z)}
\newcommand{\ldz}{D^{\Z}}
\newcommand{\ldzp}{D^{\Z_{\geq 0}}}
\newcommand{\lrz}{\mathbb{L}_R(\Z)}
\newcommand{\ldm}{D^{M}}
\newcommand{\oo}[1]{\chi^{(#1)}}
\newcommand{\seq}[2]{{#1^{(#2)}}}
\newcommand{\bseq}[2]{{\bm{#1}^{(#2)}}}
\newcommand{\rp}{\mathbb{RP}^1}
\DeclareMathOperator{\prob}{P}
\DeclareMathOperator{\len}{len}
\DeclareMathOperator{\N}{\mathbb{N}}
\DeclareMathOperator{\Z}{\mathbb{Z}}
\DeclareMathOperator{\R}{\mathbb{R}}
\DeclareMathOperator{\C}{\mathbb{C}}
\DeclareMathOperator{\tr}{tr}
\renewcommand{\i}{\mathbf{i}}
\newcommand{\ds}{\displaystyle}
\DeclareMathOperator{\diag}{diag}
\DeclareMathOperator{\BO}{\mathcal{O}}
\renewcommand{\epsilon}{\varepsilon}
\DeclareMathOperator{\dd}{d\!}
\let\emptyset\varnothing
\renewcommand{\i}{\mathbf{i}}
\DeclareMathOperator{\iL}{{\mathsf{L}}}
\DeclareMathOperator{\iR}{{\mathsf{R}}}
\DeclareMathOperator{\iLR}{{\mathsf{L},\mathsf{R}}}
\newcommand{\ip}[2]{\left\langle #1, #2 \right\rangle}
\newcommand{\mc}[1]{\mathcal{#1}}
\newcommand{\on}[1]{\operatorname{#1}}
\newcommand{\abs}[1]{\left\lvert#1\right\rvert}
\newcommand{\norm}[1]{\left\lVert#1\right\rVert}
\newcommandx{\silvio}[2][1=]{\todo[linecolor=blue,backgroundcolor=blue!25,bordercolor=blue,#1]{Silvio: #2}}
\newcommandx{\alex}[2][1=]{\todo[linecolor=red,backgroundcolor=red!25,bordercolor=red,#1]{Alex: #2}}
\title[Uniform Hyperbolicity in Aperiodic Systems of Resonators]{Uniform Hyperbolicity, Bandgaps and Edge Modes in Aperiodic Systems of Subwavelength Resonators}
\begin{document}
\author[H. Ammari]{Habib Ammari \,\orcidlink{0000-0001-7278-4877}}
\address{\parbox{\linewidth}{Habib Ammari\\
 ETH Z\"urich, Department of Mathematics, Rämistrasse 101, 8092 Z\"urich, Switzerland, \href{http://orcid.org/0000-0001-7278-4877}{orcid.org/0000-0001-7278-4877}}.}
 \email{habib.ammari@math.ethz.ch}
 \thanks{}

\author[C. Thalhammer]{Clemens Thalhammer}
\address{\parbox{\linewidth}{Clemens Thalhammer\\
 ETH Z\"urich, Department of Mathematics, Rämistrasse 101, 8092 Z\"urich, Switzerland.}}
 \email{clemens.thalhammer@sam.math.ethz.ch}

 \author[A. Uhlmann]{Alexander Uhlmann\,\orcidlink{0009-0002-0426-6407}}
  \address{\parbox{\linewidth}{Alexander Uhlmann\\
  ETH Z\"urich, Department of Mathematics, Rämistrasse 101, 8092 Z\"urich, Switzerland, \href{http://orcid.org/0009-0002-0426-6407}{orcid.org/0009-0002-0426-6407}}.}
 \email{alexander.uhlmann@sam.math.ethz.ch}

 \begin{abstract}
    We aim to characterise the spectral distributions of bi-infinite, semi-infinite, and finite aperiodic one-dimensional arrays of subwavelength resonators, constructed by sampling from a finite library of building blocks. By adopting the modern formalism of uniform hyperbolicity, we are able to strengthen and rigorously prove a Saxon-Hutner-type result, fully characterising the spectral gaps of the composite bi-infinite aperiodic system in terms of its constituent blocks. Crucial to this approach is a change of basis from transfer matrices to propagation matrices, allowing for a block-level characterisation. This approach also enables an explicit characterisation of edge-induced eigenmodes in the semi-infinite setting. Finally, we leverage finite section methods for Jacobi operators to extend our results to finite systems -- providing strict bounds for their spectra in terms of their constituent blocks.
 \end{abstract}

\maketitle

\noindent \textbf{Keywords.} Saxon-Hutner theorem, propagation matrices, pseudo-ergodicity, dominated splitting, uniform hyperbolicity, invariant cone criterion, Jacobi operators, block disordered systems, Johnson's theorem, Coburn's lemma, semi-infinite systems. \par

\bigskip

\noindent \textbf{AMS Subject classifications.}  
35J05, 
35P20, 
37D20, 
37A30, 
47B36. 
\\

\section{Introduction and motivation}
In this work, we study the spectral properties of aperiodically arranged one-dimensional arrays of subwavelength resonators. A subwavelength resonator is a bounded inclusion with a large contrast in material parameters compared to the background material. The high contrast enables the resonators to strongly interact with waves whose wavelength far exceeds the typical resonator size -- giving rise to subwavelength resonant eigenmodes, \emph{i.e.}, eigenmodes whose corresponding wavelength is much larger than the resonator size. The quintessential example of such resonant behaviour was first identified by Minnaert \cite{minnaert1933XVI}, who studied the exotic acoustic properties of air bubbles in water; see also \cite{ammari.fitzpatrick.ea2018Minnaert,leroy}. Subwavelength metamaterials have since emerged as a powerful paradigm for wave manipulation in both acoustic and electromagnetic settings \cite{cheben.halir.ea2018Subwavelength,cummer.christensen.ea2016Controlling,ma.sheng2016Acoustic,lemoult.kaina.ea2016Soda,lemoult.fink.ea2011Acoustic,ammari.millien.ea2017Mathematical,ammari.li.ea2023Mathematical,ammari2015superresolution}.

For our purposes, the central tool enabling the study of such subwavelength resonances is the \emph{capacitance matrix}. Namely, from \cite{feppon.cheng.ea2023Subwavelength, ammari.davies.ea2024Functional}, we know that for an array of $N$ subwavelength resonators there exist exactly $N$ subwavelength resonant frequencies, characterised in leading order by the eigenproblem for $VC$. Here, the $N\times N$ matrix $C$ is called the \emph{capacitance matrix}. It encodes the geometry of the array and is tridiagonal and symmetric in the one-dimensional setting. The matrix $V$ encodes the material parameters and is always positive-definite and diagonal.

We study aperiodic arrays of such subwavelength resonators, constructed by sampling from a library of finitely many distinct blocks (each potentially containing multiple individual resonators) and arranging them linearly. The resulting systems are strongly disordered in the sense that they are more than just a compact perturbation away from a periodic system, preventing the use of tools such as the truncated Floquet-Bloch transform \cite{ammari.barandun.ea2025Truncated} suitable for slightly aperiodic settings. As described in \cite{disorder, ammari.barandun.ea2025Universal}, such arrays display a rich variety of localisation behaviours and spectral distributions which may nonetheless be characterised by studying the individual resonator block properties. The approach of sampling from a finitely valued library of building blocks was also explored, respectively, in \cite{damanik.sims.ea2004Localization} and \cite{avila.bochi.ea2010Uniformly} for the Schrödinger operator and in the generic cocycle setting.

As the number $M$ of blocks sampled tends to infinity, the capacitance formulation strongly converges to a Jacobi operator (\emph{i.e.}, tridiagonal and self-adjoint) on the lattice $\ell^2(\Z)$. Sampling the blocks independently and identically yields an ergodic sequence of resonators and a \emph{metrically transitive} random Jacobi operator \cite{pastur1992Spectra}, which can be seen as a generalisation of the widely studied one-dimensional Schrödinger operators with random potentials \cite{aizenman2015Random, damanik.fillman2022OneDimensional}. However, the main results of this paper do not rely on the specific sampling of the block sequence but rather on a type of genericity condition called \emph{pseudo-ergodicity}, ensuring that the sequence contains every possible finite subsequence. 

The central objective of this paper is to rigorously characterise the spectral gaps (\emph{i.e.}, the regions devoid of eigenvalues) of the composite bi-infinite, semi-infinite, and finite systems in terms of their constituent blocks -- known in the field of condensed matter physics as a \emph{Saxon-Hutner-type} result. \cref{thm:saxonhutner} formulates such a result in the modern language of \emph{uniformly hyperbolic} cocycles and provides the necessary and sufficient conditions for the existence of spectral gaps. 

Uniform hyperbolicity is a powerful concept that originated from taking a dynamical system point of view to study the spectra of one-dimensional Schrödinger operators. Essentially, a random family of determinant one matrices $P(j) \in \on{SL}(2,\R)$ is \emph{uniformly hyperbolic} if they possess uniformly decaying or growing shared invariant directions. For a one-dimensional lattice Schrödinger Hamiltonian $\mc H:\ell^2(\Z)\to \ell^2(\Z)$ with the associated spectral equation $(\mc H-\lambda)\bm v = 0$, the $i$\textsuperscript{th} transfer matrix $T^\lambda(i)\in \on{SL}(2,\R)$ transfers forward the tuple $(\bseq{v}{i}, \bseq{v}{i-1})^\top$ by one entry, $(\bseq{v}{i+1}, \bseq{v}{i})^\top = T^\lambda(i)(\bseq{v}{i}, \bseq{v}{i-1})^\top$, where $\bseq{v}{i}$ denotes the $i$\textsuperscript{th} entry of $\bm v$ and the superscript $\top$ denotes the transpose. 
In a result dating back to \cite{johnson1986Exponential}, the uniform hyperbolicity of the family of transfer matrices $T^\lambda$ at some frequency $\lambda\in \R$ was proven to be equivalent to the fact that the corresponding Schrödinger operator exhibiting a gap, $\lambda\notin \sigma(\mc H)$, at that frequency, linking the transfer matrix dynamics to the spectral characteristics. 
This approach has led to significant advances in the study of random and quasiperiodic one-dimensional operators \cite{jitomirskaya1999MetalInsulator, bourgain.goldstein2000Nonperturbative, avila.bochi.ea2010Uniformly, zhang2020Uniform}.

For Jacobi operators, the appropriate generalisation for the uniform hyperbolicity of the transfer matrices turns out to be the existence of a \emph{dominated splitting} \cite{mane1978Contributions, marx2014Dominated, alkorn.zhang2022correspondence}. However, instead of working directly with the transfer matrices, it will prove very fruitful to first perform a change-of-basis to the cohomologous family of \emph{propagation matrices}. Physically, for a given resonator the propagation matrix $P^\lambda(i)$ propagates the subwavelength resonant solution $u$, in terms of its value and derivative at the left exterior edge of the resonator $(u(x_i^{\iL}), u'(x_{i}^{\iL}))^\top$, across the $i$\textsuperscript{th} resonator yielding 
the identity $$(u(x_{i+1}^{\iL}), u'(x_{i+1}^{\iL}))^\top = P^\lambda(i)(u(x_i^{\iL}), u'(x_{i}^{\iL}))^\top.$$ The key advantage of this approach is that the $i$\textsuperscript{th} propagation matrix only depends on the properties of the $i$\textsuperscript{th} resonator, making many analytical approaches significantly more tractable and natural. In particular, this will allow for the collection of individual resonator propagation matrices into block propagation matrices, enabling us to state \cref{thm:saxonhutner} in terms of block properties. Furthermore, for the semi-infinite Jacobi operator setting, this approach will allow us to explicitly characterise the additional eigenmodes due to edge effects, as stated in \cref{thm:semiinfinite-johnson}. 

Finally, we will adapt some results on the finite section method of random Jacobi operators to our setting \cite{lindner2012finite, chandler-wilde.lindner2016Coburns}. In particular, we will prove a physical version of Coburn's lemma. Consequently, we can show that the spectra of finite resonator block arrangements are fully determined by the spectra of semi-infinite arrangements. This allows us to, up to the possibility of edge modes, extend the spectral characterisation of block disordered systems in terms of their block propagation matrices to the finite case.

Our main results in \cref{thm:saxonhutner,thm:semiinfinite-johnson,thm:finite_spectra} are illustrated in \cref{fig:saxonhutner_edgemode}, showing that, up to some explicitly characterisable edge modes, the spectral distribution of the composite finite-block disordered systems is completely determined by the constituent blocks.

Based on the symmetrisation approach introduced in \cite{ammari2025competingedgebulklocalisation}, \cref{thm:saxonhutner,thm:semiinfinite-johnson,thm:finite_spectra} can be extended to provide a rigorous derivation of the limiting spectral distribution as the number of blocks tends to infinity of one-dimensional block disordered subwavelength resonator systems that are subject to non-reciprocal damping, induced by an imaginary gauge potential. This completely elucidates the mechanisms of localisation in disordered non-reciprocal systems first studied in \cite{stabilityskin,ammariMathematicalFoundationsNonHermitian2024}. 

Furthermore, in the language of block propagation matrix cocycles, a number of seminal results from the study of random Schrödinger operators may naturally be extended to our limiting bi-infinite Jacobi operator. In particular, the existence and positivity of the Lyapunov exponent \cite{furstenberg1963Noncommuting}, as well as the corresponding absence of an absolutely continuous spectrum \cite{kotani1984Ljapunov}.

The paper is organised as follows. In \cref{sec:setting}, we introduce one-dimensional chains of subwavelength resonators, recall the main results on the capacitance matrix approximation, and describe the block disordered construction. In \cref{sec:jacobi}, we introduce the limiting bi-infinite Jacobi operators, as well as their transfer matrix cocycles. We then introduce the dominated splitting version of Johnson's theorem for Jacobi operators and then proceed to link the existence of a dominated splitting for the transfer matrix cocycle to the uniform hyperbolicity of the block propagation matrix cocycle via a series of lemmas. This ultimately allows us to formulate \cref{thm:saxonhutner}. The flow of reductions in \cref{sec:jacobi} is summarised in \cref{eq:saxon-hutner_schematic}. In \cref{sec:semiinf_and_finite}, our aim is to characterise the spectra of semi-infinite and finite systems. In the semi-infinite case, we find the additional modes due to edge effects by studying the block propagation matrix cocycles, culminating in \cref{thm:semiinfinite-johnson}. Finally, for finite systems, we leverage the theory of finite sections of Jacobi operators to bound the spectra of finite block disordered arrays by the appropriate semi-infinite limits, yielding \cref{thm:finite_spectra}. 

\begin{figure}
    \centering
    \includegraphics[width=0.9\linewidth]{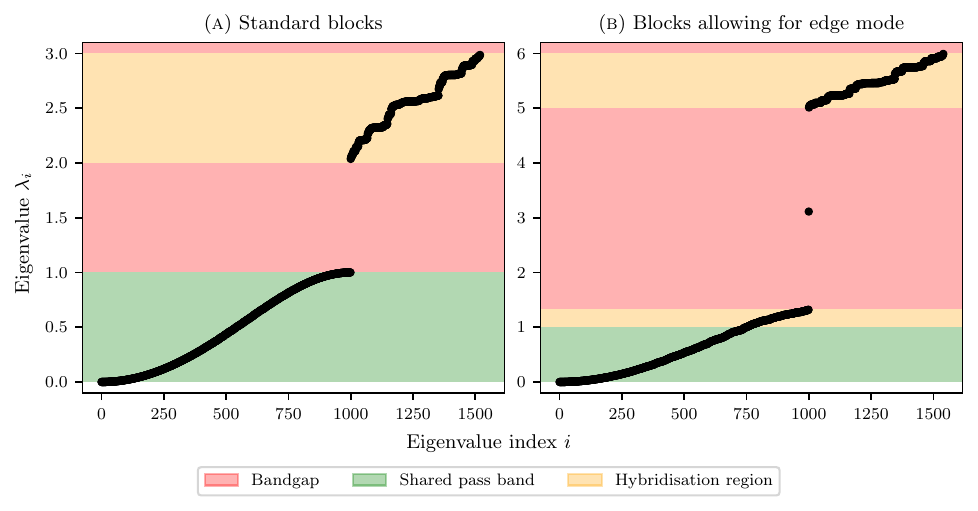}
    \caption{Illustration of \cref{thm:saxonhutner,thm:semiinfinite-johnson,thm:finite_spectra}. We plot the eigenvalues of finite block disordered systems with $M=1000$ blocks sampled \emph{i.i.d.} and with equal probability from a selection of two blocks each ($D=2$). In subfigure $(\textsc{a})$, the blocks chosen are as in \cref{ex:standard_blocks}, while in subfigure $(\textsc{b})$ they are chosen to allow for the existence of edge modes. \enquote{Bandgap}, \enquote{shared pass band} and \enquote{hybridisation region} refer to the regions where both, neither, or one of the two block propagation matrices are hyperbolic, respectively (see \cite{ammari.barandun.ea2025Universal}).
    In both cases, the bulk of the spectrum is arranged in the regions where at least one of the block propagation matrices fails to be hyperbolic. Because in both subfigures the blocks are chosen such that the source-sink condition is always fulfilled, \cref{thm:saxonhutner} ensures that, in the infinite case, the spectral gap is exactly given by the bandgap. In line with \cref{thm:semiinfinite-johnson,thm:finite_spectra}, we can observe in the figure that this continues to hold in the finite case, apart from the possibility of edge modes.
    }
    \label{fig:saxonhutner_edgemode}
\end{figure}

\section{Setting and tools}\label{sec:setting}
We begin by introducing the notions of \emph{block disordered} systems of \emph{subwavelength resonators} and the main tools used to study them, including the capacitance matrix approximation and the propagation matrix formalism. 
To that end, this section will contain some results from \cite{disorder}, and we refer to this work for further details. For the remainder of this work, for any vector $\bm v\in \C^n$ or $\bm v \in \ell^2(\Z)$, $\bseq{v}{i}$ shall denote its $i$\textsuperscript{th} component and $\norm{\bm v} = \big(\sum_{i} \abs{\bseq{v}{i}}^2\big)^{1/2}$ the $\ell^2$-norm. 

\subsection{Subwavelength resonators}\label{sec:subwavelength setting}
The central systems of interest in this work are one-dimensional chains of subwavelength resonators (see \cite{ammari.davies.ea2024Functional,feppon.cheng.ea2023Subwavelength, cbms,disorder}). That is, we consider an array $\mc D=\bigsqcup_{i=1}^N (x_i^{\iL}, x_i^{\iR})$ consisting of $N$ resonators $D_i = (x_i^{\iL}, x_i^{\iR})$. We will denote $\ell_i = x_i^{\iR} - x_i^{\iL}$ for $1\leq i \leq N$ the \emph{length} of the $i$\textsuperscript{th} resonator and $s_i = x_{i+1}^{\iL}-x_{i}^{\iR}$ for $1\leq i \leq N-1$ the \emph{spacing} between the $i$\textsuperscript{th} and $(i+1)$\textsuperscript{th} resonators. 

The resonators are distinct from the background medium due to differing wave speeds and densities that are given by
\begin{equation}
    v(x) = \begin{cases}
        v_i &\text{if } x\in D_i,\\
        v &\text{if } x\in \mathbb{R}\setminus \mc D, 
    \end{cases} \quad
    \rho(x) = \begin{cases}
        \rho_b &\text{if } x\in D_i,\\
        \rho &\text{if } x\in \mathbb{R}\setminus \mc D.
    \end{cases}
\end{equation}
After further imposing an outward radiation condition, we obtain the following coupled system of Helmholtz equations for the resonant modes (see \cite[(1.6)]{feppon.cheng.ea2023Subwavelength}). The resonance problem is to find $\omega$ such that there is a nontrivial solution $u$ to

\begin{equation}
\label{eq:waveeq}
    \begin{dcases}
\ds \frac{\mathrm{d}^2}{\mathrm{d}x^2} u + \frac{\omega^2}{v^2} u = 0,  &\text{in }  
\mathbb{R}\setminus \mc D,\\
\ds \frac{\mathrm{d}^2}{\mathrm{d}x^2} u + \frac{\omega^2}{v_i^2} u = 0,  & \text{in }   D_i, i=1,\ldots, N,\\
 u\vert_{\iR}(x^{\iLR}_i) - u\vert_{\iL}(x^{\iLR}_i) = 0,                                                                 & \text{for } i=1, \ldots, N,\\
        \left.\frac{\dd u}{\dd x}\right\vert_{\iR}(x^{\iL}_i) - \frac{\rho_b}{\rho} \left.\frac{\dd u}{\dd x}\right\vert_{\iL}(x^{\iL}_i) = 0, &   \text{for } i=1, \ldots, N,          \\
        \left.\frac{\dd u}{\dd x}\right\vert_{\iL}(x^{\iR}_i) - \frac{\rho_b}{\rho} \left.\frac{\dd u}{\dd x}\right\vert_{\iR}(x^{\iR}_i) = 0, &   \text{for } i=1, \ldots, N,          \\
\bigg( \frac{\mathrm{d}}{\mathrm{d} |x|} - \i \frac{\omega}{v} \bigg) u(x) =0, & \text{for } |x| \text{ large enough,} 
\end{dcases}
\end{equation}
where for a function $w$ we denote by 
\begin{align*}
    w\vert_{\iL}(x) \coloneqq \lim_{s \downarrow 0} w(x-s) \quad \mbox{and} \quad  w\vert_{\iR}(x) \coloneqq \lim_{s \downarrow 0} w(x+s)
\end{align*}
if the limits exist. 

We are interested in the \emph{subwavelength high-contrast regime}. Namely, we denote by $\delta = \rho_b / \rho$ the \emph{contrast parameter} and consider the resonant frequencies  $\omega(\delta)$ with non-negative real part that satisfy
\begin{equation*}
    \omega(\delta) \to 0 \quad \text{ as } \quad \delta \to 0.
\end{equation*}

As shown in \cite{feppon.cheng.ea2023Subwavelength}, in this regime there exist exactly $N$ subwavelength resonant frequencies, characterised in leading order by a \emph{material matrix} $V$ that is diagonal and a tridiagonal \emph{capacitance matrix} $C$, respectively defined as
\begin{gather} 
    V = \diag \left(\frac{v_1^2}{\ell_1}, \dots, \frac{v_N^2}{\ell_N}\right) \in \mathbb{R}^{N\times N}, \label{eq:matmat}
 \end{gather}
 and    
    \begin{gather}
    C = \left(\begin{array}{cccccc}
         \frac{1}{s_1}& -\frac{1}{s_1} \\
         -\frac{1}{s_1}& \frac{1}{s_1}+\frac{1}{s_2}& -\frac{1}{s_2} \\
         & -\frac{1}{s_2} & \frac{1}{s_2}+\frac{1}{s_3}& -\frac{1}{s_3}\\
         &&\ddots&\ddots&\ddots \\
         &&&-\frac{1}{s_{N-2}}& \frac{1}{s_{N-2}}+\frac{1}{s_{N-1}}& -\frac{1}{s_{N-1}}\\
         &&&&-\frac{1}{s_{N-1}}&\frac{1}{s_{N-1}}
    \end{array}\right) \in \mathbb{R}^{N\times N}. \label{eq:capmat}
\end{gather}
The following results are from \cite{feppon.cheng.ea2023Subwavelength}. 
\begin{theorem}\label{thm:capapprox}
    Consider a system consisting of $N$ subwavelength resonators in $\mathbb{R}$. Then, there exist exactly $N$ subwavelength resonant frequencies $\omega(\delta)$ that satisfy $\omega(\delta)\to 0$ as $\delta\to 0$. Furthermore, the $N$ resonant frequencies are given by 
    \[
        \omega_i(\delta) = \sqrt{\delta\lambda_i} + \mc O (\delta),
    \]
    where $(\lambda_i)_{1\leq i\leq N}$ are the eigenvalues of the eigenvalue problem
    \[
        VC\bm u_i = \lambda_i \bm u_i.
    \]
    Furthermore, the corresponding resonant modes $u_i(x)$ are approximately given by 
    \[
        u_i(x) = \sum_{j=1}^N \bm u_i^{(j)}V_j(x) + \mc O (\delta),
    \]
    where $\bm u_i^{(j)}$ is the $j$\textsuperscript{th} entry of the vector $\bm u_i$ and $V_j(x)$, $j=1,\ldots, N$, are defined as the solution to
    \begin{align*}
        \begin{dcases}
          -\frac{\dd{^2}}{\dd x^2} V_j(x) =0, & x\in\mathbb{R}\setminus \mc D, \\
          V_j(x)=\delta_{ij},              & x\in D_i, \ i=1,\ldots, N,                         \\
          V_j(x) = \BO(1),                  & \mbox{as } \abs{x}\to\infty.
        \end{dcases}
        \label{eq: def V_i}
    \end{align*}
\end{theorem}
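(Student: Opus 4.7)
The plan is to combine a matched asymptotic expansion in $\sqrt{\delta}$ with an operator-theoretic counting argument. I would first analyse the leading-order behaviour by taking $\omega,\delta\to 0$ jointly at the scale $\omega\sim\sqrt{\delta}$. Inside each $D_i$, the equation $u'' + (\omega^2/v_i^2)u = 0$ degenerates to $u''=0$ at leading order, while the jump conditions $u\vert_{\iR}'(x_i^{\iL}) = \delta\, u\vert_{\iL}'(x_i^{\iL})$ and $u\vert_{\iL}'(x_i^{\iR}) = \delta\, u\vert_{\iR}'(x_i^{\iR})$ force the interior derivatives at the endpoints to be $\BO(\delta)$. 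Consequently $u$ is constant on each $D_i$ to leading order; call this constant $c_i$. In the exterior, $u''=0$ together with the radiation condition (which as $\omega\to 0$ reduces to boundedness at infinity) forces $u$ to be constant on the two unbounded components and affine on each gap, all values being determined by continuity from the $c_i$. Comparing with the defining system for $V_j$ shows that this leading-order profile is precisely $\sum_j c_j V_j(x)$.

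Next, I would derive the self-consistency condition linking $(c_i)$ to $\omega$. Integrating the interior equation over $D_i$ yields
\[
u\vert_{\iL}'(x_i^{\iR}) - u\vert_{\iR}'(x_i^{\iL}) \;=\; -\frac{\omega^2}{v_i^2}\int_{D_i} u\, \dd x \;\sim\; -\frac{\omega^2}{v_i^2}\, c_i\, \ell_i.
\]
The jump conditions rewrite the left-hand side as $\delta$ times the exterior derivative jump $u\vert_{\iR}'(x_i^{\iR}) - u\vert_{\iL}'(x_i^{\iL})$, which a direct computation from the piecewise-affine exterior profile identifies with $-(C\bm c)_i$, where $C$ is the tridiagonal matrix \eqref{eq:capmat}. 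Equating and invoking \eqref{eq:matmat} yields the leading-order eigenvalue problem $VC\,\bm c = \lambda\,\bm c$ with $\omega^2 = \delta\lambda$, producing the $N$ candidates $\omega_i(\delta) = \sqrt{\delta\lambda_i}$ and the expected modal shapes.

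To upgrade this formal expansion to the claimed asymptotics with $\BO(\delta)$ error, I would recast \eqref{eq:waveeq} as a nonlinear eigenvalue problem $A(\omega,\delta)\phi = 0$ for a finite-rank analytic operator pencil, obtained either through layer potentials on $\partial \mc D$ or, more directly, via an ODE shooting matrix that parametrises the bounded exterior solutions and matches them against sinusoids in each $D_i$ through the jump conditions. An explicit two-parameter expansion shows that the leading symbol of $A$ is, up to an invertible prefactor, $\omega^2 I - \delta VC$, so that $\det A(\omega,\delta) \approx \prod_i(\omega^2 - \delta\lambda_i)$. A Gohberg-Sigal-type generalised Rouch\'e argument applied in the disk $\{|\omega|\leq K\sqrt{\delta}\}$ then extracts exactly $N$ zeros, with perturbative corrections $\omega_i(\delta) - \sqrt{\delta\lambda_i} = \BO(\delta)$; the modal error estimate follows from standard finite-dimensional perturbation theory applied to $A(\omega_i(\delta),\delta)$.

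The main obstacle is this final counting step, specifically the joint analyticity and uniform-in-$\delta$ control of $A(\omega,\delta)$: one must design the pencil so that both its leading and subleading coefficients stay bounded, and its leading part stays invertible, uniformly as $\delta\to 0$, and ensure that no zero drifts in or out of the disk $|\omega|\lesssim\sqrt{\delta}$. Once this bookkeeping is in place, the separation of scales $\sqrt{\delta}\gg\delta$ automatically isolates the subwavelength branch from the remainder of the spectrum and delivers the stated expansions.
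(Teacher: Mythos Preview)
The paper does not prove this theorem; it is quoted verbatim as a known result from \cite{feppon.cheng.ea2023Subwavelength}, so there is no in-paper proof to compare against. Your outline is nonetheless the correct one and matches the methodology of that reference: the formal matched-asymptotic computation producing the $VC$ eigenproblem, followed by a characteristic-value perturbation argument of Gohberg--Sigal type to count roots and justify the $\mathcal{O}(\delta)$ remainders.

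One point worth tightening: in one dimension the layer-potential route is overkill, and the cited reference indeed builds the finite-rank pencil directly by writing the general solution piecewise (sinusoids inside $D_i$, outgoing/bounded pieces outside) and assembling the transmission conditions into a $2N\times 2N$ matrix whose determinant is the characteristic function. Your ``ODE shooting matrix'' alternative is exactly this, and it is the cleaner choice here because analyticity in $(\omega,\delta)$ and the expansion $A(\omega,\delta)=A_0+\omega^2 A_1+\delta A_2+\cdots$ are then immediate from the explicit trigonometric entries. The only substantive bookkeeping you flag---ensuring no zeros cross $\lvert\omega\rvert\sim K\sqrt{\delta}$---is handled in the reference by showing $A_0$ has a kernel of dimension exactly $N$ (the constants on each resonator) and that the reduced problem on that kernel is precisely $\omega^2 V^{-1}-\delta C$, so Rouch\'e applies on an annulus separating $\sqrt{\delta}$ from both $0$ and $O(1)$.
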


We can thus fully understand the subwavelength resonant modes by studying the eigenvalue problem for the \emph{generalised capacitance matrix} $\mc C \coloneqq VC$ and will often use $\lambda_i$ and $\omega_i$ interchangeably, where it is clear from context. Note that the eigenvalue problem for $\mc C$ plays an analogous role as the tight-binding model for Schr\"odinger operators. Therefore, we refer to it as the discrete model for the Helmholtz operator in \eqref{eq:waveeq}.

\subsection{Block disordered systems}
Since we are interested in the band structure of disordered systems, where local translation invariance is broken, we must develop ways to construct and describe such systems. As discussed in \cite{disorder}, one such way is to consider a finite number of distinct \enquote{building blocks} consisting of (possibly multiple) subwavelength resonators. Later, constructing disordered resonator arrays from simple building blocks will enable us to characterise the ``unusual'' spectral properties of the array by looking at the building blocks.

A subwavelength block disordered system is a chain of subwavelength resonators consisting of $M$ blocks of resonators $B_{\oo{j}}$ sampled accordingly to a sequence $$\chi \in \{1,\dots , D\}^M \eqqcolon \ldm$$ from a selection $B_1, \dots, B_D$ of $D$ distinct resonator blocks, arranged on a line. Each resonator block $B_j$ is characterised by a sequence of tuples $(v_1,\ell_1,s_1), \dots, (v_{\len(B_j)},\ell_{\len(B_j)},s_{\len(B_j)})$ that denote the wave speed, length, and spacing of each constituent resonator. Here, $\len(B_j)$ denotes the total number of resonators contained within the block $B_j$.

We will often abuse notation and write $\mc D = \bigcup_{j=1}^M B_{\oo{j}}$ to denote the resonator array constructed by sampling the blocks $B_1, \dots B_D$  according to $\chi\in \ldm$ and then arranging them in a line. Having thus constructed an array of subwavelength resonators, we can alternatively write $\mc D = \bigcup_{i=1}^N D_i$ in line with \cref{sec:subwavelength setting}.
Note that as $M$ denotes the total number of sampled blocks and $N$ the total number of resonators, we always have $M\leq N$.

\begin{example}\label{ex:standard_blocks}
    Simple but nontrivial disordered systems can be obtained by sampling from a set of just two blocks $B_1$ and $B_2$. A canonical example of this, which we will consider extensively in this work, is where $B_1=B_{single}$ denotes a single resonator block with $\len(B_{single}) = 1$ and $\ell_1(B_{single}) = s_1(B_{single})  = 2$ while $B_2=B_{dimer}$ is a dimer resonator block with $\len(B_{dimer}) = 2$ and $\ell_1(B_{dimer}) = \ell_2(B_{dimer})=1$ and $s_1(B_{dimer})=1, s_2(B_{dimer})=2$. We choose all wave speeds to be equal to $1$.
    An example of a single realisation corresponding to the sequence $\chi = (1,2,1)$ is illustrated in \cref{fig:disorderedsketch}.
\end{example}

Notably, when repeated periodically, both the single resonator and the dimer block share a lower band $[0,1]$. However, the dimer block possesses an additional upper band $[2,3]$, and the distinct properties of these building blocks affect the resonant frequencies of the resulting array. The spectrum of a randomly arranged array of these two blocks can be seen \cref{fig:saxonhutner_edgemode}($\textsc{a}$). As can be observed, both the shared lower band $[0,1]$ and the dimer-only upper band $[2,3]$ are populated by eigenvalues, while there are no eigenvalues anywhere else. In the remainder of this paper, this observation will be made rigorous as a consequence of \cref{thm:saxonhutner,thm:semiinfinite-johnson,thm:finite_spectra}. 

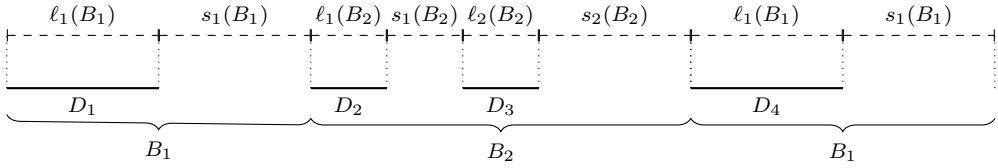
\begin{figure}
    \centering
    \begin{tikzpicture}[scale=1.0, every node/.style={font=\footnotesize}]

    \draw[thick] (0,0) -- (2,0);
    \node[below] at (1,0) {$D_1$};
    
    \draw[
    decorate,
    decoration={brace, mirror, amplitude=5pt}
    ] (0,-0.35) -- (4,-0.4)
    node[midway, yshift=-0.45cm] {$B_1$};
    
    \draw[-, dotted] (0,0) -- (0,0.7);
    \draw[-, dotted] (2,0) -- (2,0.7);
    \draw[|-|, dashed] (0,0.7) -- node[above]{$\ell_1(B_1)$} (2,0.7);
    
    \draw[-, dotted] (4,0) -- (4,0.7);
    \draw[|-|, dashed] (2,0.7) -- node[above]{$s_1(B_1)$} (4,0.7);
    
    \draw[thick] (4,0) -- (5,0);
    \node[below] at (4.5,0) {$D_2$};
    \draw[thick] (6,0) -- (7,0);
    \node[below] at (6.5,0) {$D_3$};
    
    \draw[
    decorate,
    decoration={brace, mirror, amplitude=5pt}
    ] (4,-0.4) -- (9,-0.4)
    node[midway, yshift=-0.45cm] {$B_2$};
    
    \draw[-, dotted] (5,0) -- (5,0.7);
    \draw[-, dotted] (6,0) -- (6,0.7);
    \draw[-, dotted] (7,0) -- (7,0.7);
    \draw[-, dotted] (9,0) -- (9,0.7);
    \draw[|-|, dashed] (4,0.7) -- node[above]{$\ell_1(B_2)$} (5,0.7);
    \draw[|-|, dashed] (5,0.7) -- node[above]{$s_1(B_2)$} (6,0.7);
    \draw[|-|, dashed] (6,0.7) -- node[above]{$\ell_2(B_2)$} (7,0.7);
    \draw[|-|, dashed] (7,0.7) -- node[above]{$s_2(B_2)$} (9,0.7);

    \begin{scope}[shift={(9,0)}]
    \draw[thick] (0,0) -- (2,0);
    \node[below] at (1,0) {$D_4$};
    \draw[
    decorate,
    decoration={brace, mirror, amplitude=5pt}
    ] (0,-0.4) -- (4,-0.4)
    node[midway, yshift=-0.45cm] {$B_1$};
    
    \draw[-, dotted] (0,0) -- (0,0.7);
    \draw[-, dotted] (2,0) -- (2,0.7);
    \draw[|-|, dashed] (0,0.7) -- node[above]{$\ell_1(B_1)$} (2,0.7);
    
    \draw[-, dotted] (4,0) -- (4,0.7);
    \draw[|-|, dashed] (2,0.7) -- node[above]{$s_1(B_1)$} (4,0.7);
    \end{scope}
\end{tikzpicture}
    \caption{A block disordered system consisting of two single resonator blocks $B_1$ and a dimer block $B_2$ arranged in a chain given by the sequence $\chi= (1,2,1)$. It thus consists of $M=3$ blocks and $N=4$ resonators $D_1,\dots ,D_4$ in total.}
    \label{fig:disorderedsketch}
\end{figure}
\subsection{Block sequences}\label{ssec:sampling}
Apart from the makeup of the individual blocks, the crucial factor determining the behaviour of block disordered systems is how these blocks are arranged. To enable later analysis, we will introduce the case of block sequences sampled independently and identically in detail. We consider a block distribution with sampling probabilities $p_1, \dots , p_D$, such that $\sum_{d=1}^D p_d=1$ and $p_d>0$ for all $d=1,\dots,D$.

We can make this sampling formal in a probability-theoretic sense.
\begin{definition}[Independently and identically sampled discrete sequences]
    We slightly overload notation to denote the discrete sequence space $D^{\Z} \coloneqq \{1,\dots, D\}^{\Z}$ and construct the probability space of \emph{independently and identically sampled discrete sequences} $(\ldz, \mc F, \prob)$ as follows. Let 
    \[
        C(i_1,\dots, i_n, X_1\times\dots \times X_n)\coloneqq \{\chi \in \ldz \mid (\oo{i_1},\dots, \oo{i_n}) \in X_1\times\dots \times X_n\}
    \] be the \emph{cylinder set} for $X_1,\dots, X_n \subset \{1,\dots, D\}$ and $i_1,\dots, i_n\in \Z$. On these cylinder sets, we define the probability measure
    \[
        \prob (C(i_1,\dots, i_n, X_1\times\dots \times X_n)) \coloneqq F(X_1)\cdot\dots\cdot F(X_n),
    \]
    where $F(X) = \sum_{j=1}^D p_j \mathbbm{1}_j(X)$ is the block probability distribution.
    Extending $\prob$ to the minimal $\sigma$-algebra containing all cylinder sets $\mc F$, we obtain the probability space $(\ldz, \mc F, \prob)$.

    Finally, we endow $\ldz = \{1,\dots D\}^{\Z}$ with the topology obtained by taking the countable product of the discrete topologies on $\{1,\dots D\}$. In this topology,  $\ldz$ is a compact metric space.
\end{definition}

We can extend this definition to finite sequences on $\ldm\coloneqq \{1,\dots, D\}^{M}$ by employing the \emph{truncation projection} $\mc P_M:\ldz \to \ldm$ to push forward the probability space $(\ldz, \mc F, \prob)$ to $(\ldm, \mc P_M^*\mc F, \mc P_M^*\prob)$. 

A key property of the family of \emph{i.i.d.} sequences is its metric transitivity under the group of shifts.
\begin{definition}[Metrically transitive group]
    Let $(\Omega, \mc F, \prob)$ be a probability space and $\mc S$ a topological group of automorphisms on $\Omega$. Then, $\mc S$ is called \emph{metrically transitive} if 
    \begin{enumerate}[label=(\roman*)]
        \item $\mc S$ is \emph{stochastically continuous}, \emph{i.e.}, $\lim_{S_1\to S} \prob(S_1X\triangle SX) = 0$ for any $S\in \mc S$ and $X\in \mc F$, where $\triangle$ denotes the symmetric difference;
        \item $\mc S$ is measure-preserving, \emph{i.e.}, $\prob(SX) = \prob(X)$ for any $S\in \mc S$ and $X\in \mc F$;
        \item any $\mc S$-invariant set $X\in \mc F$ has either probability $1$ or $0$.
    \end{enumerate}
    Here, $\mc S$-invariance of a set $X\in \mc F$ is understood to mean $SX = X$ for all $S\in \mc S$.
\end{definition}

The following follows immediately from the cylinder set construction.
\begin{lemma}
    The group of shifts $\mc S = \{S^j \mid j\in \Z\}$ where $(S^j\chi)(i) = \chi(i+j)$ is a metrically transitive group of automorphisms on $(\ldz, \mc F_D, \prob_D)$.
\end{lemma}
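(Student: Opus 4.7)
The plan is to verify the three defining properties of a metrically transitive group in turn, with the bulk of the work concentrated on ergodicity (iii). The first two conditions follow almost directly from the construction of $(\ldz, \mc F, \prob)$ as a Bernoulli product, while (iii) amounts to showing that the Bernoulli shift on $D$ symbols is ergodic -- a classical fact I would re-derive via strong mixing on cylinder sets.

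For stochastic continuity, since $\mc S\cong\Z$ is naturally a discrete topological group, the relation $S_1\to S$ forces $S_1=S$ eventually, making the condition $\prob(S_1X\triangle SX)=0$ trivially true. For measure preservation, on any cylinder $C=C(i_1,\dots,i_n, X_1\times\dots\times X_n)$ I would compute directly that $S^j C = C(i_1-j,\dots, i_n-j, X_1\times \dots\times X_n)$, and both sides have measure $F(X_1)\cdots F(X_n)$ by the definition of the product probability. Since the cylinder sets form a $\pi$-system generating $\mc F$, Carathéodory's uniqueness then upgrades this identity to $\prob(S^jX)=\prob(X)$ for all $X\in\mc F$.

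The central step is the zero-one law for invariant sets. I would first prove the strong-mixing identity
\[
    \prob(S^{-j}A\cap B) = \prob(A)\prob(B) \quad \text{for all $|j|$ sufficiently large,}
\]
whenever $A,B$ are cylinders with finite index supports $I_A,I_B\subset\Z$; this is immediate once $(I_A-j)\cap I_B=\emptyset$, since the two events then depend on disjoint coordinates and are independent by the product structure. Given a shift-invariant set $X\in\mc F$ and $\epsilon>0$, I would approximate $X$ by a cylinder set $A$ with $\prob(X\triangle A)<\epsilon$ (possible since the algebra of cylinder sets generates $\mc F$), and combine invariance $S^{-j}X=X$ with the triangle-type estimate
\[
    \bigl|\prob(S^{-j}X\cap X)-\prob(S^{-j}A\cap A)\bigr|\leq 2\prob(X\triangle A),
\]
together with the mixing identity for large $|j|$ and $|\prob(A)^2-\prob(X)^2|\leq 2\epsilon$, to conclude $|\prob(X)-\prob(X)^2|\leq 4\epsilon$. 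Letting $\epsilon\to 0$ yields $\prob(X)\in\{0,1\}$.

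The only technical point that warrants attention is the $L^1$-approximation of an arbitrary measurable invariant set by a cylinder set; this is standard once one observes that finite disjoint unions of cylinder sets form an algebra generating $\mc F$, so the monotone class theorem (or the density of simple functions over this algebra in $L^1(\prob)$) delivers the required approximation. No further conceptual obstacle arises, as every remaining step is a direct consequence of the product structure of $\prob$.
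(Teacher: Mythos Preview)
Your argument is correct and complete: stochastic continuity is trivial for the discrete group $\Z$, measure preservation is verified on the generating $\pi$-system of cylinders and extended by uniqueness, and ergodicity follows from the standard strong-mixing argument for Bernoulli shifts via approximation by cylinder sets. The only cosmetic point is that a general element of the algebra generating $\mc F$ is a \emph{finite union} of cylinders rather than a single cylinder, but your mixing identity extends to such sets by bilinearity, so the approximation step goes through unchanged.

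The paper, by contrast, does not actually give a proof: it simply asserts that the lemma ``follows immediately from the cylinder set construction'' and moves on. Your write-up therefore supplies precisely the details the paper suppresses, using the classical route (strong mixing $\Rightarrow$ ergodicity) that one would expect for an i.i.d.\ product measure. There is no substantive difference in strategy to compare --- you have filled in what the paper left implicit.
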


Another crucial notion is the idea of \emph{pseudo-ergodicity} of an element $\chi \in \ldz$.
\begin{definition}\label{def:pseudoergodic}
    Let $\mathbb{I} \in \{\mathbb{Z},\mathbb{N}\}$. We say that a sequence $\chi \in D^{\mathbb{I}}$ is \emph{pseudo-ergodic} if any finite sequence $\chi' \in D^M$, where $M \in \mathbb{N}$, is contained in $\chi$.
    
    More precisely, for any $\chi' \in D^M$, there must exist $j\in \Z$ such that $\mc P_M S^j \chi = \chi'$. We will denote this by $\chi'\prec \chi$.
\end{definition}
Pseudo-ergodicity of $\chi$ is equivalent to the orbit $\operatorname{Orb}(\chi):=\{S^j\chi \mid j\in \Z\}$ being dense in $\ldz$ and is a kind of genericity condition in the sense that it occurs with probability one for \emph{i.i.d.} sequences.

\begin{lemma}
    Any sequence $\chi\in \ldz$ is pseudo-ergodic with probability one.
\end{lemma}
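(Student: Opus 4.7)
The plan is to fix a finite pattern $\chi' \in D^M$ and show, via the second Borel-Cantelli lemma, that $\chi'$ occurs as a contiguous substring of $\chi$ with probability one. Since the family of all finite patterns $\bigsqcup_{M\in \N} D^M$ is countable, intersecting these full-measure events over all $\chi'$ still yields a full-measure set, and that intersection is precisely the set of pseudo-ergodic sequences.

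More concretely, fix $M \in \N$ and $\chi' \in D^M$, and for each $k \in \Z$ consider the window event
\begin{equation*}
    A_k \;=\; \{\chi \in \ldz \mid \mc P_M S^{kM}\chi = \chi'\} \;=\; C(kM, kM+1, \dots, kM+M-1,\, \{\chi'(1)\}\times\cdots\times\{\chi'(M)\}).
\end{equation*}
Because the index windows $\{kM, kM+1, \dots, kM+M-1\}$ are pairwise disjoint as $k$ ranges over $\Z$, the events $(A_k)_{k \in \Z}$ depend on disjoint collections of coordinates, hence are mutually independent under the product measure $\prob$. Moreover, the cylinder set formula gives
\begin{equation*}
    \prob(A_k) \;=\; \prod_{i=1}^{M} p_{\chi'(i)} \;>\; 0,
\end{equation*}
since all sampling probabilities are assumed strictly positive. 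Therefore $\sum_{k \in \N} \prob(A_k) = \infty$, and the second Borel-Cantelli lemma implies $\prob(A_k \text{ i.o.}) = 1$. In particular, with probability one there exists $j \in \Z$ (of the form $j = kM$) with $\mc P_M S^{j}\chi = \chi'$, that is, $\chi' \prec \chi$.

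Denote this full-measure event by $E_{\chi'}$. Because $\bigsqcup_{M \in \N} D^M$ is countable,
\begin{equation*}
    \prob\Bigl(\,\bigcap_{\chi' \in \bigsqcup_{M} D^M} E_{\chi'}\Bigr) \;=\; 1,
\end{equation*}
and by \cref{def:pseudoergodic} this intersection is exactly the set of pseudo-ergodic sequences, proving the lemma. The argument is essentially routine; the only mild subtlety to be careful with is to restrict to occurrences on the disjoint windows $kM + \{0,\dots,M-1\}$ in order to genuinely apply the independent half of Borel-Cantelli, rather than working with overlapping windows for which the shifted events are only identically distributed. Alternatively, one could invoke Birkhoff's ergodic theorem for the shift action on $(\ldz, \mc F, \prob)$, which is ergodic as a Bernoulli system, to conclude directly that the empirical frequency of any fixed pattern $\chi'$ converges almost surely to $\prod_{i=1}^M p_{\chi'(i)} > 0$, forcing infinitely many occurrences.
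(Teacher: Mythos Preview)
Your proof is correct, but it takes a different route from the paper's. The paper's argument leans on the metric transitivity of the shift action that was just established: it observes that the set $\Psi_M = \{\chi : \chi' \prec \chi \text{ for all } \chi' \in D^M\}$ is shift-invariant and has positive probability (since the event of seeing all length-$M$ words lined up consecutively starting at $0$ already has positive probability), and then invokes the zero--one law to conclude $\prob(\Psi_M)=1$. Your approach bypasses metric transitivity entirely and instead uses the second Borel--Cantelli lemma on disjoint windows, which is more elementary and self-contained. The paper's proof has the virtue of exercising the ergodic-theoretic framework it has just set up and would extend verbatim to any metrically transitive sampling scheme in which every finite word has positive probability; your argument, by contrast, uses the product structure (independence of disjoint coordinates) directly and does not need the preceding lemma on metric transitivity at all. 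Your closing remark about Birkhoff's ergodic theorem is essentially the paper's line of thought in disguise.
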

\begin{proof}
    We define the sets
    \[
        \Psi_M \coloneqq\{\chi \in \ldz \mid \chi'\prec \chi \text{ for all }\chi'\in \ldm\},
    \] 
    which are $\mc S$-invariant and have $\prob(\Psi_M)>0$ for all $M\in \N$. The latter part follows after realising that finding all sequences $\chi'\in \ldm$ lined up sequentially starting at $0$ occurs with nonzero probability if $p_d>0$ for all $d=1,\dots, D$. By metric transitivity, we must thus have $\prob(\Psi_M)=1$ for all $M\in \N$. 

    On the other hand, we denote $\Psi := \{\chi\in \ldz \mid \chi \text{ pseudo-ergodic}\}$ and find 
    \[
        \Psi = \bigcap_{M=1}^\infty\Psi_M.
    \]
    But due to the fact that $\prob(\Psi_M)=1$ for all $M$, we also must have $\prob(\Psi)=1$, as desired.
\end{proof}

\subsection{Resonator sequences}\label{ssec:resonator seq}
At this point, we would also like to connect the sampling of blocks to the corresponding sequence of resonators. Consider a block disordered system with blocks $B_1, \dots B_D$ sampled with probability $p_1, \dots, p_D$ yielding an \emph{i.i.d.} block sequence $\chi \in \ldz$. This uniquely determines the sequence of resonators, which we shall encode using tuples from the set
\[
    R \coloneqq\left\{ (d,r) \mid d\in \{1,\dots D\}, r\in \{1,\dots, \len(B_d)\}\right\} \subset \N^2.
\]
The tuple $(d,r)$ denotes the $r$\textsuperscript{th} resonator of the $d$\textsuperscript{th} block. We thus get a bijective map 
\begin{equation}
\begin{aligned}
    \Phi: \ldz &\to \lrz \coloneqq \Phi(\ldz) \subset R^{\Z}\\
    \chi &\mapsto \alpha \coloneqq (\dots, (\oo{0}, 1), \dots, (\oo{0}, \len(B_{\oo{0}})), (\oo{1}, 1), \dots),
\end{aligned}
\end{equation}
which we can use to push forward the \emph{i.i.d.} probability space $(\ldz, \mc F, \prob)$ onto the isomorphic $(\lrz, \mc F_R, \prob_R)$, where $\mc F_R \coloneqq \Phi^*\mc F$ and $\prob_R \coloneqq \Phi^*\prob$. Thus, $\lrz$ is the space of all valid resonator sequences obtained from a block sequence via $\Phi$.

From this construction, it is easy to see that any resonator sequence $\mu = \Phi(\chi) \in \lrz$ is a bi-infinite Markov chain with finite state space (also known as a two-sided subshift of finite type). In particular, if a resonator tuple $(d, r)$ is not at the end of the block (so $r<\len (B_{d})$), then it is deterministically followed by $(d, r+1)$. Conversely, if $(d, r)$ is at the end of a block (so $r=\len (B_{d})$), then it is followed by one of $\{(1,1), \dots, (D,1)\}$ with probability $p_1,\dots, p_D$, respectively. From this observation, it follows that this Markov chain is homogeneous and irreducible, as the transition probabilities are independent of location and any state is reachable from any other.

\begin{example}
    Consider a block disordered system with blocks that are either a single resonator $B_1 = B_{single}$ or a dimer block $B_2 = B_{dimer}$, as in \cref{ex:standard_blocks}. We sample the blocks \emph{i.i.d.} with $B_1$ and $B_2$ occurring with probability $p_1$ and $p_2= 1-p_1$, respectively, yielding a sequence $\chi \in 2^{\Z}$. 
    The corresponding resonator sequence $\alpha = \Phi(\chi) \in \lrz$ consists of the tuples $R = \{(1,1), (2,1), (2,2)\}$ encoding the single resonator and the two dimer resonators, respectively. The finite state space $R$ then allows us to collect the transition probabilities in the \emph{transition matrix}
    \[
    P =\begin{pmatrix}
        p_1 & p_2 & 0\\
        0 & 0 & 1\\
        p_1 & p_2 & 0
    \end{pmatrix},
    \]
    where $P_{ij}$ is the transition probability from the $i$\textsuperscript{th} state to the $j$\textsuperscript{th} state and we encode the block tuples $(1,1), (2,1), (2,2)$ as states $1,2,3$, respectively.
\end{example}

\begin{lemma}
    The group of shifts $\mc S = \{S^j \mid j\in \Z\}$ where $(S^j\alpha)(i) = \alpha(i+j)$ is a metrically transitive group of automorphisms on $(\lrz, \mc F_R, \prob_R)$.
\end{lemma}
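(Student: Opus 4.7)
The plan is to exploit the observation already established in the preceding paragraphs: $\alpha = \Phi(\chi) \in \lrz$ is a sample path of an irreducible, homogeneous Markov chain on the finite state space $R$. For such chains, the shift acting on the stationary distribution is a textbook example of a metrically transitive dynamical system, and my strategy is to verify the three defining properties directly from this Markov structure. An equivalent route, which I would use as a sanity check, is to view the resonator shift as a Kakutani suspension over the block shift on $(\ldz, \mc F, \prob)$ with bounded roof function $\chi \mapsto \len(B_{\oo{0}})$, and to transfer the already-known metric transitivity of the base.

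For stochastic continuity, the argument is essentially automatic: the group $\mc S$ is indexed by the discrete group $\Z$, so any convergent net $S^{j_n} \to S^j$ is eventually constant, and hence $\prob_R(S^{j_n}X \triangle S^j X) = 0$ for large $n$. For measure preservation, I would check that the finite-dimensional marginals $\prob_R(\alpha(i_1)\in A_1, \dots, \alpha(i_n)\in A_n)$ are invariant under simultaneous translation of the indices; this follows from the homogeneity of the transition probabilities (deterministic strictly inside a block and $p_d$-weighted at block boundaries), which is immediate from the \emph{i.i.d.}\ block sampling. For ergodicity, I would invoke the classical fact that the shift on the law of an irreducible finite-state Markov chain is ergodic; the required irreducibility -- every state is reachable from every other in finitely many steps -- was explicitly noted in the paragraph preceding the lemma.

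The main obstacle is a subtle definitional point: if one reads $\prob_R = \Phi^*\prob$ literally, the measure is supported on those sequences whose entry at position $0$ is forced to be of the form $(d,1)$, a property clearly not preserved by a single-step shift. To make the lemma rigorous, $\prob_R$ must be interpreted as the shift-invariant extension of this pushforward -- concretely, as the stationary Markov measure with the kernel described above, or equivalently as a convex combination of translates of $\Phi^*\prob$ weighted by the block lengths so that the marginal at every index matches the stationary distribution on $R$. Once this identification is in place, the three properties of metric transitivity fall out directly from the Markov-chain structure, or alternatively by transferring the block-shift metric transitivity through the suspension construction with uniformly bounded tower heights.
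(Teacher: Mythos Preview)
Your approach is essentially that of the paper: both rely on the fact that the resonator sequence is a homogeneous, irreducible Markov chain on a finite state space, from which metric transitivity of the shift follows. The paper's proof is a one-line citation to \cite[Example 1.15b)]{pastur1992Spectra}, noting that homogeneity together with irreducibility on a finite state space (hence uniqueness of the stationary distribution) are the essential ingredients; you spell out the verification of each of the three properties and additionally offer the Kakutani-suspension viewpoint over the block shift as a sanity check, which the paper does not mention.

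Your identification of the definitional obstacle is perceptive and flags a genuine imprecision that the paper glosses over. As literally written, $\lrz = \Phi(\ldz)$ and $\prob_R = \Phi^*\prob$ force the resonator at index $0$ to be of the form $(d,1)$, so that $\lrz$ is not even closed under a single-step shift and the pushforward measure cannot be shift-invariant. The paper does not confront this; implicitly it is working with the stationary Markov measure on the full shift-invariant subset of $R^{\Z}$, which is precisely the reading required for the cited result to apply. Your proposed reinterpretation---either as the stationary Markov measure or, equivalently, as the block-length-weighted average of translates of $\Phi^*\prob$---is the correct fix and is what the paper tacitly assumes.
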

\begin{proof}
This result follows from \cite[Example 1.15b)]{pastur1992Spectra} where the essential ingredients are the fact that we consider homogeneous Markov chains and that the finite state space together with irreducibility ensures the uniqueness of the stationary state.
\end{proof}

Finally, in this subsection, we note that $\Phi$ also allows us to push forward the topology of $\ldz$ as well as the definition of pseudo-ergodicity onto the space of resonator sequences $\lrz$. This equips $\lrz$ with the structure of a compact metric space, equivalent to taking the subspace topology $\lrz\subset R^{\Z}$, where $R^{\Z}$ is again equipped with the topology of the countable product of discrete spaces. 
Furthermore, the pushforward of the pseudo-ergodicity definition matches \cref{def:pseudoergodic} in the sense that for any $\alpha\in \lrz$ we have 
\[
    \Phi^{-1}(\alpha) \text{ pseudo-ergodic} \iff \alpha'\in \alpha\quad \forall \alpha'\in \mathbb{L}_R(M) \iff \overline{\operatorname{Orb}(\alpha)} = \lrz,
\]
where $\mathbb{L}_R(M)$ is the space of finite truncations of sequences in $\lrz$.
Note that this does not imply that a pseudo-ergodic $\alpha\in\lrz$ must contain \emph{every} $\alpha'\in R^M$ but merely the finite truncations of sequences obtained from block sequences and thus in the image of $\Phi$.

\section{Jacobi operators and cocycles}\label{sec:jacobi}
In the infinite block sequence limit, the subwavelength resonance problem is no longer described by a capacitance matrix $\mc C = VC$ but a real Jacobi operator on $\ell^2(\Z)$.
\begin{definition}[Jacobi Operator]
    A (real) \emph{Jacobi operator} is a symmetric tridiagonal operator acting on $\lz$. It can be described by two real sequences $\bseq{a}{i}, \bseq{b}{i} \in \ell^\infty(\Z)$ and is given by
    \begin{align*}
        J:\lz &\to \lz\\
        \bseq{v}{i} &\mapsto (J\bm v)^{(i)} = \bseq{a}{i-1}\bseq{v}{i-1} + \bseq{a}{i}\bseq{v}{i+1} + \bseq{b}{i}\bseq{v}{i} .
    \end{align*}
\end{definition}
Because $\bseq{a}{i}$ and $\bseq{b}{i}$ are bounded and $J$ is tridiagonal and symmetric, $J:\lz\to \lz$ is a well-defined, linear, bounded, and self-adjoint operator.

We note that the symmetry requirement on $J$ does not pose any issues since the capacitance matrix eigenvalue problem for $VC$ demonstrated in \cref{thm:capapprox} is equivalent to the symmetric eigenvalue problem for $V^\frac{1}{2}CV^\frac{1}{2}\sim VC$.

\begin{remark}
    As we aim to understand the spectrum of large block disordered systems, we investigate the symmetrised capacitance matrix $V^\frac{1}{2}CV^\frac{1}{2}$ in the large resonator array asymptotic $N\to \infty$. Here, $N\to \infty$ is understood as starting with an infinite array of subwavelength resonators (not necessarily arranged in blocks) described by sequences of wave speeds $(v_i)_{i=-\infty}^\infty$, resonator lengths $(\ell_i)_{i=-\infty}^\infty$ and spacings $(s_i)_{i=-\infty}^\infty$ and constructing a sequence of increasingly large capacitance matrices $\mc J_N \in \mathbb{R}^{(2N+1)\times (2N+1)}$ from the truncated sequences $(v_i)_{i=-N}^N, (\ell_i)_{i=-N}^N, (s_i)_{i=-N}^N$. 

    If we consider $(J_N)_{N=1}^\infty$ as a sequence of operators on the sequence space $\ell^2(\Z)$, we find that this sequence converges strongly to a Jacobi operator $\mc J:\ell^2(\Z)\to \ell^2(\Z)$ with the following bands:
    \begin{equation}\label{eq:jacobibands}
        \bseq{a}{i} = -\frac{v_{i}v_{i+1}}{s_{i}\sqrt{\ell_{i}\ell_{i+1}}} \quad \text{and} \quad 
        \bseq{b}{i} = \frac{v_i^2}{\ell_i}\left(\frac{1}{s_{i-1}}+\frac{1}{s_{i}}\right).
    \end{equation}
\end{remark}

Given the blocks $B_1, \dots, B_D,$ any sequence $\chi\in \ldz$ uniquely determines a sequence of resonators $\alpha \in \lrz$ and thus also unique sequences of wave speeds $(v_i)_{i=-\infty}^\infty$, resonator lengths $(\ell_i)_{i=-\infty}^\infty$ and spacings $(s_i)_{i=-\infty}^\infty$. Condition \cref{eq:jacobibands} thus allows us to define a Jacobi operator $\mc{J}(\alpha)$ for any resonator sequence $\alpha$, making $\mc{J}:\lrz\to \mc L(\ell^2(\Z))$ a \emph{random operator} on the probability space of the resonator sequences $(\lrz, \mc F_R, \prob_R)$. We will often overload the notation to mean $\mc J(\chi)\coloneqq \mc J(\Phi(\chi))$.

We have the following result from \cite[Proposition 2]{alkorn.zhang2022correspondence} ensuring the almost sure invariance of the spectrum
$\sigma(\mc J(\alpha))$.

\begin{proposition}\label{prop:spec-invar-as}
    If $\alpha\in \lrz$ is pseudo-ergodic, then we have
    \[
        \sigma(\mc J(\alpha'))\subset \sigma(J(\alpha))
    \]
    for \emph{any} $\alpha'\in \lrz$. 
\end{proposition}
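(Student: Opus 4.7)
The plan is to use a Weyl-sequence argument, exploiting three ingredients: the self-adjointness of $\mc J(\alpha')$, the unitary invariance of $\mc J$ under shifts of the resonator sequence, and the pseudo-ergodicity of $\alpha$. In essence, given any approximate eigenvector for $\mc J(\alpha')$, pseudo-ergodicity lets me find a shifted copy of $\alpha$ that matches $\alpha'$ on the relevant finite window, and the unitary shift invariance then converts this into an approximate eigenvector for $\mc J(\alpha)$.

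First, I would pick $\lambda \in \sigma(\mc J(\alpha'))$. Since $\mc J(\alpha')$ is a bounded self-adjoint operator on $\ell^2(\Z)$, Weyl's criterion yields a sequence $(\psi_n) \subset \ell^2(\Z)$ with $\|\psi_n\| = 1$ and $\|(\mc J(\alpha') - \lambda)\psi_n\| \to 0$. Since finitely supported vectors are dense in $\ell^2(\Z)$ and $\mc J(\alpha')$ is bounded, a standard cut-off-and-renormalise argument shows that one may additionally assume each $\psi_n$ has compact support, say $\operatorname{supp}(\psi_n) \subset [a_n, b_n]$.

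Next, I would exploit locality: from \eqref{eq:jacobibands}, each band value $\bseq{a}{i}, \bseq{b}{i}$ of $\mc J(\alpha)$ depends only on $\alpha$ restricted to a bounded neighbourhood of $i$, so $(\mc J(\alpha')\psi_n)^{(i)}$ for $i \in [a_n-1, b_n+1]$ depends only on $\alpha'$ restricted to the window $W_n := [a_n - 2, b_n + 2]$. The truncation $\alpha'|_{W_n}$ is a finite resonator sequence in $\mathbb{L}_R(|W_n|)$, so by pseudo-ergodicity of $\alpha$ (in the form stated after \cref{def:pseudoergodic}) there exists $j_n \in \Z$ with $(S^{j_n}\alpha)|_{W_n} = \alpha'|_{W_n}$. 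Consequently, the two operators $\mc J(S^{j_n}\alpha)$ and $\mc J(\alpha')$ agree on $\psi_n$, i.e.\ $\mc J(S^{j_n}\alpha)\psi_n = \mc J(\alpha')\psi_n$.

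Finally, I would invoke shift invariance. Let $U_k : \ell^2(\Z) \to \ell^2(\Z)$ denote the unitary shift $(U_k \bm v)^{(i)} = \bseq{v}{i+k}$. A direct computation using \eqref{eq:jacobibands} yields the conjugation identity $\mc J(S^k\alpha) = U_k \mc J(\alpha) U_k^{-1}$ for every $\alpha \in \lrz$ and $k \in \Z$. Setting $\tilde\psi_n := U_{j_n}^{-1}\psi_n$, one obtains $\|\tilde\psi_n\| = 1$ and
\begin{equation*}
    \|(\mc J(\alpha) - \lambda)\tilde\psi_n\| = \|(\mc J(S^{j_n}\alpha) - \lambda)\psi_n\| = \|(\mc J(\alpha') - \lambda)\psi_n\| \to 0,
\end{equation*}
so $(\tilde\psi_n)$ is a Weyl sequence for $\mc J(\alpha)$ at $\lambda$, yielding $\lambda \in \sigma(\mc J(\alpha))$.

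The main conceptual obstacle is bookkeeping how many neighbouring resonators the tridiagonal bands $\bseq{a}{i}, \bseq{b}{i}$ actually depend on (the spacings $s_i$ introduce an additional one-step coupling), so that the window $W_n$ is chosen large enough for $\mc J(S^{j_n}\alpha)\psi_n$ and $\mc J(\alpha')\psi_n$ to coincide exactly; everything else is essentially a soft combination of Weyl's criterion with the density of $\operatorname{Orb}(\alpha)$ in $\lrz$ granted by pseudo-ergodicity. No quantitative estimates on $j_n$ or on the rate of approximation are needed, since Weyl's criterion only requires the defect norm to vanish along the sequence.
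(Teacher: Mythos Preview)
Your Weyl-sequence argument is correct and is the standard proof of this inclusion for pseudo-ergodic Jacobi operators. The paper does not actually prove \cref{prop:spec-invar-as}: it simply imports the statement from \cite[Proposition~2]{alkorn.zhang2022correspondence}. Your self-contained argument therefore supplies strictly more than the paper does, and the underlying mechanism (compactly supported approximate eigenvectors plus locality plus unitary shift covariance $\mc J(S^k\alpha)=U_k\mc J(\alpha)U_k^{-1}$) is exactly the one behind the cited result. The only cosmetic point is the bookkeeping around $\mathbb{L}_R(M)$: since $\alpha'\in\lrz$, any finite window $\alpha'|_{W_n}$ is automatically a valid truncation in $\mathbb{L}_R(|W_n|)$, so the pseudo-ergodicity hypothesis on $\alpha$ applies directly after translating $W_n$ to start at the origin; this is implicit in your write-up but worth stating once.
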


\subsection{Transfer matrix cocycles}
To study the spectral properties of such tridiagonal Jacobi operators $\mc J$, it will prove very fruitful to investigate the solutions of the \emph{spectral equation}
\begin{equation}\label{eq:spectraleq}
    (\mc J-\lambda)\bm v = 0,
\end{equation}
for some formal solution $\bm v\in \C^{\Z}$, not necessarily $\ell^2$-summable or even bounded.

We can exploit the tridiagonal structure to obtain the following iterative characterisation of $\bm v$.
\begin{definition}
    For any resonator sequence $\alpha\in \lrz$ and frequency $\lambda\geq 0$, we define the \emph{transfer matrices} (also known as the \emph{Jacobi cocycle map}) as
    $T^\lambda:\Z \to \on{GL}(2,\R)$ given by 
    \begin{equation}\label{eq:transfermateq}
        T^\lambda(i) \coloneqq \frac{1}{\bseq{a}{i}}\begin{pmatrix}
            \lambda - \bseq{b}{i} & -\bseq{a}{i-1}\\
            \bseq{a}{i} & 0
        \end{pmatrix}
    \end{equation}
    for any $i\in \Z$, where $\bseq{a}{i}$ and $\bseq{b}{i}$ are defined by \eqref{eq:jacobibands}.

    Notably, for any $\bm v\in \C^{\Z}$ solving \cref{eq:spectraleq}, we have 
    \begin{equation}\label{eq:transfermatprop}
        \begin{pmatrix}
            \bseq{v}{i+1}\\
            \bseq{v}{i}
        \end{pmatrix} = T^\lambda(i) \begin{pmatrix}
            \bseq{v}{i}\\
            \bseq{v}{i-1}
        \end{pmatrix}
    \end{equation}
    for all $i\in \Z$.

    We further define the \emph{cocycle iteration} as 
    \begin{equation}\label{eq:transfermat_cocycle_iteration}
        T^\lambda_n(i) \coloneqq \begin{cases}
            T^\lambda(i+n-1)\cdot \cdot \cdot T^\lambda(i), & n\geq 1,\\
            I_2, & n=0,\\
            (T^\lambda(i+n))^{-1}\cdot \cdot \cdot (T^\lambda(i-1))^{-1}, & n\leq-1,
        \end{cases}
    \end{equation}
    and find 
    \begin{equation}
        \begin{pmatrix}
            \bseq{v}{i+n}\\
            \bseq{v}{i+n-1}
        \end{pmatrix} = T_n^\lambda(i) \begin{pmatrix}
            \bseq{v}{i}\\
            \bseq{v}{i-1}
        \end{pmatrix}
    \end{equation}
    for all $n\in \Z$.
\end{definition}
We note that, by \cref{eq:jacobibands}, we have $\bseq{a}{i}\neq 0$ for all $i\in \Z$ and therefore, \cref{eq:transfermateq} is well defined.

It will often prove useful to think of the transfer matrices acting on the \emph{real projective space}. 
\begin{definition}[Real projective space]
Consider the equivalence relation $\sim$ on $\R^2$ defined by $\bm u \sim \bm v$ if and only if $\bm u = \lambda\bm v$ for $\bm u, \bm v \in \R^2$ and $\lambda\in \R$. The \emph{real projective space} (of dimension one) $\rp$ is then defined as the quotient $\R^2 / \sim$. For any $\bm u\in \R^2$, we will denote by $\overline{\bm u}$ the equivalence class of $\bm u$ under $\sim$.

It is clear that there exists a one-to-one correspondence between the one-dimensional subspaces of $\R^2$ and the elements of $\rp$. 
Furthermore, the group $\operatorname{GL}(2,\R)$ acts on $\rp$ via the matrix-vector product of representatives, \emph{i.e.},  $A\overline{u} \coloneqq \overline{Au}$.

We can define a metric on $\rp$ by 
\begin{equation}\label{eq:prmetric}
\begin{aligned}
    d:\rp\times \rp &\to \R \\
    (\overline{\bm u}, \overline{\bm v})&\mapsto \sqrt{1-{\left(\frac{\ip{\bm u}{\bm v}}{\norm{\bm u}\norm{\bm v}}\right)}^2} = \sin \measuredangle(\overline{\bm u}, \overline{\bm v}),
\end{aligned}
\end{equation}
where $\measuredangle(\overline{\bm u}, \overline{\bm v})$ denotes the angle between $\overline{\bm u}$ and $\overline{\bm v}$.
\end{definition}

A crucial concept relating the properties of the transfer matrix cocycle to the spectrum of $\mc J$ is the notion of \emph{dominated splitting}, adapted from \cite{alkorn.zhang2022correspondence}. 

\begin{definition}[Dominated splitting]\label{def:DS}
We say that
$T\in \ell^\infty(\Z,\operatorname{GL}(2,\R))$
admits a \emph{dominated splitting} (DS) if there exist two maps $s,u:\Z\to \rp$ such that
\begin{enumerate}[label=(DS\arabic*)]
  \item \textbf{$T$-invariance:}  
        For every $i\in\mathbb{Z}$,
        \[
          T(i)u(i)=u(i+1)
          \quad\text{and}\quad
           T(i)s(i)=s(i+1);
        \]

  \item \textbf{Domination:}  
        There exist $N\in\mathbb{N}$ and $\eta>1$ such that
        \[
          \bigl\lVert T_{N}(i)\,\vec{u}(i)\bigr\rVert
          \;>\;
          \eta\,
          \bigl\lVert T_{N}(i)\,\vec{s}(i)\bigr\rVert
        \]
        for all \(i\in\mathbb{Z}\) and all unit vectors
        \(\vec{u}(i)\in u(i)\) and
        \(\vec{s}(i)\in s(i)\), where $$T_N(i) = T(i+N-1)\cdot \cdot \cdot T(i);$$

  \item \textbf{Angle separation:}  
        There exists $\delta>0$ such that
        \[
          d\bigl(u(i),s(i)\bigr)>\delta
          \quad\text{for all } i\in\mathbb{Z},
        \]
where $d$ is defined by \eqref{eq:prmetric};
  \item \textbf{Non-degeneracy:}  
        For the $N\in\mathbb{N}$ chosen in \textup{(DS2)}, we have
        \[
          \inf_{i\in\mathbb{Z}}\,
          \norm{T_{N}(i)}>0.
        \]
\end{enumerate}

In this case, we write $T\in \mc{DS}$.
\end{definition}

The reason why the existence of a dominated splitting is so crucial is that it completely determines the spectral gaps of $\mc J$; see \cite[Theorem 4]{alkorn.zhang2022correspondence}.
\begin{theorem}[Dominated splitting version of Johnson's theorem]\label{thm:johnson}
    \[
        \lambda\notin\sigma(\mc J) \iff T^\lambda\in \mc{DS}.
    \]
\end{theorem}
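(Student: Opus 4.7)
The plan is to establish both implications by building a direct bridge between the spectral resolvent of $\mc J$ at $\lambda$ and the geometry of the cocycle iterates $T^\lambda_n$. The core object mediating this correspondence is the Green's function $G_\lambda(i,j) = \langle \delta_i, (\mc J-\lambda)^{-1}\delta_j\rangle$ on the spectral side, and the stable/unstable projective bundles $s,u$ on the dynamical side. The strategy closely mirrors Johnson's classical argument for Schr\"odinger cocycles, but one has to be careful that $\det T^\lambda(i) = \bseq{a}{i-1}/\bseq{a}{i}$ fluctuates, so the transfer matrices live in $\operatorname{GL}(2,\R)$ rather than $\operatorname{SL}(2,\R)$ -- this is precisely why dominated splitting, rather than the stronger notion of uniform hyperbolicity, is the right dynamical notion on the Jacobi side.

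For the forward direction $\lambda \notin \sigma(\mc J) \Rightarrow T^\lambda \in \mc{DS}$, I would first invoke a Combes--Thomas-type estimate: since $\lambda$ lies in the resolvent set and $\mc J$ is a bounded tridiagonal self-adjoint operator, $G_\lambda(i,j)$ decays exponentially, $|G_\lambda(i,j)| \leq Ce^{-\gamma|i-j|}$, uniformly in $i,j$. Fixing any $i \in \Z$, the space of formal solutions to $(\mc J-\lambda)\bm v = 0$ is two-dimensional, and I would define $u(i)$ (respectively $s(i)$) as the projective direction of formal solutions decaying exponentially as the index tends to $-\infty$ (respectively $+\infty$). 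These directions are one-dimensional because, if either were two-dimensional, every formal solution would be $\ell^2$ at the corresponding end; if they coincided, $\lambda$ would be an eigenvalue in $\ell^2(\Z)$. The $T$-invariance (DS1) is immediate from \eqref{eq:transfermatprop}. Uniform angle separation (DS3) and uniform domination (DS2) would then both be read off from the Combes--Thomas decay: the projected iterate $T_N^\lambda(i)\vec s(i)$ inherits the exponential decay of the resolvent, while $T_N^\lambda(i)\vec u(i)$ grows correspondingly. Non-degeneracy (DS4) follows from the uniform lower bound on $|\bseq{a}{i}|$ inherited from the strict positivity of $v_i,\ell_i,s_i$ in \eqref{eq:jacobibands}.

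For the converse $T^\lambda \in \mc{DS} \Rightarrow \lambda \notin \sigma(\mc J)$, I would reverse the construction. Choose unit representatives $\vec s(i), \vec u(i)$ of the invariant bundles and form scalar solutions $\psi^s, \psi^u$ by reading off the first components. By (DS2), $\psi^s$ decays exponentially at $+\infty$ and $\psi^u$ decays exponentially at $-\infty$; by (DS3) combined with (DS4), the Wronskian $W(i) = \bseq{a}{i}(\psi^s(i+1)\psi^u(i) - \psi^s(i)\psi^u(i+1))$ is bounded away from zero uniformly in $i$. I would then define $G_\lambda(i,j)$ explicitly in terms of $\psi^s,\psi^u$ on either side of the diagonal and verify by a direct Schur test, using the uniform exponential decay furnished by (DS2), that this kernel defines a bounded operator on $\ell^2(\Z)$ that is a two-sided inverse of $\mc J - \lambda$.

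The main obstacle is extracting genuinely \emph{uniform} (in $i$) exponential decay and angle separation rather than merely pointwise or asymptotic statements; this is where the non-degeneracy condition (DS4) does the delicate work that is trivial in the $\operatorname{SL}(2,\R)$ setting. In particular, I would need to show carefully that the Combes--Thomas decay rate depends only on $\mathrm{dist}(\lambda,\sigma(\mc J))$ and the $\ell^\infty$-norms of $\bseq{a}{i},\bseq{b}{i}$, and conversely that (DS2)--(DS4) together force the norm of the constructed resolvent kernel to be finite. Once this uniformity is secured, everything else is cocycle bookkeeping combined with the tridiagonal structure.
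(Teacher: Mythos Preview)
Your proposal is correct and follows the same approach as the paper: the paper does not prove this theorem but quotes it from Alkorn--Zhang and gives precisely the intuitive sketch you flesh out---Combes--Thomas decay of the resolvent yields the splitting in one direction, and the stable/unstable directions build the Green's function in the other. Your expansion into Jost-type solutions, Wronskian bounds via (DS3)--(DS4), and a Schur test for boundedness of the resolvent kernel is the standard execution of this strategy.
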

This, together with \cref{prop:spec-invar-as}, gives an almost-sure characterisation of the spectrum for any resonator sequence. 

Intuitively, this theorem holds because if $T^\lambda\in \mc{DS}$, the stable and unstable directions allow the construction of a Green function, ensuring that $\mc J - \lambda$ is invertible. Conversely, if $\lambda$ lies in the resolvent of $\mc J$, one may use the fact that, by a Combes-Thomas-type estimate, the Green function decays exponentially to construct a dominated splitting.

The remainder of this section will consist of a series of reductions enabling us to find simple necessary and sufficient conditions for the existence of a dominated splitting in terms of block properties. An overview of these reductions is given in \cref{eq:saxon-hutner_schematic}.

\subsection{Propagation matrix cocycle}
The transfer matrix cocycle has two key weaknesses. First, we have in general $$\det T^\lambda(i) = \bseq{a}{i-1} / \bseq{a}{i} \neq 1,$$ which prevents the use of the theory of $\on{SL}(2,\R)$-cocycles from being used. And secondly, the $i$\textsuperscript{th} transfer matrix $T^\lambda(i)$ depends not only on the properties of the $i$\textsuperscript{th} resonator but also on the surrounding resonators. This complicates the association of the resonator properties with the transfer matrix properties.

To remedy this, we perform a change of basis at the transfer matrix level and switch to propagation matrices.
\begin{definition}[Propagation matrix]\label{def:propmat}
    We define the \emph{conjugacy sequence} $Q:\Z\to \on{GL}(2,\R)$ as 
    \begin{equation}\label{eq:proptransferhomology}
        Q(i) \coloneqq \begin{pmatrix}
          \ds  \frac{v_{i}}{\sqrt{\ell_{i}}} & 0 \\
           \ds \frac{v_{i}}{\ell_{i}s_{i-1}} & \ds -\frac{v_{i-1}}{\ell_{i-1}s_{i-1}}
        \end{pmatrix}.
    \end{equation}
    The \emph{propagation matrix cocycle} is then defined as $P^\lambda:\Z\to \on{SL}(2,\R)$
    \begin{equation}\label{eq:propmat}
        P^\lambda(i) \coloneqq Q(i+1)T^\lambda(i)(Q(i))^{-1} = \begin{pmatrix}
          \ds   1 - s_i\frac{\ell_i}{v_i^2}\lambda & s_i\\
           \ds  - \frac{\ell_i}{v_i^2}\lambda & 1
        \end{pmatrix},
    \end{equation}
    and the cocycle iteration $P^\lambda_n(i)$ is defined analogously to $T^\lambda_n(i)$.
\end{definition}

This relation between $T^\lambda$ and $P^\lambda$ is a cocycle cohomology.
\begin{definition}[Cocycle cohomology]
    We say that two cocycles $A,B:\Z\to \on{GL}(2,\R)$ are \emph{cohomologous} if there exists a conjugacy sequence $Q:\Z\to \on{GL}(2,\R)$ such that
    \begin{equation}
        A(i) = Q(i+1)B(i)(Q(i))^{-1}
    \end{equation}
    for all $i\in \Z$.
\end{definition}

We note that cocycle cohomology interacts nicely with the cocycle iteration, and we have 
\begin{equation}
    P^\lambda_n(i) = Q(i+n)T_n^\lambda(i)(Q(i))^{-1}
\end{equation}
for all $n\in \Z$. From \cref{eq:propmat}, it is immediately clear that the propagation matrix remedies the two issues of the transfer matrix outlined above. That is, it always has $\det P^\lambda(i)=1$ and depends only on the properties $\ell_i,s_i,v_i$ of the $i$\textsuperscript{th} resonator.

\begin{remark}\label{rmk:propmat_intuition}
    In the literature (see, for instance, \cite[Chapter 11]{book-lukic}), $$\det T^\lambda(i) = \bseq{a}{i-1} / \bseq{a}{i} \neq 1$$ is often overcome by choosing the conjugacy sequence 
    \[
        B(i) = \begin{pmatrix}
            1 & 0\\
            0 & \bseq{a}{i}
        \end{pmatrix}
    \]
    corresponding to a change of basis $(\bseq{v}{i+1},\bseq{v}{i})^\top \to (\bseq{v}{i+1},\bseq{a}{i}\bseq{v}{i})^\top$.
    
    The conjugacy sequence $Q$ is based on a similar idea with clear physical intuition:
    We can split $Q(i) = D(i)V(i)$ with 
    \[
        D(i)= \begin{pmatrix}
            1 & 0\\
         \ds   \frac{1}{s_{i-1}} & \ds -\frac{1}{s_{i-1}}
        \end{pmatrix}, \quad V(i)= \begin{pmatrix}
          \ds   \frac{v_{i}}{\sqrt{\ell_{i}}} & 0\\
            0& \ds \frac{v_{i-1}}{\sqrt{\ell_{i-1}}}
        \end{pmatrix}.
    \]
  The matrix $V(i)$ then corresponds to the change of basis $(\bseq{v}{i},\bseq{v}{i-1})^\top \to (\bseq{u}{i},\bseq{u}{i-1})^\top$ undoing the similarity transform $V^\frac{1}{2}CV^\frac{1}{2}\sim VC$ that was performed to obtain the Jacobi operator $\mc J$ from the capacitance problem $VC$. Following \cref{thm:capapprox}, $\bseq{u}{i}$ can thus physically be understood to be the first-order approximant of the solution $u(x_i^{\iL})$ of \cref{eq:waveeq} on the resonator $D_i$. 
    
    Now, to remedy the non-locality of $T^\lambda$, $D(i)$ performs a change of basis from $$(\bseq{u}{i},\bseq{u}{i-1})^\top = (u(x_{i}^{\iL}),u(x_{i-1}^{\iL}))^\top \to (u(x_{i}^{\iL}),u'(x_{i}^{\iL}))^\top,$$ \emph{i.e.}, to the tuple $(u(x_{i}^{\iL}),u'(x_{i}^{\iL}))^\top$ encoding $u(x_{i}^{\iL})$ together with the exterior derivative $u'(x_{i}^{\iL}) = \lim_{x\uparrow x_{i}^{\iL}} \frac{d}{dx}u(x)$. Here, the form of $D(i)$ again follows from the first-order approximation of $u(x)$ obtained in \cref{thm:capapprox}.
\end{remark}

For a propagation matrix cocycle in $\on{SL}(2,\R)$, we can now define the concept of uniform hyperbolicity \cite{zhang2020Uniform}, which is a strengthened version of the dominated splitting for $\on{SL}(2,\R)$-matrices. 

\begin{definition}[Uniform hyperbolicity]\label{def:UH}
We say that a cocycle
$P\in \ell^\infty(\Z,\operatorname{SL}(2,\R))$
is \emph{uniformly hyperbolic} (UH) if
there exist two maps $s,u:\Z\to \rp$ such that
\begin{enumerate}[label=(UH\arabic*)]
  \item \textbf{$P$-invariance:}  
        For every $i\in\mathbb{Z}$,
        \[
          P(i)u(i)=u(i+1)
          \quad\text{and}\quad
           P(i)s(i)=s(i+1);
        \]
  \item \textbf{Uniform growth:}  
        There exist $C>0$ and $\eta>1$ such that
        \[
          \norm{P_{-n}(i)\vec{u}(i)}, \norm{P_n(i)\vec{s}(i)} \leq C\eta^{-n}
        \]
        for all $i\in\mathbb{Z}$, $n\in \N$ and all unit vectors
        $\vec{u}(i)\in u(i)$ and
        $\vec{s}(i)\in s(i)$.
\end{enumerate}

In this case, we write $P\in \mc{UH}$.
\end{definition}

Indeed, for $\on{SL}(2,\R)$-matrices, uniform hyperbolicity and dominated splitting are equivalent.
\begin{lemma}\label{lem:DSiffUH}
    Consider a cocycle $P:\Z\to \on{SL}(2,\R)$. Then
    \[
        P\in \mc{UH} \iff P\in \mc{DS}.
    \]
\end{lemma}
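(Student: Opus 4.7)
My plan is to exploit the identity $\det P(i)=1$ as the algebraic backbone of both implications, since it rigidly couples the growth rate along the invariant unstable direction to the decay rate along the stable one. Fixing unit representatives $\vec{u}(i)\in u(i)$ and $\vec{s}(i)\in s(i)$, the invariance conditions (UH1)/(DS1) yield scalars $\lambda_u^{(n)}(i),\lambda_s^{(n)}(i)\in\R$ satisfying $P_n(i)\vec{u}(i)=\lambda_u^{(n)}(i)\vec{u}(i+n)$ and $P_n(i)\vec{s}(i)=\lambda_s^{(n)}(i)\vec{s}(i+n)$. A one-line determinant computation in the (non-orthonormal) basis $\{\vec{u}(i),\vec{s}(i)\}$ gives the key identity
\[
    \bigl|\lambda_u^{(n)}(i)\bigr|\cdot\bigl|\lambda_s^{(n)}(i)\bigr| \;=\; \frac{|\sin\theta_i|}{|\sin\theta_{i+n}|},\qquad \theta_i\coloneqq\measuredangle(\vec{u}(i),\vec{s}(i)).
\]

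For the direction $\mc{UH}\Rightarrow\mc{DS}$, (DS1) is (UH1). I would then invoke the cocycle identity $P_n(i-n)P_{-n}(i)=I$ to convert the backward decay $|\lambda_u^{(-n)}(i)|\leq C\eta^{-n}$ from (UH2) into the uniform forward expansion $|\lambda_u^{(n)}(j)|\geq \eta^n/C$. Combined with the forward decay of $\vec{s}$, this immediately delivers the domination ratio (DS2) with margin $\eta^{2N}/C^2$, and the non-degeneracy (DS4) through $\|P_N(i)\|\geq|\lambda_u^{(N)}(i)|\geq \eta^N/C$.

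The main obstacle is the angle separation (DS3). Here I would crucially use the $\ell^\infty$-bound $K\coloneqq\sup_i\|P(i)\|<\infty$ baked into the hypothesis $P\in\ell^\infty(\Z,\on{SL}(2,\R))$. Decomposing $\vec{u}(i)=\cos\theta_i\,\vec{s}(i)+\sin\theta_i\,\vec{s}(i)^\perp$ and applying $P_n(i)$, the triangle inequality together with the (UH2) decay of $\vec{s}$ and the crude bound $\|P_n(i)\vec{s}(i)^\perp\|\leq K^n$ give
\[
    \frac{\eta^n}{C} \;\leq\; \bigl|\lambda_u^{(n)}(i)\bigr| \;\leq\; |\cos\theta_i|\,C\eta^{-n} + |\sin\theta_i|\,K^n.
\]
Choosing $n_0$ large enough that $\eta^{n_0}/C - C\eta^{-n_0}\geq 1$, which is possible since $\eta>1$, then forces $|\sin\theta_i|\geq K^{-n_0}$ uniformly in $i$, establishing (DS3).

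For the converse $\mc{DS}\Rightarrow\mc{UH}$, (UH1) is (DS1). I would iterate (DS2) across $m$ consecutive windows of length $N$ to obtain $|\lambda_u^{(mN)}(i)|\geq\eta^m|\lambda_s^{(mN)}(i)|$, while (DS3) translates via the key identity above into a uniform upper bound $|\lambda_u^{(mN)}(i)||\lambda_s^{(mN)}(i)|\leq 1/\delta$. Multiplying and dividing these two inequalities yields the exponential decay $|\lambda_s^{(mN)}(i)|\leq \delta^{-1/2}\eta^{-m/2}$ along multiples of $N$. The residual indices $n=mN+r$ with $0\leq r<N$ are then absorbed using $(\mathrm{DS}4)$ together with the $\on{SL}(2,\R)$-identity $\|P(i)^{-1}\|=\|P(i)\|\leq K$, and the backward decay $\|P_{-n}(i)\vec{u}(i)\|\leq C\eta^{-n}$ follows by the entirely symmetric argument applied to the inverse cocycle, completing (UH2).
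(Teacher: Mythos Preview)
Your proof is correct and takes a more self-contained route than the paper's. Both arguments pivot on $\det P_n(i)=1$, but the paper outsources the two delicate steps: for $\mc{UH}\Rightarrow\mc{DS}$ it obtains (DS3) by citing \cite[Lemma~2]{zhang2020Uniform}, and for $\mc{DS}\Rightarrow\mc{UH}$ it does not verify (UH2) directly but instead proves uniform norm growth $\|P_n(i)\|\geq c\xi^n$ via a singular-value argument (iterated domination gives $\sigma_1>\eta^K\sigma_2$, and $\sigma_1\sigma_2=1$ forces $\sigma_1>\eta^{K/2}$) and then invokes \cite[Theorem~1]{zhang2020Uniform} for the equivalence with uniform hyperbolicity. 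Your determinant identity $|\lambda_u^{(n)}(i)|\,|\lambda_s^{(n)}(i)|=|\sin\theta_i|/|\sin\theta_{i+n}|$ replaces both citations: it yields (DS3) through the explicit sandwich $\eta^n/C\leq C\eta^{-n}+|\sin\theta_i|\,K^n$, and in the converse direction it combines with (DS3) to give $|\lambda_s^{(mN)}|\leq\delta^{-1/2}\eta^{-m/2}$ directly, bypassing the singular-value detour and the external equivalence. One cosmetic point: (DS4) is automatic for $\on{SL}(2,\R)$-valued cocycles (always $\|A\|\geq 1$) and plays no real role in your residual-index absorption---what you actually use there is the $\ell^\infty$-bound $K=\sup_i\|P(i)\|$ together with $\|P(i)^{-1}\|=\|P(i)\|$, exactly as the paper does in its final paragraph.
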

\begin{proof}
    \enquote{$\implies$}: Let $P\in \ell^\infty(\Z,\operatorname{SL}(2,\R))$ uniformly hyperbolic. Then conditions (DS1) and (DS2) follow immediately, (DS4) follows from the fact that $P(i)\in \on{SL}(2,\R)$, and (DS3) is \cite[Lemma 2]{zhang2020Uniform}.
    
    \enquote{$\impliedby$}: We now consider  $P\in \ell^\infty(\Z,\operatorname{SL}(2,\R))$ admitting a dominated splitting $u, s:\Z \to \rp$. We want to prove that $P$ is also uniformly hyperbolic (no necessarily with the same $u,s$) and use \cite[Theorem 1]{zhang2020Uniform} to equivalently prove that there exists $c>0, \xi>1$ such that 
    \[
        \norm{P_n(i)}\geq c\xi^n
    \] for all $n\in \N$ and $i\in \Z$.

    We can repeatedly apply (DS2) to find\\
    \textbf{Claim:}
    \[
        \norm{P_{KN}(i)u(i)} > \eta^K\norm{P_{KN}(i)s(i)} \quad \text{for all }i \in \Z, K\in \N .
    \]
    \begin{proof}
        We proceed by induction and let $i\in \Z$. The case $K=1$ follows by (DS2).
        For the induction step we again use (DS2) with $i' = i + (K-1)N$ to have 
        \[
            \norm{P_N(i+(K-1)N)\vec{u}(i+(K-1)N)} > \eta \norm{P_N(i+(K-1)N)\vec{s}(i+(K-1)N)}.
        \]
        Using (DS1) repeatedly, we find
        \[
            \vec{u}(i+(K-1)N) = \frac{P_{(K-1)N}(i)\vec{u}(i)}{\norm{P_{(K-1)N}(i)\vec{u}(i)}}, \quad \vec{s}(i+(K-1)N) = \frac{P_{(K-1)N}(i)\vec{s}(i)}{\norm{P_{(K-1)N}(i)\vec{s}(i)}},
        \] and plugging this into the above inequality yields
        \[
            \frac{\norm{P_{KN}(i)\vec{u}(i)}}{\norm{P_{(K-1)N}(i)\vec{u}(i)}} > \eta \frac{\norm{P_{KN}(i)\vec{s}(i)}}{\norm{P_{(K-1)N}(i)\vec{s}(i)}}.
        \] 
        Now, by the induction hypothesis, we have 
        \[
            \frac{\norm{P_{KN}(i)\vec{u}(i)}}{\eta^{K-1}\norm{P_{(K-1)N}(i)\vec{s}(i)}} > \frac{\norm{P_{KN}(i)\vec{u}(i)}}{\norm{P_{(K-1)N}(i)\vec{u}(i)}} > \eta \frac{\norm{P_{KN}(i)\vec{s}(i)}}{\norm{P_{(K-1)N}(i)\vec{s}(i)}},
        \] and the claim follows after multiplying both sides of the inequality by $\eta^{K-1}\norm{P_{(K-1)N}(i)\vec{s}(i)}$.
    \end{proof}

    For any $K\in \N$ and $i\in \Z$, we denote $\sigma_1\geq \sigma_2 \geq 0$ the singular values of $P_{KN}(i)$ and have 
    \[
        \sigma_1 \geq \norm{P_{KN}(i)u(i)} > \eta^K\norm{P_{KN}(i)s(i)} \geq \eta^K\sigma_2.
    \]
    Crucially, because $\det P(i) = 1$ for all $i$, we also have $\det P_{KN}(i) = 1$ and thus $$\sigma_1\sigma_2 = \abs{\det P_{KN}(i)} = 1.$$ Plugging this into the inequality above yields 
    \[
        \sigma_1 = \norm{P_{KN}(i)} > \sqrt{\eta}^K,
    \]
    ensuring that the norm of $P_{KN}(i)$ grows exponentially in $K$.
    
    Finally, because $P\in \ell^\infty(\Z,\operatorname{SL}(2,\R))$, we know that there must exist some bound $C$ such that for all $i\in \Z$ $\sigma_1 < C$ and thus also $\sigma_2>1/C$ where $\sigma_1\geq \sigma_2$ again denote the singular values of $P(i)$. Because $N$ is independent of $i$, this ensures that $\norm{P_n(i)}$ cannot arbitrarily decay between the exponential growth points $P_{KN}(i)$, allowing us to find $c>0$ and $\xi>1$ such that $\norm{P_n(i)}\geq c\xi^n$.
\end{proof}

Furthermore, the existence of a dominated splitting is invariant under a sufficiently regular cohomology.
\begin{lemma}\label{lem:DSiffDS_cohomology}
    Consider two cocycles $P,T:\Z \to \on{GL}(2,\R)$ cohomologous by $Q:\Z \to \on{GL}(2,\R)$. Assume further that the conjugacy sequence $Q(i)$ is uniformly bounded from above and below, \emph{i.e.}, there exist $C_1, C_2\in \R$ such that $$C_1> \sigma_1(Q(i))\geq \sigma_2(Q(i))>C_2>0$$ for all $i\in \Z$, where $\sigma_{1,2}(Q(i))$ denote the upper and lower singular values of $Q(i)$, respectively.
    Then
    \[
        P\in \mc{DS} \iff T\in \mc{DS}.
    \] 
\end{lemma}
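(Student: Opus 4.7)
The plan is to exploit the symmetry of the cohomology relation. Since the singular values of $Q^{-1}(i)$ lie in $[1/C_1, 1/C_2]$, the conjugacy $Q^{-1}$ witnesses the reverse direction of cohomology with the same kind of uniform bound, so it suffices to prove one implication. Assume $P(i) = Q(i+1)T(i)Q(i)^{-1}$ and that $T$ admits a dominated splitting with invariant lines $u_T,s_T:\Z\to\rp$ and constants $N,\eta>1,\delta>0$. I define the candidate splitting for $P$ by $u_P(i):=Q(i)u_T(i)$ and $s_P(i):=Q(i)s_T(i)$ and then verify the four axioms (DS1)--(DS4) in turn.

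Invariance (DS1) is immediate from $P(i)Q(i) = Q(i+1)T(i)$. Angle separation (DS3) follows from the two-dimensional identity $\det[Av\,|\,Aw]=\det(A)\det[v\,|\,w]$, which in $\R^2$ rearranges to
\[
\sin\measuredangle(Av,Aw)\;=\;\frac{\lvert\det A\rvert}{(\|Av\|/\|v\|)(\|Aw\|/\|w\|)}\,\sin\measuredangle(v,w).
\]
Applied to $A=Q(i)$, with $\|Q(i)v\|/\|v\|\in[C_2,C_1]$ and $|\det Q(i)|\in[C_2^2,C_1^2]$, this yields $d(u_P(i),s_P(i))\geq (C_2/C_1)^2\delta>0$ uniformly in $i$. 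Non-degeneracy (DS4) follows by a submultiplicative estimate: writing $P_N(i)=Q(i+N)T_N(i)Q(i)^{-1}$ and choosing a unit $w$ realising $\|T_N(i)\|$ gives $\|P_N(i)\|\geq (C_2/C_1)\|T_N(i)\|$, so $\inf_i\|P_N(i)\|>0$.

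The main obstacle is domination (DS2), because a single application of $Q$ can shrink the ratio $\|T_N(i)\vec u_T\|/\|T_N(i)\vec s_T\|$ by a factor up to $(C_2/C_1)^2$, which may drive it below $1$. The remedy is to iterate. The inductive claim proved inside \cref{lem:DSiffUH} uses only (DS1) and (DS2) and thus carries over verbatim (without any need for $\det=1$) to give
\[
\bigl\|T_{KN}(i)\vec{u}_T(i)\bigr\|\;>\;\eta^{K}\bigl\|T_{KN}(i)\vec{s}_T(i)\bigr\|\qquad\text{for every }K\in\N,\ i\in\Z,
\]
with $\vec u_T(i),\vec s_T(i)$ unit vectors in $u_T(i),s_T(i)$. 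I then take any unit $\vec u_P\in u_P(i)$, write $Q(i)^{-1}\vec u_P = \|Q(i)^{-1}\vec u_P\|\,\vec u_T'$ for some unit $\vec u_T'\in u_T(i)$, and use the bounds $\|Q(i)^{-1}\vec u_P\|\in[1/C_1,1/C_2]$ together with $\|Q(i+KN)w\|\in[C_2\|w\|,C_1\|w\|]$ to obtain
\[
\frac{\bigl\|P_{KN}(i)\vec u_P\bigr\|}{\bigl\|P_{KN}(i)\vec s_P\bigr\|}\;\geq\;\Bigl(\frac{C_2}{C_1}\Bigr)^{\!2}\,\frac{\bigl\|T_{KN}(i)\vec u_T'\bigr\|}{\bigl\|T_{KN}(i)\vec s_T'\bigr\|}\;\geq\;\Bigl(\frac{C_2}{C_1}\Bigr)^{\!2}\eta^{K}.
\]

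Choosing $K$ large enough that $(C_2/C_1)^{2}\eta^{K}>1$ establishes (DS2) for $P$ with step length $N'=KN$ and constant $\eta'=(C_2/C_1)^{2}\eta^{K}$. Combining the four verifications yields $P\in\mc{DS}$; applying the same argument to the cohomology $Q^{-1}$ gives the reverse implication, completing the proof.
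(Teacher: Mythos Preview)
Your proof is correct and follows the same strategy as the paper: transport the invariant directions through $Q$, iterate (DS2) to a power $K$ large enough to absorb the distortion from conjugation, and control (DS4) via the singular-value bounds on $Q$. Your verification of (DS3) via the explicit determinant/angle identity is cleaner than the paper's sequential contradiction argument, and your direct check of (DS4) avoids the paper's appeal to the equivalent condition (DS4'), but these are cosmetic differences rather than a genuinely different route.
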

\begin{proof}
    Consider $P\in \ell^\infty(\Z,\operatorname{SL}(2,\R))$ admitting a dominated splitting $u,s$.
    We define $q(i) = Q(i)u(i)$ and $p(i) = Q(i)s(i)$ as the unstable and stable directions of $T$, and now aim to prove that they satisfy the conditions for a dominated splitting. By construction, they satisfy (DS1). 

    Using the claim from the previous proof and choosing $K$ such that $$\frac{C_2}{C_1}\eta^K = \eta' > 1,$$ we have
    \begin{gather*}
        \norm{T_{KN}(i)\vec{q}(i)} = \norm{Q(i+KN)P_{KN}(i)\vec{u}(i)} \geq C_2 \norm{P_{KN}(i)\vec{u}(i)} > C_2\eta^K\norm{P_{KN}(i)\vec{s}(i)}\\ \geq  \frac{C_2}{C_1}\eta^K\norm{Q(i+KN)P_{KN}(i)\vec{s}(i)} = \eta'\norm{T_{KN}(i)\vec{p}(i)},
    \end{gather*}
    where $N$ is as in (DS2), yielding (DS2) with $N'=KN$ and $\eta' = \frac{C_2}{C_1}\eta^K$.
    
    For (DS3) we assume, by contradiction, that there exist $\vec{q}(i)\in q(i)$ and $\vec{p}(i)\in p(i)$ as well as a sequence $(k_i)_{i=1}^\infty$ such that $\vec{q}(k_i) - \vec{p}(k_i)\to 0$ as $i\to \infty$. Consequently, for any $\varepsilon>0$, there exists an $N_1$ such that for all $i\geq N_1$ we have
    \[
    \varepsilon>\norm{\vec{q}(k_i) - \vec{p}(k_i)} = \norm{Q(k_i)(\vec{u}(k_i)-\vec{s}(k_i))} > C_2\norm{\vec{u}(k_i)-\vec{s}(k_i)}
    \]
    and thus also $\vec{u}(k_i)-\vec{s}(k_i) \to 0$, which yields a contradiction.

    Finally, for (DS4) we use the equivalent condition (DS4') (see \cite[Definition]{alkorn.zhang2022correspondence}), stating $\inf_{i\in\Z} \norm{P_n(i)} >0$ for all $n\in \N$. Using this condition, we can see that for any $n\in \N$ we have 
    \[
        \inf_{i\in\Z} \norm{T_n(i)} = \inf_{i\in\Z} \norm{Q(i+n)P_n(i)(Q(i))^{-1}} > \inf_{i\in\Z} \frac{C_2}{C_1}\norm{P_n(i)} > 0,
    \] as desired.
\end{proof}

We have thus successfully reduced the question of whether $\lambda$ lies in the spectrum of $\mc J(\alpha)$ to the question of whether the corresponding propagation matrix cocycle is uniformly hyperbolic. 

\subsection{Block propagation matrices and the invariant cone criterion}
We now aim to develop simple and concrete criteria for the uniform hyperbolicity of the propagation matrix cocycle. Unfortunately, this is complicated by the fact that the transitions in the resonator sequence $\alpha\in \lrz$ are restricted by block construction as described in \cref{ssec:resonator seq}. 
Thus, it would be much more convenient to consider the \emph{block sequences} $\chi \in \ldz$ where, due to the \emph{i.i.d.} sampling, all transitions are allowed.

In fact, this is possible by grouping the propagation matrices by blocks to construct a new cocycle.

\begin{definition}\label{def:blockprop}
    For every block $B_d$, $d=1,\dots, D,$ we define the block propagation matrix
    \begin{equation}
        \mc P_d^\lambda \coloneqq \prod_{k=1}^{\len(B_d)}P_{\ell_k(B_d), s_k(B_d), v_k(B_d)}^\lambda,
    \end{equation}
    where $P_{\ell_k(B_d), s_k(B_d), v_k(B_d)}^\lambda$ denotes the single resonator propagation matrix as defined in \cref{eq:propmat}, but with the resonator parameters $\ell_k(B_d), s_k(B_d), v_k(B_d)$.

    To any resonator sequence $\alpha\in \lrz$ with corresponding block sequence $$\chi = \Phi^{-1}(\alpha)\in \ldz,$$ we can therefore associate the \emph{block propagation matrix cocycle} 
    \begin{equation}
        \begin{aligned}
            \mc P^\lambda:\Z&\to \on{SL}(2,\R)\\
            j &\mapsto \mc P^\lambda_{\seq{\chi}{j}}.
        \end{aligned}
    \end{equation}
\end{definition}
By construction, $\mc P^\lambda$ takes at most $D$ distinct values arranged according to the block sequence $\chi\in \ldz$.
To avoid confusion, we will also refer to the propagation matrix cocycle $P^\lambda$ defined in the previous subsection as the \emph{resonator propagation matrix cocycle}.

We now find that the uniform hyperbolicity of the block propagation matrix cocycle $\mc P^\lambda$ is equivalent to that of the resonator propagation matrix cocycle $P^\lambda$.
\begin{lemma}\label{lem:UHiffUH_contracted}
    Consider a block sequence $\chi \in \ldz$ with associated resonator sequence $\alpha = \Phi(\chi) \in \lrz$ and denote by $\mc P^\lambda$ and $P^\lambda$ the block propagation matrix and resonator propagation matrix cocycles, respectively. 

    Then, for any $\lambda\geq 0$, we have
    \[
        \mc P^\lambda\in \mc{UH} \iff  P^\lambda\in \mc{UH}.
    \]
\end{lemma}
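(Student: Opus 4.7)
The plan is to use the norm-growth characterisation of uniform hyperbolicity for $\mathrm{SL}(2,\R)$-cocycles, already invoked in the proof of \cref{lem:DSiffUH}: a cocycle $P\in \ell^\infty(\Z,\mathrm{SL}(2,\R))$ is in $\mc{UH}$ if and only if there exist $c>0$ and $\xi>1$ with $\norm{P_n(i)}\geq c\xi^n$ for all $i\in\Z$ and $n\in\N$ (\cite[Theorem 1]{zhang2020Uniform}). Since both $\mc P^\lambda$ and $P^\lambda$ are in $\ell^\infty(\Z,\mathrm{SL}(2,\R))$, it suffices to compare their cocycle iterates at large $n$.

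The key structural observation is that the block cocycle is obtained from the resonator cocycle by sampling at block boundaries. More precisely, if for a block sequence $\chi\in\ldz$ we let $i(j)$ denote the index in $\alpha=\Phi(\chi)$ of the first resonator of block $\seq{\chi}{j}$, then by construction (see \cref{def:blockprop})
\[
    \mc P^\lambda_n(j) \;=\; P^\lambda_{L_n(j)}(i(j)), \qquad L_n(j)\coloneqq\sum_{k=0}^{n-1}\len(B_{\seq{\chi}{j+k}}).
\]
Set $L_{\min}\coloneqq \min_{d}\len(B_d)\geq 1$ and $L_{\max}\coloneqq \max_{d}\len(B_d)$, both finite by finiteness of the block library.

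For the forward direction, assume $P^\lambda\in\mc{UH}$ so that $\norm{P^\lambda_m(i)}\geq c\xi^m$. Since $L_n(j)\geq L_{\min}n \geq n$, the identity above immediately gives
\[
    \norm{\mc P^\lambda_n(j)} \;=\; \norm{P^\lambda_{L_n(j)}(i(j))} \;\geq\; c\xi^{L_n(j)} \;\geq\; c\xi^n,
\]
which proves $\mc P^\lambda \in \mc{UH}$.

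For the reverse direction, assume $\mc P^\lambda\in\mc{UH}$ with $\norm{\mc P^\lambda_n(j)}\geq c'\xi'^n$, and fix $i\in\Z$ and $n\in\N$. I split the product $P^\lambda_n(i)=P^\lambda(i+n-1)\cdots P^\lambda(i)$ at the first block boundary at or after $i$ and the last block boundary at or before $i+n$, writing
\[
    P^\lambda_n(i) \;=\; A_2\cdot \mc P^\lambda_k(j_1)\cdot A_1,
\]
where $A_1$ and $A_2$ are products of at most $L_{\max}-1$ resonator propagation matrices drawn from the finite set $\{P^\lambda_{\ell_r(B_d),s_r(B_d),v_r(B_d)}\}_{d,r}$, and $k\geq (n-2L_{\max})/L_{\max}$. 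In particular, $A_1,A_2$ lie in a compact set $\mc K\subset\mathrm{SL}(2,\R)$ depending only on the block library and $\lambda$, so there is a constant $M\geq 1$ with $\norm{A_j^{-1}}\leq M$ for $j=1,2$. Using $A_1,A_2\in\mathrm{SL}(2,\R)$ one then obtains
\[
    \norm{\mc P^\lambda_k(j_1)} \;\leq\; \norm{A_2^{-1}}\,\norm{P^\lambda_n(i)}\,\norm{A_1^{-1}} \;\leq\; M^2\,\norm{P^\lambda_n(i)},
\]
and combining with the assumed exponential lower bound on $\norm{\mc P^\lambda_k(j_1)}$ and $k\geq(n-2L_{\max})/L_{\max}$ yields
\[
    \norm{P^\lambda_n(i)} \;\geq\; \frac{c'}{M^2}\,\xi'^{(n-2L_{\max})/L_{\max}} \;=\; \frac{c'}{M^2\xi'^{2}}\,\bigl(\xi'^{1/L_{\max}}\bigr)^{n},
\]
so $P^\lambda\in\mc{UH}$ with $\xi\coloneqq\xi'^{1/L_{\max}}>1$.

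The only subtle step is the second direction: one has to book-keep the boundary factors $A_1,A_2$ carefully and observe that they live in a compact subset of $\mathrm{SL}(2,\R)$, so that their norms (and those of their inverses, which equal their norms in $\mathrm{SL}(2,\R)$) are uniformly bounded. Everything else is a direct application of the sub-cocycle identity and the exponential-norm-growth characterisation of uniform hyperbolicity.
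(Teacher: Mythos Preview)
Your proof is correct, and it takes a genuinely different route from the paper's own argument.

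The paper works directly with \cref{def:UH}: for each implication it explicitly constructs the stable and unstable directions of the target cocycle from those of the given one (by sampling at block boundaries in one direction, and by propagating the block directions through $P^\lambda$ in the other), and then verifies (UH2) by observing that between block boundaries the resonator matrices come from a finite set and that block lengths are uniformly bounded by $L_{\max}$. Your argument instead bypasses the invariant directions entirely and appeals to the norm-growth characterisation \cite[Theorem~1]{zhang2020Uniform}: you only compare the growth of the iterates $\norm{\mc P^\lambda_n(j)}$ and $\norm{P^\lambda_n(i)}$, using the block-boundary identity $\mc P^\lambda_n(j)=P^\lambda_{L_n(j)}(i(j))$ for the forward direction and the factorisation $P^\lambda_n(i)=A_2\,\mc P^\lambda_k(j_1)\,A_1$ with $A_1,A_2$ in a fixed compact subset of $\mathrm{SL}(2,\R)$ for the reverse direction. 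This is cleaner and more quantitative (you obtain explicit constants in terms of $L_{\max}$ and the block library), at the cost of not immediately producing the stable direction $s(0)$ that is later used in \cref{thm:semiinfinite-johnson}; the paper's constructive version does yield that for free. One minor cosmetic point: your bound $k\geq(n-2L_{\max})/L_{\max}$ is only meaningful once $n\geq 2L_{\max}$, but for the finitely many smaller $n$ the trivial estimate $\norm{P^\lambda_n(i)}\geq 1$ (valid for any $\mathrm{SL}(2,\R)$ product) suffices after shrinking the constant $c$, so the argument goes through unchanged.
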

\begin{proof}
Analogously to $\Phi:\ldz \to \lrz$, for any block sequence $\chi\in \ldz$ we define the map $\Phi_\chi':\Z\to \Z$ taking any block index $j\in \Z$ to the index $i$ of the first resonator corresponding to that block. In particular, $\Phi_\chi'$ is recursively defined as 
\[
\Phi_\chi'(0) = 0, \quad \Phi_\chi'(j+1) = \Phi'_\chi(j)+\len(B_{\oo{j+1}}).
\]

\enquote{$\impliedby$}:
Consider a resonator sequence $\alpha = \Phi(\chi)\in \lrz$ and $\lambda\geq0$ such that $P^\lambda$ is uniformly hyperbolic with directions $u,s:\Z\to \rp$. We can then define the unstable and stable block directions $q, p:\Z \to \rp$ as $q(j) = u(\Phi'_\chi(j))$ and $p(j) = s(\Phi'_\chi(j))$. By construction, these sequences satisfy $$\mc P^\lambda(j)q(j) = q(j+1), \mc P^\lambda(j)p(j) = p(j+1),$$ as well as the uniform growth condition (with the same constants, even though they might potentially be sharpened)

\enquote{$\implies$}:
Consider a resonator sequence $\alpha = \Phi(\chi)\in \lrz$ and $\lambda\geq0$ such that $\mc P^\lambda$ is uniformly hyperbolic with directions $u,s:\Z\to \rp$. We define the unstable and stable resonator directions $q, p:\Z \to \rp$ as $q(i) = P_i(0)u(0)$ and $p(i) = P_i(0)s(0)$. By definition, these sequences are $P^\lambda$-invariant. 

It remains to demonstrate their uniform growth. By definition, we have $q(\Phi'_\chi(j)) = u(j)$ and $p(\Phi'_\chi(j)) = s(j)$ and thus the sequences $q$ and $p$ satisfy the uniform growth condition at least at the points $\Phi'_\chi(j)$ for $j\in \Z$. 

But, because $P^\lambda\in \ell^\infty(\Z, \operatorname{SL}(2,\R))$, we know that $\norm{P^\lambda(i)}$ and $\norm{(P^\lambda(i))^{-1}}$ must be uniformly bounded, which together with the fact that there exists a maximal block length $\len_{max} = \max_{d=1,\dots, D} \len (B_d)$ ensures that 
$\norm{P_n(i)\vec{q}(i)}$ and $\norm{P_{-n}(i)\vec{p}(i)}$ cannot grow arbitrarily between the points of known decay $\Phi'_\chi(j)$. This allows us to find constants $C>0$ and $\eta>1$ such that $q$ and $p$ satisfy the uniform growth condition.
\end{proof}

Finally, due to the regularity of the block propagation matrix cocycle, we can apply the invariant cone criterion of uniform hyperbolicity.
\begin{lemma}\label{lem:UHiffCone}
    Let $\chi\in \ldz$ be a pseudo-ergodic block sequence and $\lambda\geq 0$. Then, the associated block propagation matrix cocycle $\mc P^\lambda$ is uniformly hyperbolic if and only if there exists a nonempty open subset $M\subset \rp$ with $\overline{M}\neq \rp$ such that 
    \[
        \overline{\mc P_d^\lambda(M)}\subset M
    \]
    for all $d=1,\dots,D$.
\end{lemma}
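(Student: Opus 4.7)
This is an adaptation of the classical invariant-multicone criterion for uniform hyperbolicity of finite-valued $\on{SL}(2,\R)$-cocycles, see e.g.\ \cite{avila.bochi.ea2010Uniformly,zhang2020Uniform}. The two directions require different techniques and I sketch them separately.

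For ($\Leftarrow$), suppose such an $M$ exists. Fix $j\in\Z$ and consider the nested decreasing family of nonempty compact sets $K_n(j)\coloneqq\mc P_n^\lambda(j-n)\overline{M}\subseteq\overline{M}$, with nestedness immediate from $\mc P_d^\lambda(\overline{M})\subset M\subset\overline{M}$. Because only finitely many block matrices $\mc P_1^\lambda,\dots,\mc P_D^\lambda$ act and each satisfies the strict inclusion $\overline{\mc P_d^\lambda(M)}\subset M$, compactness of $\overline{M}$ yields a uniform contraction rate $\rho<1$ in a suitable Hilbert-type projective metric on $M$; hence $\bigcap_n K_n(j)$ reduces to a single point $u(j)\in\overline{M}$. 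The stable direction $s(j)$ is built symmetrically from $\rp\setminus\overline{M}$, which is strictly $(\mc P_d^\lambda)^{-1}$-invariant since each $\mc P_d^\lambda$ is an orientation-preserving homeomorphism of $\rp$. Invariance (UH1) is immediate, and uniform exponential growth (UH2) follows from the projective contraction combined with $\det\mc P_d^\lambda=1$ and $\sigma_1\sigma_2=1$, along the lines of the last paragraph in the proof of \cref{lem:DSiffUH}.

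For ($\Rightarrow$), pseudo-ergodicity makes the shift orbit $\{T^j\chi:j\in\Z\}$ dense in the compact space $\ldz$, so UH of $\mc P^\lambda$ extends by continuity to the locally constant cocycle $\tilde{\mc P}(\sigma)\coloneqq\mc P_{\sigma_0}^\lambda$ over the full shift $(\ldz,T)$, producing continuous invariant distributions $u,s:\ldz\to\rp$ with disjoint compact images $U\coloneqq u(\ldz)$ and $S\coloneqq s(\ldz)$. The key structural observation is that $u(\sigma)$ depends only on the past $(\sigma_{-1},\sigma_{-2},\dots)$, as it arises from the backward iterates $\lim_{n\to\infty}\tilde{\mc P}_n(T^{-n}\sigma)v$; hence inserting any symbol $d\in\{1,\dots,D\}$ at position $0$ of $\sigma$ produces $\sigma'\in\ldz$ with $u(\sigma')=u(\sigma)$ and $\sigma'_0=d$, so that
\[
\mc P_d^\lambda u(\sigma)=\tilde{\mc P}(\sigma')u(\sigma')=u(T\sigma')\in U,
\]
yielding the absolute invariance $\mc P_d^\lambda U\subseteq U$ for every $d$. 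Additionally, specialising UH to the constant sequence $\sigma^d=(\dots,d,d,d,\dots)\in\ldz$ forces each $\mc P_d^\lambda$ to be individually hyperbolic with attractor $u^d\coloneqq u(\sigma^d)\in U$ and repeller $s^d\coloneqq s(\sigma^d)\in S$. The invariant open set $M$ is then produced by the standard Avila--Bochi multicone construction: one takes $M$ as the interior of the forward orbit of a sufficiently small open neighborhood of $U$ under the semigroup generated by $\{\mc P_1^\lambda,\dots,\mc P_D^\lambda\}$. Uniform hyperbolicity ensures that this orbit remains uniformly separated from $S$ (so $\overline{M}\neq\rp$), while the semigroup-invariant construction together with $\mc P_d^\lambda U\subseteq U$ guarantees the strict inclusion $\overline{\mc P_d^\lambda(M)}\subset M$ for each $d$.

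The main obstacle is the ($\Rightarrow$) direction, and specifically the past-only dependence that upgrades the fiber-dependent invariance $\tilde{\mc P}(\sigma)u(\sigma)=u(T\sigma)$ into the absolute invariance $\mc P_d^\lambda U\subseteq U$; this is what allows a single open subset $M\subset\rp$, rather than a fiber-varying cone field over $\ldz$, to serve as a common invariant cone for all $D$ block propagation matrices simultaneously.
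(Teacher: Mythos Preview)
Your proposal is correct and follows the same route as the paper: reduce to the Avila--Bochi--Yoccoz multicone criterion by using pseudo-ergodicity of $\chi$ to pass from the single-sequence cocycle to the locally constant cocycle over the full shift $\ldz$. The paper's own proof is essentially a one-line citation of \cite[Theorem~2.2]{avila.bochi.ea2010Uniformly} together with the appendix (\cref{sec:dyndefined}) for the pseudo-ergodicity step, whereas you have unpacked both directions of the ABY argument---the nested-intersection/Hilbert-metric construction of $u,s$ for $(\Leftarrow)$ and the past-only dependence of $u$ yielding $\mc P_d^\lambda U\subseteq U$ for $(\Rightarrow)$---which is a faithful and correct expansion of what the paper invokes.
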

\begin{proof}
    This essentially \cite[Theorem 2.2]{avila.bochi.ea2010Uniformly} where instead of considering uniform hyperbolicity for dynamically defined cocycles over the full shift, we consider it for a single pseudo-ergodic sequence. This works because the pseudo-ergodicity of $\chi$ ensures the full shift over the generating set. We refer to \cref{sec:dyndefined} for a detailed discussion of these equivalent formulations.
\end{proof}
We shall call such a set $M$ an \emph{invariant cone}. Notably, the existence of such an invariant cone depends only on the properties of $\mc P_d^\lambda$ for $d=1,\dots,D,$ and does not depend on the block sequence $\chi$.

We now give a concrete sufficient condition for the existence of such an invariant cone. To that end, we shall call the block propagation matrix $\mc P_d^\lambda$ \emph{hyperbolic} if 
\begin{equation}
    \abs{\tr \mc P_d^\lambda}>2,
\end{equation}
where $\tr$ denotes the trace.
\begin{remark}\label{rmk:notUHifnotH}
    This definition is connected to uniform hyperbolicity by the fact that, for any pseudo-ergodic block sequence $\chi \in \ldz$, the block propagation matrix cocycle $\mc P^\lambda(j) = \mc P^\lambda_{\oo{j}}$ can only be uniformly hyperbolic if \emph{all} of the constituent block propagation matrices $\mc P^\lambda_d$ are hyperbolic.

    To see why, suppose there exists a $d\in 1,\dots , D,$ such that $\abs{\tr \mc P^\lambda_d}\leq 2$. Thus, there must exist a $\vec{v}\in \R^2$ with $\norm{\vec{v}}=1$ such that $\norm{\mc P^\lambda_d \vec{v}}=1$. Furthermore, by the pseudo-ergodicity of $\chi$, the cocycle $\mc P^\lambda(j)$ must contain arbitrarily long repetitions of $\mc P^\lambda_d$. This allows us to, for any $n\in \Z$, find a $j_n\in \Z$ such that 
    \[
        \norm{(\mc P^\lambda)_n(j_n) \vec{v}} = \norm{(\mc P^\lambda_d)^n \vec{v}} = 1,
    \]
    which by \cite[Corollary 2]{zhang2020Uniform} prevents $P^\lambda_{\oo{j}}$ from being uniformly hyperbolic. Here, $(\mc P^\lambda)_n(j_n)$ denotes the cocycle iteration, analogously to $T_n^\lambda$ and $P_n^\lambda$. 
\end{remark}

Recall that because $\mc P_d^\lambda\in \operatorname{SL}(2,\R)$, we can identify the block propagation matrices with M\"oebius transformations on the real projective space $\rp$, which in turn is diffeomorphic to the sphere $\rp\simeq \mathbb{S}^1$. In this language, $\mc P_d^\lambda$ is hyperbolic if and only if the corresponding M\"oebius transformation is hyperbolic. Such a transformation then has two distinct fixed points, one source and one sink, which we shall denote by $s(\mc P_d^\lambda)\in \rp$ and $u(\mc P_d^\lambda) \in \rp$, respectively\footnote{Namely, let $(\xi_1, E_1), (\xi_2, E_2)$ be the eigenvalues and eigenspaces of $\mc P_d^\lambda$ such that $\abs{\xi_1}<1<\abs{\xi_2}$. After identifying these dimension $1$ eigenspaces with points in $\rp$, we find that $s(\mc P_d^\lambda)=E_1$ and $u(\mc P_d^\lambda)=E_2$.}.

\begin{figure}
    \begin{subfigure}{0.4\textwidth}
        \centering
        \includegraphics[width=0.5\textwidth]{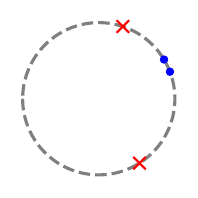}
        \caption{Source-sink condition satisfied}
    \end{subfigure}
    \begin{subfigure}{0.4\textwidth}
        \centering
        \includegraphics[width=0.5\textwidth]{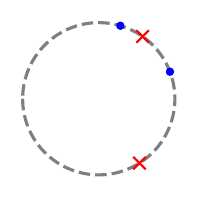}
        \caption{Source-sink condition violated}
    \end{subfigure}
    \caption{Illustration of the source-sink condition in $\rp\simeq S^1\subset \R^2$ for a family of two block propagation matrices. Sources are marked as red crosses and sinks as blue points.}
    \label{fig:source_sink_illustration}
\end{figure}

Now, we are in a position to give a sufficient condition ensuring the existence of an invariant cone for a family of hyperbolic block propagation matrices.
\begin{definition}[Source-sink condition]
    We say that the family of hyperbolic block propagation matrices $\{\mc P_1^\lambda, \dots, \mc P_D^\lambda\}$ satisfies the \emph{source-sink condition} if all the sink fixed points $u(P_d^\lambda),$ for $ d=1,\dots,D,$ lie in the same connected component of $\rp\setminus\{s(P_1^\lambda), \dots, s(P_D^\lambda)\}$.
\end{definition}
This definition is illustrated in \cref{fig:source_sink_illustration} for a family of $D=2$ block propagation matrices.

\begin{corollary}\label{cor:invarcone}
    Let $\lambda\geq0$ and assume that $\mc P_d^\lambda$ is hyperbolic for every $d=1,\dots,D,$ and that the family $\{\mc P_1^\lambda, \dots, \mc P_D^\lambda\}$ satisfies the source-sink condition. Then, the family has an invariant cone.
\end{corollary}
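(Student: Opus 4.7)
The plan is to realise the invariant cone as an open sub-arc of $\rp$ slightly shrunk from the connected component $I$ of $\rp\setminus\{s(\mc{P}_1^\lambda),\ldots,s(\mc{P}_D^\lambda)\}$ that contains every sink; the existence of such an $I$ is exactly the source-sink hypothesis. A naive attempt $M=I$ fails whenever some source $s_d\coloneqq s(\mc{P}_d^\lambda)$ lies on $\partial I$, since $s_d$ is then a fixed point of $\mc{P}_d^\lambda$ and hence lies in $\mc{P}_d^\lambda(\overline{I})\setminus I$. I would therefore choose an open sub-arc $M\subset I$ whose closure $\overline{M}$ still lies in $I$ and contains every sink $u_d\coloneqq u(\mc{P}_d^\lambda)$ in its interior; this is possible because the $D$ sinks form a compact subset of the open arc $I$. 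Writing $\overline{M}=[p_1,p_2]$, the endpoints $p_1,p_2$ are then neither sinks (all sinks lie strictly inside $\overline{M}$) nor sources (all sources lie in $\rp\setminus I$), so they are fixed by no $\mc{P}_d^\lambda$, and by construction $M$ is nonempty, open, and $\overline{M}\neq\rp$.

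The key step is verifying that $\mc{P}_d^\lambda(\overline{M})\subset M$ for every $d$. Since $s_d\notin\overline{M}$, the arc $\overline{M}$ lies in the basin of attraction $\rp\setminus\{s_d\}$ of $u_d$, and $u_d$ lies interior to $\overline{M}$, splitting it into the two sub-arcs $[p_1,u_d]$ and $[u_d,p_2]$ that approach $u_d$ from opposite sides. These sub-arcs therefore lie in the two distinct $\mc{P}_d^\lambda$-invariant arcs of $\rp\setminus\{s_d,u_d\}$, on each of which the hyperbolic M\"oebius dynamics flows monotonically from $s_d$ towards $u_d$. Since neither $p_1$ nor $p_2$ is a fixed point of $\mc{P}_d^\lambda$, this forces $\mc{P}_d^\lambda(p_1)\in(p_1,u_d)\subset M$ and $\mc{P}_d^\lambda(p_2)\in(u_d,p_2)\subset M$.

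To finish, I would invoke that $\mc{P}_d^\lambda\in\on{SL}(2,\R)$ acts on $\rp$ as an orientation-preserving homeomorphism. Consequently $\mc{P}_d^\lambda([p_1,p_2])$ is the unique closed arc from $\mc{P}_d^\lambda(p_1)$ to $\mc{P}_d^\lambda(p_2)$ passing through $\mc{P}_d^\lambda(u_d)=u_d$; by the previous step this arc equals $[\mc{P}_d^\lambda(p_1),u_d]\cup[u_d,\mc{P}_d^\lambda(p_2)]\subset M$. The main subtlety is exactly this last orientation argument: a priori one might worry that the image of $\overline{M}$ wraps the ``wrong way'' around $\rp$ and escapes $M$, but the presence of the fixed sink $u_d$ in the interior of $\overline{M}$, together with orientation preservation, rules this out.
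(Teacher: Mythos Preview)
Your proof is correct and follows essentially the same strategy as the paper: both construct $M$ as an open sub-arc of the sink-containing connected component (the paper's $C$, your $I$) whose closure avoids all sources and contains all sinks. Where you verify $\mc P_d^\lambda(\overline M)\subset M$ by a direct geometric argument (monotone flow toward $u_d$ on the two invariant arcs of $\rp\setminus\{s_d,u_d\}$, plus orientation preservation), the paper instead introduces the linearising chart $\phi_d:\rp\setminus\{s_d\}\to\R$, $\alpha\vec u_d+\beta\vec s_d\mapsto\beta/\alpha$, in which $\mc P_d^\lambda$ acts as multiplication by $\gamma_d=\xi_d^s/\xi_d^u\in(0,1)$, so that the strict contraction $\overline{\gamma_d\phi_d(M)}\subset\phi_d(M)$ is immediate.
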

\begin{proof}
    For every $d=1,\dots, D,$ let $\xi^s_d\in \R, s_d\in \rp$ and $\xi^u_d \in \R, u_d \in \rp$ denote the source and the sink eigenvalue, respectively, and the corresponding fixed point of $\mc P^\lambda_d$. Define
    \begin{align*}
        \phi_d:\rp \setminus \{s_d\} &\to \R\\
        v = \alpha\vec{u}_d + \beta\vec{s}_d & \mapsto \frac{\beta}{\alpha},
    \end{align*}
    where $\vec{u}_d$ and $\vec{s}_d$ are unit–length vectors in the sink and source eigenspaces, respectively. It is easy to see that $\phi_d$ is a homeomorphism for all $d$.

    We have $\phi_d(u_d) = 0$ and 
    \[
        \phi_d(\mc P^\lambda_d v) = \phi_d(\mc P^\lambda_d(\alpha\vec{u}_d + \beta\vec{s}_d)) = \phi_d(\xi_d^u\alpha\vec{u}_d + \xi_d^s\beta\vec{s}_d) = \frac{\xi_d^s}{\xi_d^u} \phi_d(v)\eqqcolon \gamma_d\phi_d(v),
    \]
    for some $v\in \rp\setminus \{s(\mc P^\lambda_d)\}$ and $\abs{\gamma_d}<1$.

    By assumption, all sinks $u_1, \dots, u_D,$ lie in the same connected component $C\subset \rp \setminus S$ for $S \coloneqq \{s_d \mid d=1,\dots, D\}$. We can thus find a nonempty open connected set $M\subset C$ such that $\overline{M}\cap S = \emptyset$ and $\{u_d\mid d =1 ,\dots,D\}\subset M$.

    For every $d=1,\dots, D,$ there exist $a_d<0<b_d$ such that $\phi_d(M)=(a_d, b_d)$. In particular, we must have $0 \in \phi_d(M)$ for all $d=1,\dots,D,$ and hence $a_d<0<b_d$.
    We then have
    \[
        \phi_d(\mc P^\lambda_d M) = \gamma_d \phi_d(M) =  (\gamma_d a_d, \gamma_d b_d),
    \]
    and thus $\overline{\phi_d(\mc P^\lambda_d M)} = [\gamma_d a_d, \gamma_d b_d]\subsetneq (a_d,b_d)=\phi_d(M)$. Since $\phi_d$ is a homeomorphism, this implies $\overline{\mc P^\lambda_d M}\subsetneq  M$ and $M$ is an invariant cone, as desired.
\end{proof}

We are now in a position to prove the following result, providing a complete characterisation of the spectrum $\mc J(\chi)$ in terms of the corresponding block propagation matrices $\{\mc P_1^\lambda, \dots, \mc P_D^\lambda\}$.
\begin{theorem}\label{thm:saxonhutner}
    Consider an infinite block disordered system with blocks $B_1, \dots, B_D$, pseudo-ergodic block sequence $\chi \in \ldz$ and corresponding Jacobi operator $\mc J(\chi)$. 
    
    Let $\lambda\geq 0$ and assume that all block propagation matrices $\mc P_d^\lambda$ are hyperbolic, \emph{i.e.},
    \[
        \abs{\tr \mc P_d^\lambda} > 2 \quad \text{ for all }d=1,\dots, D.
    \]
    Assume further that the family $\{\mc P_1^\lambda, \dots, \mc P_D^\lambda\}$ satisfies the \emph{source-sink condition}. Then, we must have 
    \[
        \lambda\notin \sigma(\mc J(\chi)).
    \]

    Conversely, if we have $\abs{\tr \mc P_d^\lambda} \leq 2$ for any $d=1,\dots D$, then $\lambda \in \sigma(\mc J(\chi))$.
\end{theorem}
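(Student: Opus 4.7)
The proof proceeds by chaining together the reductions assembled in the section and stated as Lemmas and the Corollary above. The plan is to show the forward implication first through the chain
\begin{equation*}
\text{source-sink} \;\Rightarrow\; \text{invariant cone} \;\Rightarrow\; \mc P^\lambda \in \mc{UH} \;\Rightarrow\; P^\lambda \in \mc{UH} \;\Rightarrow\; P^\lambda \in \mc{DS} \;\Rightarrow\; T^\lambda \in \mc{DS} \;\Rightarrow\; \lambda \notin \sigma(\mc J(\chi)),
\end{equation*}
and then to obtain the converse from \cref{rmk:notUHifnotH} by reversing the same chain.

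For the forward direction, I would start by invoking \cref{cor:invarcone}: the assumed hyperbolicity of each $\mc P_d^\lambda$ together with the source-sink condition yields a nonempty open $M\subset \rp$ with $\overline{M}\neq\rp$ satisfying $\overline{\mc P_d^\lambda(M)}\subset M$ for all $d=1,\dots,D$. Since $\chi$ is pseudo-ergodic, \cref{lem:UHiffCone} then promotes this invariant cone to uniform hyperbolicity of the block propagation matrix cocycle $\mc P^\lambda$. Applying \cref{lem:UHiffUH_contracted} transfers this to uniform hyperbolicity of the resonator propagation cocycle $P^\lambda$, and by \cref{lem:DSiffUH} (which uses $P^\lambda\in\operatorname{SL}(2,\R)$), $P^\lambda$ admits a dominated splitting.

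At this point the subtle step is to pass from $P^\lambda\in\mc{DS}$ to $T^\lambda\in\mc{DS}$ via the cohomology \cref{lem:DSiffDS_cohomology}. This requires verifying that the conjugacy sequence $Q(i)$ from \cref{def:propmat} has singular values uniformly bounded away from $0$ and $\infty$. This is where I would spend a moment carefully: because $\chi$ takes values in the finite library $\{1,\dots,D\}$ and each block has finitely many resonators with strictly positive parameters $\ell_k, s_k, v_k$, the sequences $(v_i), (\ell_i), (s_i)$ take only finitely many values, all strictly positive. Hence the entries of $Q(i)$ are drawn from a finite set of invertible matrices, and the uniform singular value bounds $C_1>\sigma_1(Q(i))\geq\sigma_2(Q(i))>C_2>0$ hold trivially. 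Thus \cref{lem:DSiffDS_cohomology} applies, giving $T^\lambda\in\mc{DS}$, and finally \cref{thm:johnson} (Johnson's theorem) concludes that $\lambda\notin\sigma(\mc J(\chi))$.

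For the converse, suppose $|\tr \mc P_d^\lambda|\leq 2$ for some $d$. By \cref{rmk:notUHifnotH} (which uses pseudo-ergodicity of $\chi$ to produce arbitrarily long blocks of $\mc P_d^\lambda$ inside the cocycle and hence a vector whose orbit does not grow), $\mc P^\lambda$ fails to be uniformly hyperbolic. Reading \cref{lem:UHiffUH_contracted,lem:DSiffUH,lem:DSiffDS_cohomology} in the reverse direction, $P^\lambda\notin\mc{UH}$, $P^\lambda\notin\mc{DS}$, and therefore $T^\lambda\notin\mc{DS}$; Johnson's theorem then yields $\lambda\in\sigma(\mc J(\chi))$. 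The main obstacle throughout is really the verification of the boundedness hypothesis for $Q$ in \cref{lem:DSiffDS_cohomology}; once that is in place the result is a clean concatenation of the lemmas summarised in \cref{eq:saxon-hutner_schematic}.
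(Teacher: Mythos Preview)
Your proof is correct and follows precisely the paper's own approach: the chain of reductions you spell out is exactly the one summarised in \cref{eq:saxon-hutner_schematic}, and your converse via \cref{rmk:notUHifnotH} matches the paper's argument verbatim. If anything, your explicit verification that $Q(i)$ ranges over a finite set of invertible matrices (so that \cref{lem:DSiffDS_cohomology} applies) is a detail the paper's proof leaves implicit.
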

\begin{proof}
    By combining \cref{thm:johnson} together with \cref{lem:DSiffUH,lem:DSiffDS_cohomology,lem:UHiffUH_contracted}, we find that $\lambda\notin \sigma(\mc J(\chi))$ if and only if the block propagation matrix cocycle $j\mapsto \mc P^\lambda(j) = \mc P^\lambda_{\oo{j}}$ is uniformly hyperbolic. \cref{lem:UHiffCone,cor:invarcone} then ensure uniform hyperbolicity if the source-sink condition is fulfilled. 
    
    Conversely, by \cref{rmk:notUHifnotH}, we know that any block propagation matrix that is not hyperbolic causes the entire block propagation matrix to not be uniformly hyperbolic, which implies that $\lambda\in \sigma(\mc J(\chi))$.
\end{proof}

The above proof structure can thus be illustrated as follows:
\begin{equation}\label{eq:saxon-hutner_schematic}
    \begin{tikzcd}
	{\lambda\notin\sigma(\mathcal{J}(\chi))} & {T^\lambda\in \mathcal{DS}} \\
	& {P^\lambda\in\mathcal{DS}} \\ 
    &{P^\lambda\in\mathcal{UH}} \\
	& {\mathcal{P}^\lambda\in\mathcal{UH}} & {\{\mc P_1^\lambda, \dots, \mc P_D^\lambda\} \text{\scriptsize \; has an invariant cone}}\\
	& \mc P_1^\lambda, \dots, \mc P_D^\lambda \text{\scriptsize \; hyperbolic} & \mc P_1^\lambda, \dots, \mc P_D^\lambda \text{\scriptsize \; hyperbolic + source-sink condition}
	\arrow["\cref{thm:johnson}",Leftrightarrow, from=1-1, to=1-2]
	\arrow["\cref{lem:DSiffDS_cohomology}", Leftrightarrow, from=1-2, to=2-2]
	\arrow["\cref{lem:DSiffUH}",Leftrightarrow, from=2-2, to=3-2]
	\arrow["\cref{lem:UHiffUH_contracted}",Leftrightarrow, from=3-2, to=4-2]
	\arrow["\cref{lem:UHiffCone}",Leftrightarrow, from=4-2, to=4-3]
	\arrow["\cref{rmk:notUHifnotH}",Rightarrow, from=4-2, to=5-2]
	\arrow["\cref{cor:invarcone}",Leftarrow, from=4-3, to=5-3]
    \end{tikzcd}
\end{equation}

\begin{remark}
    Under certain symmetry assumptions, it is possible to show that the conditions of \cref{thm:saxonhutner} reduce to conditions for the eigenvectors at the band edges of the blocks, which are much easier to check. This and other effects of symmetry in block disordered systems will be the subject of upcoming work.
\end{remark}
The monomer and dimer blocks in \cref{ex:standard_blocks} are specifically chosen such that this source-sink condition is satisfied for all $\lambda$ in the shared bandgap. Therefore, for any pseudo-ergodic sequence of these two blocks, \cref{thm:saxonhutner} ensures that a value $\lambda$ lies in the bandgap of the total system if and only if it lies in the bandgap of both blocks. As can be seen in \cref{fig:saxonhutner_edgemode}(\textsc{a}), this characterisation continues to hold even in the finite case. The justification of this observation will be the subject of the following section.

\section{Semi-infinite and finite Jacobi operators}\label{sec:semiinf_and_finite}
Having characterised the spectrum $\sigma(\mc J(\chi))$ of the bi-infinite Jacobi operator determined by a bi-infinite block sequence $\chi\in\ldz$, we now aim to do so for the semi-infinite and finite case. 

To begin, we must specify the appropriate matrices and operators. For the finite case, following \cref{thm:capapprox}, the appropriate spectral problem is the capacitance matrix problem $(VC-\lambda)\bm u = 0$. To connect the capacitance matrix formulation to the bi-infinite Jacobi operators, we instead consider the similar $VC \sim V^{\frac{1}{2}}CV^{\frac{1}{2}}$, and will find that it only differs at the edges. For the semi-infinite case, we then enforce the same edge values as in the finite case.
\begin{definition}
 Given a finite sequence $\chi_M\in D^M$, we denote the \emph{finite Jacobi matrix} $\mc J_N(\chi_M) \coloneqq V^{\frac{1}{2}}CV^{\frac{1}{2}}\in \R^{N\times N}$. Here, $N\in \N$ corresponds to the length $N$ of the associated resonator sequence $\alpha = \Phi(\chi)\in \mathbb{L}_R(N)$, determined by the block sequence $\chi_M$. From the definitions \cref{eq:matmat,eq:capmat} of $V$ and $C$, it follows that the matrix $\mc J_N(\chi_M) = V^{\frac{1}{2}}CV^{\frac{1}{2}}$ is again tridiagonal and symmetric with bands as in \cref{eq:jacobibands}, except for the fact that the upper left and lower right entries are perturbed:
\begin{equation}\label{eq:physicalbv}
     \bseq{b}{0} = \frac{v_0^2}{\ell_0}\left(\frac{1}{s_0}\right) \quad \text{and} \quad 
        \bseq{b}{N-1} = \frac{v_{N-1}^2}{\ell_{N-1}}\left(\frac{1}{s_{N-2}}\right),
\end{equation}
where the sequences $(v_i)_{i=0}^{N-1}, (s_i)_{i=0}^{N-1}, (\ell_i)_{i=0}^{N-1}$ are determined by the resonator sequence $\alpha\in \mathbb{L}_R(N)$.

Similarly, for a semi-infinite block sequence $\chi_+\in \ldzp$, we define the \emph{positive semi-infinite Jacobi operator} as
\begin{align*}
    \mc J_+(\chi_+):\ell^2(\Z_{\geq 0}) &\to \ell^2(\Z_{\geq 0})\\
    \bseq{v}{i} &\mapsto (J\bm v)^{(i)} = \begin{cases}
        \bseq{a}{i-1}\bseq{v}{i-1} + \bseq{a}{i}\bseq{v}{i+1} + \bseq{b}{i}\bseq{v}{i} & i>0,\\
        \bseq{a}{i}\bseq{v}{i+1} + \bseq{b}{i}\bseq{v}{i} & i=0,\\
    \end{cases}
\end{align*}
where again $\bseq{b}{i}$ and $\bseq{a}{i}$ as in \cref{eq:jacobibands} except for at the edge $\bseq{b}{0} = v_0^2/(\ell_0s_0)$. The \emph{negative semi-infinite Jacobi operator} $\mc J_-(\chi_+):\ell^2(\Z_{\leq 0}) \to \ell^2(\Z_{\leq 0})$ is defined analogously with edge $\bseq{b}{0} = v_0^2/(\ell_0s_{-1})$.
\end{definition}

\subsection{Finite section method}
By adapting the results from \cite{lindner2012finite}, we can connect the spectrum of the bi-infinite Jacobi operators to that of its semi-infinite counterparts.
\begin{proposition}\label{prop: big_equivalence}
    Let $\lambda\geq 0$. The following statements are equivalent:
    \begin{enumerate}[label=(\roman*)]
        \item There exists a pseudo-ergodic $\chi \in \ldz$ such that $\mc J(\chi)-\lambda$ is Fredholm;
        \item $\mc J(\chi)-\lambda$ is Fredholm for all $\chi \in D^{\Z}$;
        \item $\mc J(\chi)-\lambda$ is invertible for all $\chi \in D^{\Z}$;
        \item there exists a pseudo-ergodic $\chi_+ \in D^{\Z_{\geq0}}$ such that $\mc J_\pm(\chi_+)-\lambda$ is Fredholm;
        \item $\mc J_\pm(\chi_+)-\lambda$ is Fredholm for all $\chi_+ \in D^{\Z_{\geq0}}$.
    \end{enumerate}
\end{proposition}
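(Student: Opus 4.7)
The plan is to prove the five-way equivalence by setting up a cyclic chain of implications, using the theory of limit operators for band-dominated operators on $\ell^2(\Z)$ and $\ell^2(\Z_{\geq 0})$ developed by Chandler-Wilde, Lindner, and co-authors. The workhorse result is: a band-dominated operator $A$ is Fredholm if and only if every one of its limit operators is invertible, with a uniform bound on the norms of the inverses. Since $\mc J(\chi)$ is tridiagonal with uniformly bounded entries by \cref{eq:jacobibands}, it is band-dominated, and its limit operators are exactly the bi-infinite Jacobi operators $\mc J(\chi')$ obtained as $\chi' = \lim_{k\to\infty} S^{n_k}\chi$ in the (compact) product topology on $\ldz$, with convergence in the strong operator topology.

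First I would dispatch the trivial implications (iii) $\Rightarrow$ (ii) $\Rightarrow$ (i) and (v) $\Rightarrow$ (iv). The central step is then (i) $\Rightarrow$ (iii): since $\chi$ is pseudo-ergodic, the orbit closure $\overline{\operatorname{Orb}(\chi)}$ equals all of $\ldz$, so the limit operator set of $\mc J(\chi)$ is exactly $\{\mc J(\chi') : \chi' \in \ldz\}$; Fredholmness of $\mc J(\chi) - \lambda$ then forces each $\mc J(\chi') - \lambda$ to be invertible, which is precisely (iii). Note that $\mc J(\chi)$ itself is a limit operator along a constant subsequence, so its invertibility is automatic and (iii) is genuinely obtained, not just Fredholmness.

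For the semi-infinite strand, I would establish (iii) $\Rightarrow$ (v) by observing that any limit operator of $\mc J_\pm(\chi_+)$ obtained by shifting to $\pm\infty$ is a bi-infinite Jacobi operator $\mc J(\chi')$: the boundary contribution in \cref{eq:physicalbv} is a finite-rank perturbation which vanishes in the strong limit. Under (iii) all such $\mc J(\chi') - \lambda$ are invertible, and the uniform bound on their inverses follows from continuity of $\chi'\mapsto \mc J(\chi')$ together with compactness of $\ldz$. Finally, for (iv) $\Rightarrow$ (i), pseudo-ergodicity of $\chi_+$ guarantees that every $\chi \in \ldz$ arises as a subsequential strong limit of shifts $S^{n_k}\chi_+$, so every bi-infinite $\mc J(\chi) - \lambda$ occurs as a limit operator of $\mc J_\pm(\chi_+) - \lambda$; Fredholmness of the latter yields invertibility of each such $\mc J(\chi) - \lambda$, and in particular Fredholmness of $\mc J(\chi)-\lambda$ for any pseudo-ergodic $\chi$, giving (i).

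The hard part will be the careful verification that pseudo-ergodicity of $\chi_+\in\ldzp$ really translates into the orbit closure of $\chi_+$ inside $\ldz$ exhausting all of $\ldz$. This requires a diagonal extraction to assemble arbitrary bi-infinite words from one-sided data, exploiting the finite-type Markov structure identified in \cref{ssec:resonator seq}: every finite block word appears infinitely often in a pseudo-ergodic $\chi_+$, and allowed concatenations suffice to realise any target bi-infinite sequence in the product topology. A secondary subtlety is the passage from $VC$ to the symmetric $V^{1/2}CV^{1/2}$ at the semi-infinite edge, which affects only $\bseq{b}{0}$ and so defines the same Fredholm/invertibility class. Once these points are secured, the argument is a direct adaptation of the finite section scheme for random Jacobi operators from \cite{lindner2012finite, chandler-wilde.lindner2016Coburns}.
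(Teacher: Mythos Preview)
Your proposal is correct and is essentially the approach the paper intends: the paper does not give an independent proof but states the proposition as an adaptation of \cite{lindner2012finite}, and your plan is precisely an unpacking of the limit-operator/finite-section machinery from that reference and \cite{chandler-wilde.lindner2016Coburns}. One minor remark: the diagonal-extraction worry you flag for (iv)$\Rightarrow$(i) is lighter than you suggest, since at the block level $\chi\in\ldz$ all transitions are allowed (the Markov constraint lives on the resonator sequence $\alpha\in\lrz$, not on $\chi$), so realising an arbitrary bi-infinite block word as a limit of shifts of a pseudo-ergodic $\chi_+$ is immediate from compactness of $\ldz$.
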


We note that, due to \cref{thm:johnson} and the arguments of the previous section, all of these statements are further equivalent to 
\begin{enumerate}
    \item[(vi)]  There exists a pseudo-ergodic $\chi \in \ldz$ such that the associated block propagation matrix cocycle $\mc P^\lambda:\Z\to \on{SL}(2,\R)$ is uniformly hyperbolic.
\end{enumerate}

We can rephrase the above result as
\begin{corollary}\label{lem:weylinclusions}
Let $\chi_+\in D^{\Z_{\geq 0}}$ be pseudo-ergodic. Then, for any $\chi'\in \ldz$, it holds that
\[
    \sigma(\mc J(\chi'))\subset \sigma(\mc J_\pm(\chi_+)).
\]

Furthermore, if $\chi'\in \ldz$ is also pseudo-ergodic, then we have
\[
    \sigma_{ess}(\mc J(\chi')) = \sigma(\mc J(\chi')) = \sigma_{ess}(\mc J_\pm(\chi_+))\footnote{Here, $\sigma_{ess}(A)$ denotes the \emph{essential spectrum} for some self-adjoint operator $A$, defined as the spectrum of $A$ without the isolated eigenvalues of finite multiplicity.}.
\]
\end{corollary}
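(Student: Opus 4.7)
Both assertions are essentially bookkeeping on top of \cref{prop: big_equivalence}, so the strategy is to unpack that proposition directly in the definitions of spectrum and essential spectrum, rather than to re-derive anything from the cocycle picture.

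For the inclusion $\sigma(\mc J(\chi'))\subset \sigma(\mc J_\pm(\chi_+))$ I argue by contrapositive. Suppose $\lambda \notin \sigma(\mc J_\pm(\chi_+))$, so that $\mc J_\pm(\chi_+)-\lambda$ is invertible, hence in particular Fredholm. Since $\chi_+$ is pseudo-ergodic, this is precisely condition (iv) of \cref{prop: big_equivalence}, and the equivalence (iv)$\Leftrightarrow$(iii) forces $\mc J(\chi)-\lambda$ to be invertible for every $\chi\in \ldz$. Specialising to $\chi'$ yields $\lambda\notin \sigma(\mc J(\chi'))$, which is the desired contrapositive.

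For the chain $\sigma_{ess}(\mc J(\chi'))=\sigma(\mc J(\chi'))=\sigma_{ess}(\mc J_\pm(\chi_+))$ under the additional assumption that $\chi'$ is also pseudo-ergodic, I treat the two equalities separately. For the first one, only the inclusion $\sigma(\mc J(\chi'))\subset \sigma_{ess}(\mc J(\chi'))$ is nontrivial: if some $\lambda$ lay in $\sigma(\mc J(\chi'))\setminus \sigma_{ess}(\mc J(\chi'))$, then $\mc J(\chi')-\lambda$ would be Fredholm, and pseudo-ergodicity of $\chi'$ lets me invoke (i)$\Leftrightarrow$(iii) of \cref{prop: big_equivalence} to conclude that $\mc J(\chi')-\lambda$ is actually invertible, contradicting $\lambda\in\sigma(\mc J(\chi'))$. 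For the second equality, I note that $\lambda\notin \sigma_{ess}(\mc J_\pm(\chi_+))$ is the same as $\mc J_\pm(\chi_+)-\lambda$ being Fredholm, which by (iv)$\Leftrightarrow$(iii) (using pseudo-ergodicity of $\chi_+$) is equivalent to $\mc J(\chi)-\lambda$ being invertible for every $\chi\in\ldz$, in particular for $\chi'$; conversely, if $\lambda\notin\sigma(\mc J(\chi'))$ then $\mc J(\chi')-\lambda$ is invertible and hence Fredholm, so by (i)$\Leftrightarrow$(v) applied to the pseudo-ergodic $\chi'$ the operator $\mc J_\pm(\chi_+)-\lambda$ is Fredholm for every $\chi_+$.

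I do not expect any real obstacle; the argument is a translation exercise between the analytic notions (invertibility, Fredholmness) and the spectral notions ($\sigma$, $\sigma_{ess}$). The only subtlety worth flagging is that the inclusion in the first assertion may be strict, since $\mc J_\pm(\chi_+)$ can carry isolated eigenvalues of finite multiplicity coming from edge-localised modes -- precisely the phenomenon that motivates passing to $\sigma_{ess}$ in the second assertion and that will be characterised explicitly in \cref{thm:semiinfinite-johnson}.
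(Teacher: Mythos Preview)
Your argument is correct and is exactly the unpacking of \cref{prop: big_equivalence} that the paper intends: the corollary is introduced there merely as a rephrasing of that proposition, with no separate proof given, and your contrapositive bookkeeping between invertibility/Fredholmness and $\sigma$/$\sigma_{ess}$ is precisely what is needed to fill in the details.
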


Thus, the spectrum of the semi-infinite $\mc J(\chi_+)$ contains at least the spectrum of the bi-infinite operator. 

\subsection{Edge modes for semi-infinite Jacobi operators}
By the previous subsection, to fully describe the spectrum $\sigma(\mc J_+(\chi_+))$ it remains to characterise eigenvalues $\lambda \in \sigma(\mc J(\chi_+)) \setminus \sigma(\mc J(\chi'))$ for some $\chi'\in \ldz$ such that the semi-infinite truncation of $\chi'$ equals $\chi_+$, \emph{i.e.}, $(\chi')^{(i)} = \chi_+^{(i)}$ for all $i\in \Z_{\geq0}$. We will denote this by $\chi_+\prec \chi'$.

The crucial difference between $\mc J(\chi)$ and $\mc J_+(\chi_+)$ is that $\mc J_+(\chi_+)$ exhibits an edge at its left end, breaking translation invariance. This allows for the existence of \emph{edge-modes} exponentially localised at this edge. 
 By the following result, the additional modes in $\sigma(\mc J(\chi_+)) \setminus \sigma(\mc J(\chi'))$ are exactly such edge modes, which, due to the physical origin of the propagation matrices, can be explicitly characterised as eigenvalues of multiplicity one. They occur exactly when the stable, decaying direction at the left end of the array $s(0)$ exactly matches the Neumann boundary condition $(1,0)^\top$.


\begin{theorem}\label{thm:semiinfinite-johnson}
    Let $\chi_+\in D^{\Z_{\geq0}}$ be pseudo-ergodic and $\chi'\in \ldz$ such that $\chi_+\prec \chi'$. If $\lambda\in \sigma(\mc J_+(\chi_+)) \setminus \sigma(\mc J(\chi'))$, then the block propagation matrix cocycle $\mc P^\lambda:\Z\to \on{SL}(2,\R)$ corresponding to the bi-infinite sequence $\chi'$ is uniformly hyperbolic and $(1,0)^\top\in s(0)$, where $s$ denotes the stable direction of $\mc P^\lambda$.

    Furthermore, in this case $\lambda \in \sigma(\mc J_+(\chi_+))$ is an eigenvalue of multiplicity one, with the eigenvector $\bm v\in \ell^2(\Z_{\geq0})$ explicitly given by 
    \[
        \begin{pmatrix}
            \bseq{v}{i}\\
            \bseq{v}{i-1}
        \end{pmatrix} = Q^{-1}(i)P^\lambda_i(0)\begin{pmatrix}
            1\\0
        \end{pmatrix}.
    \]

    Conversely, consider any $\lambda\notin \sigma(\mc J(\chi'))$ such that $(1,0)^\top\in s(0)$. Then $\lambda\in \sigma(\mc J_+(\chi_+))$ is an eigenvalue with multiplicity one and the corresponding eigenvector as above.
\end{theorem}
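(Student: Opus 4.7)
The strategy is to combine the bi-infinite spectral characterisation of \cref{thm:saxonhutner} with an analysis of the Neumann-type left edge of $\mc J_+(\chi_+)$. The underlying picture is that a formal solution of $(\mc J(\chi')-\lambda)\bm v=0$ on $\Z$ restricts to an $\ell^2(\Z_{\geq 0})$ eigenvector of $\mc J_+(\chi_+)-\lambda$ precisely when (i) it decays as $i\to+\infty$, that is, the propagation-matrix datum at $i=0$ lies in the stable line $s(0)$, and (ii) it satisfies the modified edge equation at $i=0$, which turns out to be equivalent to that same datum being proportional to $(1,0)^\top$.

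\textbf{Converse direction.} Assume $\lambda\notin\sigma(\mc J(\chi'))$ and $(1,0)^\top\in s(0)$. By \cref{thm:johnson} combined with \cref{lem:DSiffUH,lem:DSiffDS_cohomology,lem:UHiffUH_contracted}, the resonator propagation cocycle $P^\lambda$ is uniformly hyperbolic with stable direction $s$, so setting $\bm u(i):=P^\lambda_i(0)(1,0)^\top$ yields $\norm{\bm u(i)}\leq C\eta^{-i}$ by (UH2). The plan is to define $\bm w(i):=Q^{-1}(i)\bm u(i)$ for $i\geq 1$ and read off $(\bseq{v}{i},\bseq{v}{i-1})^\top=\bm w(i)$, taking $\bseq{v}{0}$ from the second component of $\bm w(1)$ (thus avoiding $Q(0)^{-1}$, which depends on the extension $\chi'$). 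Uniform boundedness of $Q^{-1}$, as in the proof of \cref{lem:DSiffDS_cohomology}, then gives $\bm v\in\ell^2(\Z_{\geq 0})$, and the interior recurrence $(\mc J_+(\chi_+)\bm v)^{(i)}=\lambda\bseq{v}{i}$ for $i\geq 1$ follows from $\bm u(i+1)=P^\lambda(i)\bm u(i)$ via the cohomology $P^\lambda(i)=Q(i+1)T^\lambda(i)Q^{-1}(i)$.

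\textbf{Edge identification.} The only step requiring genuine computation is the boundary equation at $i=0$. Because the modified edge value $\bseq{b}{0}=v_0^2/(\ell_0 s_0)$ from \cref{eq:physicalbv} differs from the bi-infinite bulk value $(v_0^2/\ell_0)(1/s_{-1}+1/s_0)$ by exactly $v_0^2/(\ell_0 s_{-1})$, the $\mc J_+(\chi_+)$ edge equation is equivalent to the bi-infinite recurrence at $i=0$ for a formal value $\bseq{v}{-1}$ with $\bseq{v}{-1}/\bseq{v}{0}=v_0\sqrt{\ell_{-1}}/(v_{-1}\sqrt{\ell_0})$. A direct calculation then shows that applying $Q(0)$ to $(\bseq{v}{0},\bseq{v}{-1})^\top$ with this ratio produces a vector whose second component vanishes, i.e.\ one proportional to $(1,0)^\top$. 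Physically this recovers \cref{rmk:propmat_intuition}: in the capacitance approximation the mode is constant to the left of the leftmost resonator, so the exterior derivative $u'(x_0^\iL)$ vanishes.

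\textbf{Forward direction and multiplicity.} For $\lambda\in\sigma(\mc J_+(\chi_+))\setminus\sigma(\mc J(\chi'))$, \cref{lem:weylinclusions} gives $\sigma_{ess}(\mc J_+(\chi_+))=\sigma(\mc J(\chi'))$, so $\lambda$ is an isolated eigenvalue with some nonzero $\bm v\in\ell^2(\Z_{\geq 0})$, and $P^\lambda\in\mc{UH}$ by \cref{thm:saxonhutner}. Taking $\bseq{v}{-1}$ from the edge equation as above, the edge identification gives $\bm u(0)=Q(0)(\bseq{v}{0},\bseq{v}{-1})^\top\parallel(1,0)^\top$. Square-summability of $\bm v$, together with two-sided boundedness of $Q$ and $Q^{-1}$, forces $\bm u(0)\in s(0)$, since any nonzero unstable component would grow exponentially under forward iteration of $P^\lambda$, violating the decay of $\bm v$. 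Hence $(1,0)^\top\in s(0)$. Multiplicity one follows because $s(0)$ is one-dimensional and the eigenvector is determined entirely by $\bm u(0)$ through the explicit formula of the statement. The principal difficulty throughout is the algebraic matching of the semi-infinite edge condition with the $(1,0)^\top$ initial datum; once this is established, the rest of the proof is an assembly of the uniform-hyperbolicity and Weyl-type machinery of the previous sections.
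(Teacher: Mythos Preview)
Your proposal is correct and follows essentially the same route as the paper: both use \cref{lem:weylinclusions} to reduce to a genuine eigenvalue, both perform the key edge computation showing that the physical boundary condition at $i=0$ forces the propagation-matrix datum to be $(1,0)^\top$, and both use uniform hyperbolicity of $P^\lambda$ together with two-sided boundedness of $Q$ to equate $\ell^2$-decay with membership in $s(0)$. The only notable difference is presentational---you treat the converse first and phrase the edge identification via a phantom $\bseq{v}{-1}$ with a prescribed ratio, whereas the paper computes $(P^\lambda(0))^{-1}Q(1)(\bseq{v}{1},\bseq{v}{0})^\top$ directly; these are the same calculation. One small slip: in the forward direction you invoke \cref{thm:saxonhutner} to obtain $P^\lambda\in\mc{UH}$, but that theorem carries extra hypotheses (hyperbolicity of each $\mc P_d^\lambda$ and the source--sink condition) not assumed here; the correct citation is the chain \cref{thm:johnson} plus \cref{lem:DSiffUH,lem:DSiffDS_cohomology,lem:UHiffUH_contracted}, exactly as you already did in the converse direction.
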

\begin{proof}
    From $\lambda\notin \sigma(\mc J(\chi'))$, we immediately get from \cref{thm:johnson} together with \cref{lem:DSiffUH,lem:DSiffDS_cohomology,lem:UHiffUH_contracted} that the block propagation matrix cocycle $\mc P^\lambda:\Z\to \on{SL}(2,\R)$ corresponding to the bi-infinite sequence $\chi'$ must be uniformly hyperbolic.

    For the second part, we first note that, because $\chi_+$ is pseudo-ergodic and $\chi_+\prec\chi'$ we also have $\chi'$ pseudo-ergodic. Consequently, by \cref{lem:weylinclusions} we have $\lambda\notin \sigma(\mc J(\chi'))$ implies $\lambda\notin \sigma_{ess}(\mc J_+(\chi_+))$ and therefore, there must exist a $\bm v \in \ell^2(\Z_{\geq 0})$ such that $(\mc J_+(\chi_+) - \lambda)\bm v = 0$.

    Recall that we chose $\mc J_+(\chi_+)$ to have the edge values \cref{eq:physicalbv} and thus get the equations
    \[
        \bseq{a}{0}\bseq{v}{1} + (\bseq{b}{0}-\lambda)\bseq{v}{0} = 0
    \]
    and
    \[
        \bseq{a}{i-1}\bseq{v}{i-1} + \bseq{a}{i}\bseq{v}{i+1} + (\bseq{b}{i}-\lambda)\bseq{v}{i} = 0 \quad \text{for all }i\in \N.
    \]

    Without loss of generality, we may choose $\bseq{v}{0} = c\in \R$ and thus find $$(\bseq{v}{1}, \bseq{v}{0})^\top = c(-\frac{\bseq{b}{0}-\lambda}{\bseq{a}{0}},1)^\top$$ from the first equation and 
    \[
        \begin{pmatrix}
            \bseq{v}{i+1}\\
            \bseq{v}{i}
        \end{pmatrix} = cT_i^\lambda(1)\begin{pmatrix}
            -\frac{\bseq{b}{0}-\lambda}{\bseq{a}{0}}\\
            1
        \end{pmatrix}
    \]
    from the latter equations. Here, $T^\lambda_i(1)$ denotes the transfer matrix cocycle associated with $\mc J(\chi')$ as defined in \cref{eq:transfermat_cocycle_iteration}. Notably, even though $\bm v$ is an eigenvector of $\mc J_+(\chi_+)$, it is still determined by the transfer matrix cocycles of $\chi'$ as the physical boundary conditions leave $\bseq{a}{i}$ and $\bseq{b}{i+1}$ unchanged for $i\in \Z_{\geq 0}$. In particular, the above characterisation also ensures that the $\lambda$-eigenspace of $\mc J_+(\chi_+)$ has size one, as it completely determines the eigenvector up to the factor $c$.

    It remains to connect this characterisation to the resonator propagation matrix cocycle. To that end, in line with \cref{rmk:propmat_intuition}, we define
    \[
        \begin{pmatrix}
            u(x_i^{\iL})\\
            u'(x_i^{\iL})
        \end{pmatrix} \coloneqq Q(i) \begin{pmatrix}
            \bseq{v}{i}\\
            \bseq{v}{i-1}
        \end{pmatrix} \quad \text{for }i\in \N,
    \]
    and consequently have 
    \[
        \begin{pmatrix}
            u(x_{i+1}^{\iL})\\
            u'(x_{i+1}^{\iL})
        \end{pmatrix} = P^\lambda(i) \begin{pmatrix}
            u(x_i^{\iL})\\
            u'(x_i^{\iL}) 
        \end{pmatrix}\quad \text{for }i\in \N,
    \]
    where $P^\lambda(i)$ is the resonator propagation matrix cocycle associated with $\mc J(\chi')$, introduced in \cref{def:propmat}. Now, we may use $P^\lambda(0)$ define $(u(x_{0}^{\iL}), u'(x_{0}^{\iL}))^\top$ and find
    \[
        \begin{pmatrix}
            u(x_{0}^{\iL})\\
            u'(x_{0}^{\iL})
        \end{pmatrix} \coloneqq (P^\lambda(0))^{-1}\begin{pmatrix}
            u(x_{1}^{\iL})\\
            u'(x_{1}^{\iL})
        \end{pmatrix} = c(P^\lambda(0))^{-1}Q(1) \begin{pmatrix}
            -\frac{\bseq{b}{0}-\lambda}{\bseq{a}{0}}\\
            1
        \end{pmatrix} = c'\begin{pmatrix}
            1\\
            0
        \end{pmatrix},
    \]
    where the last equality follows after multiplying out the respective matrices.

    Finally, we note that because the family of matrices $Q(i)$ only takes finitely many distinct values, we have $\bm v\in \ell^2(\Z_{\geq 0})$ if and only if $(\norm{(u(x_i^{\iL}), u'(x_i^{\iL}))^\top})_{i=0}^{\infty}\in \ell^2(\Z_{\geq 0})$. By \cref{lem:UHiffUH_contracted}, we know that the resonator propagation matrix cocycle $P^\lambda$ is uniformly hyperbolic if the block propagation matrix cocycle $\mc P^\lambda$ is and from the proof of that lemma, it follows that their respective stable directions both take the same value at zero, $s(0)\in \rp$. Consequently, $\bm v$ can only be $\ell^2$-localised and thus an eigenvector if $(u(x_0^{\iL}), u'(x_0^{\iL}))^\top = c'(1,0)^\top \in s(0)$, as desired.

    The converse direction follows immediately from the characterisation given above.
\end{proof}
There is a straightforward criterion that prevents the existence of edge modes closely related to the source-sink condition. We again use $s(\mc P_d^\lambda)\in \rp$ and $u(\mc P_d^\lambda) \in \rp$ to denote the unstable and stable block propagation matrix fixed points, respectively. 
\begin{corollary}\label{cor:edgemode_characterisation}
    Suppose that the family of block propagation matrices $\mc P_1^\lambda, \dots, \mc P_D^\lambda,$ satisfies the assumptions of \cref{cor:invarcone}.
    Then, if $(1,0)^\top$ lies in the maximal connected component of $\rp\setminus \{s(\mc P_1^\lambda), \dots s(\mc P_D^\lambda)\}$ containing all $u(\mc P_1^\lambda), \dots, u(\mc P_D^\lambda)$, we cannot have $(1,0)^\top \in s(0)$ for any associated block propagation matrix cocycle determined by a block sequence $\chi \in \ldz$.
\end{corollary}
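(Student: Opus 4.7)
The plan is to revisit the construction of the invariant cone $M$ from the proof of \cref{cor:invarcone}, enlarge it so that it additionally contains the projective direction $(1,0)^\top$, and then observe that the stable direction $s(0)$ of any associated block propagation cocycle must lie in the complement $\rp\setminus\overline M$. Together these immediately give $(1,0)^\top \notin s(0)$.

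For the cone construction, note that the proof of \cref{cor:invarcone} actually establishes more than is stated there: any nonempty open connected $M\subset C$, where $C\subset\rp\setminus\{s(\mc P_d^\lambda)\}_d$ is the arc isolated by the source-sink condition, that satisfies $\overline M \cap \{s(\mc P_d^\lambda)\}_d = \emptyset$ and $\{u(\mc P_d^\lambda)\}_d\subset M$ is automatically a strictly forward-invariant common cone. Indeed, in each chart $\phi_d$ centered at $u(\mc P_d^\lambda)$, such an $M$ is an open interval around $0$, on which $\mc P_d^\lambda$ acts as a strict contraction $t\mapsto \gamma_d t$ with $|\gamma_d|<1$, yielding $\overline{\mc P_d^\lambda M}\subsetneq M$. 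By hypothesis $(1,0)^\top \in C$, and since the finite set $\{u(\mc P_d^\lambda)\}_d \cup \{(1,0)^\top\}$ is compactly contained in the open arc $C$, I can choose $M$ so that it additionally contains $(1,0)^\top$.

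Now fix any block sequence $\chi\in\ldz$. By \cref{lem:UHiffCone} the cocycle $\mc P^\lambda:j\mapsto \mc P^\lambda_{\chi(j)}$ is uniformly hyperbolic, and the stable direction $s:\Z\to\rp$ is well-defined. The key observation is that the open set $M^*:=\rp\setminus\overline M$ is itself a strictly forward-invariant common cone for the inverse family $\{(\mc P_d^\lambda)^{-1}\}_d$: the strict inclusion $\overline{\mc P_d^\lambda M}\subset M$ can be rewritten $\mc P_d^\lambda \overline M \subset M$, and taking preimages yields $\overline M \subset (\mc P_d^\lambda)^{-1} M$, which in complementary form reads $\overline{(\mc P_d^\lambda)^{-1} M^*}\subset M^*$. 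Applying \cref{lem:UHiffCone} to the inverse cocycle then identifies its unstable direction — which coincides with the stable direction $s$ of the original cocycle — as lying in $M^*$ at every index. Hence $s(0)\in\rp\setminus\overline M$, and combined with $(1,0)^\top \in M$ by construction, we obtain $(1,0)^\top \notin s(0)$.

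The only step that requires genuine care is the dualisation of the invariant cone to the inverse family; the rest is assembly. The dualisation rests on the strict compact containment $\overline{\mc P_d^\lambda M}\subset M$ together with $\overline M\neq \rp$, which together guarantee that the complement $M^*$ is not only forward-invariant under $(\mc P_d^\lambda)^{-1}$ but strictly so. Everything else follows by feeding this $M^*$ back into the invariant-cone criterion \cref{lem:UHiffCone} applied to the inverse cocycle and using that its unstable direction is the stable direction of $\mc P^\lambda$.
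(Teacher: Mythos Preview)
Your argument is correct and takes a genuinely different route from the paper. The paper argues by contradiction: it fixes the minimal closed arc $S$ spanned by the sources and its complement $U=\rp\setminus S$, and shows directly that $s(j)\in S$ for all $j$. The contradiction runs as follows: if $s(j')\in U$ for some $j'$, choose an invariant cone $M\ni s(j')$ with $\overline M\subset U$, observe that each $\mc P_d^\lambda$ acts as a uniform contraction on $M$ in the angle metric $d$, and iterate forward to force $d(s(j'+n),u(j'+n))\to 0$, violating the angle separation between stable and unstable directions that uniform hyperbolicity guarantees. Since $(1,0)^\top\in U$ while $s(0)\in S$, the conclusion follows.

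Your dualisation is more conceptual: complementing $M$ to a cone $M^*$ for the inverse family and then reading off $s(0)\in M^*$ as the unstable direction of the time-reversed cocycle avoids the explicit contraction estimate in the angle metric. The trade-off is that you invoke a fact the paper never isolates, namely that the unstable direction of a uniformly hyperbolic cocycle lies inside any strictly forward-invariant cone. This is standard (the nested compacts $\mc P_n^\lambda(j-n)\,\overline{M^*}$ decrease to the singleton $\{s(j)\}$ inside $M^*$), but \cref{lem:UHiffCone} as stated only gives the equivalence between uniform hyperbolicity and the \emph{existence} of a cone; it says nothing about where the invariant directions sit. Your sentence citing it for that purpose is therefore a slight overreach, and one line making the nested-intersection argument explicit would close the gap completely.
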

\begin{proof}
Consider hyperbolic block propagation matrices $\{\mc P^\lambda_1,\dots \mc P^\lambda_D\}$ satisfying the source-sink condition and a sequence $\chi\in \ldz$ inducing a block propagation matrix cocycle $\mc P^\lambda(j), j\in \Z$. By \cref{cor:invarcone}, we know that this cocycle is uniformly hyperbolic and we denote $s(j), u(j)\in \rp, j\in \Z$ the stable and unstable directions, respectively.

By the source-sink condition, we can find connected sets $S, U \subset \rp$ such that
$$\{s(\mc P_1^\lambda), \dots, s(\mc P_1^\lambda)\}\subset S$$ is closed and minimal, $\{u(\mc P_1^\lambda), \dots, u(\mc P_1^\lambda)\}\subset U$ is open and maximal and $\rp = S \sqcup U$.

\textbf{Claim:} $s(j)\in S$ for all $j\in \Z$.
\begin{proof}
    Assume by contradiction that there exists a $j'\in \Z$ such that $s(j') \in \rp \setminus S = U$. Therefore, there exists an open set $s(j') \in M$ with $\overline{M}\subset U$, which by \cref{cor:invarcone} can be chosen as the invariant cone of the cocycle. On this cone, the map $\mc P^\lambda(j)$ is a strict contraction and therefore there exists a uniform $\tau\in (0,1)$ such that $d(\mc P^\lambda(j)v_1, \mc P^\lambda(j)v_2)\leq \tau d(v_1, v_2)$ for any $v_1, v_2 \in M$ and $j\in \N$. Here, $d$ denotes the angle metric on $\rp$ defined as defined in \cref{eq:prmetric}. Repeatedly applying the above argument yields $$d(\mc P_n^\lambda(j)v_1, \mc P_n^\lambda(j)v_2)\leq \tau^n d(v_1, v_2).$$ Furthermore, we have $u(j) \in M$ for all $j\in \Z$ because $\mc P_n(j)v\to u(j)$ in $\rp$ for any $s(j) \neq v\in M$. 

    Therefore, for any $\varepsilon>0$, we could find $n\in \N$ such that $$d(s(j'+n), u(j'+n)) = d(\mc P_n(j')s(j'), \mc P_n(j')u(j')) \leq \tau^n d(s(j'), u(j')) < \varepsilon,$$ a contradiction to the separation of stable and unstable directions (see \cite[Lemma 2]{zhang2020Uniform}).
\end{proof}

Now, if $(1,0)^\top\in U$, then we can find an open invariant cone such that $(1,0)^\top\in M$ and $\overline{M}\subset U$. By the above claim, we have $s(j)\in S$ for all $j\in \Z$ and $S\cap M = \emptyset$ ensuring, in particular, $s(0) \neq (1,0)^\top$.
\end{proof}

These arguments can be carried out completely analogously for the negative semi-infinite Jacobi operators $\mc J_-(\chi_+)$.

Finally, in this subsection, we state what can essentially be regarded as Coburn's lemma for semi-infinite resonator systems. This result is not as strong as the \emph{stochastic Coburn's lemma} obtained in \cite{chandler2016coburn}, which is due to the relaxed notion of pseudo-ergodicity. Nonetheless, it is still able to enforce the following condition on edge modes: depending on the material parameters, at any frequency, disordered systems might support an eigenmode supported at the left or right edge of the system, but never at both.
\begin{theorem}\label{thm:coburn}
    Suppose that there exists some pseudo-ergodic $\chi' \in \ldz$ such that $\mc J(\chi) - \lambda$ is Fredholm. Then, for every $\chi_+ \in \ldzp$, either $\mc J_+(\chi_+)-\lambda$ or $\mc J_-(\chi_+)-\lambda$ is invertible.
\end{theorem}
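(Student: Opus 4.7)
The plan is to argue by contradiction, assuming that neither $\mc J_+(\chi_+)-\lambda$ nor $\mc J_-(\chi_+)-\lambda$ is invertible. The hypothesis together with \cref{prop: big_equivalence} yields two crucial facts: via (i) $\Leftrightarrow$ (v), both $\mc J_\pm(\chi_+)-\lambda$ are Fredholm, and via (i) $\Leftrightarrow$ (iii), $\mc J(\chi'')-\lambda$ is invertible for \emph{every} $\chi''\in \ldz$. Since $\mc J_\pm(\chi_+)$ are self-adjoint, being Fredholm and non-invertible forces each to have a nontrivial kernel.

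Next, I would extend $\chi_+$ to a bi-infinite sequence $\tilde\chi\in \ldz$ that agrees with $\chi_+$ on the positive side and, on the negative side, reproduces the resonator sequence underlying $\mc J_-(\chi_+)$, so that the Jacobi matrix of $\mc J(\tilde\chi)$ coincides with those of $\mc J_\pm(\chi_+)$ at every interior point and differs only through the edge perturbations at $i=0$. Invertibility of $\mc J(\tilde\chi)-\lambda$ combined with \cref{thm:johnson} and the chain \cref{lem:DSiffDS_cohomology,lem:DSiffUH,lem:UHiffUH_contracted} shows that the associated block propagation matrix cocycle $\mc P^\lambda$ is uniformly hyperbolic. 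In particular, its stable and unstable directions at zero, $s(0)$ and $u(0)$, are distinct points of $\rp$ (see \cite[Lemma 2]{zhang2020Uniform}).

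Now I would mimic the eigenmode analysis of \cref{thm:semiinfinite-johnson} in both directions. A nontrivial kernel element of $\mc J_+(\chi_+)-\lambda$, written in the physical propagation coordinates $(u(x_i^{\iL}),u'(x_i^{\iL}))^\top$, has initial state proportional to $(1,0)^\top$ at $i=0$, and its $\ell^2$-decay as $i\to +\infty$ forces $(1,0)^\top\in s(0)$, exactly as in the proof of \cref{thm:semiinfinite-johnson}. Running the same argument for $\mc J_-(\chi_+)-\lambda$ — now parametrising an eigenvector via the edge equation $\bseq{a}{-1}\bseq{v}{-1}+(\bseq{b}{0}-\lambda)\bseq{v}{0}=0$ and iterating the cocycle \emph{backward} from $i=0$ — the identical computation produces the initial state $(1,0)^\top$ again, but $\ell^2$-decay as $i\to -\infty$ now requires it to lie in the direction contracted by the inverse cocycle, that is, in $u(0)$. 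Since $s(0)\neq u(0)$, the ray $\overline{(1,0)^\top}\in\rp$ cannot belong to both simultaneously, contradicting the assumed non-invertibility of both operators.

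The main obstacle is the second half of the previous paragraph: I need a clean negative-side analogue of \cref{thm:semiinfinite-johnson} so that the Neumann-type initial datum $(1,0)^\top$ is characterised by \emph{unstable} rather than stable direction at $0$. This is essentially bookkeeping — re-running the proof of \cref{thm:semiinfinite-johnson} with the edge at the right end and recognising that $\ell^2$-decay toward $-\infty$ is selected by the inverse cocycle's stable direction, which is $u$ — but it requires some care with conventions since the paper's $\mc J_-$ is defined on $\ell^2(\Z_{\leq 0})$ with its own edge coefficient $\bseq{b}{0}=v_0^2/(\ell_0 s_{-1})$. Once this analogue is set up, the contradiction and hence the theorem follow immediately.
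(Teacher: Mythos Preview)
Your argument is correct and is in fact the alternative proof that the paper itself sketches in the Remark immediately following \cref{thm:finite_spectra}: both edge modes would force $\overline{(1,0)^\top}=s(0)=u(0)$ for the block propagation cocycle of the doubled sequence, contradicting the angle separation of a uniformly hyperbolic cocycle. One small point of care: \cref{thm:semiinfinite-johnson} as stated assumes $\chi_+$ is pseudo-ergodic, which you cannot assume here; but you correctly propose to \emph{mimic} its computation rather than invoke it, and indeed the pseudo-ergodicity in that proof is only used to obtain Fredholmness, which you already have from \cref{prop: big_equivalence}(v).

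The paper's own proof takes a more elementary and self-contained route that avoids the propagation-matrix machinery entirely. It glues a kernel element $\bm x$ of $\mc J_+(\chi_+)-\lambda$ and a kernel element $\bm y$ of $\mc J_-(\chi_+)-\lambda$ into a candidate $\bm z=(\ldots,\beta\bm y,\alpha\bm x,\ldots)\in\ell^2(\Z)$ and computes $(\mc J(\chi)-\lambda)\bm z$ directly; the only possibly nonzero entries sit in rows $-1$ and $0$, and the resulting $2\times 2$ linear system in $(\alpha,\beta)$ is shown to have vanishing determinant for the physical edge coefficients \cref{eq:physicalbv}, yielding a nontrivial $\bm z$ in the kernel of the invertible $\mc J(\chi)-\lambda$. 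This approach is shorter and needs neither Johnson's theorem nor the negative-side analogue of \cref{thm:semiinfinite-johnson}; your approach, by contrast, is more conceptual and makes transparent \emph{why} the physical boundary values are exactly what is needed, namely because both edges impose the same Neumann datum $(1,0)^\top$ in propagation coordinates.
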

\begin{proof}
    Fredholmness of $\mc J_\pm(\chi_+)-\lambda$ is an immediate consequence of Proposition \ref{prop: big_equivalence}. 
    
    By contradiction, suppose that there exists a sequence $\chi_+ \in \ldzp$ such that $\mc J_+(\chi_+) - \lambda$ and $\mc J_+(\chi_-)-\lambda$ both have nontrivial kernels. For the rest of this proof, we shall use the convention $\Z_+\coloneqq \Z_{\geq0}$ and $\Z_-\coloneqq \Z_{<0}$ as well as $\mc J_+(\chi_+):\ell^2(\Z_+)\to \ell^2(\Z_+)$ and $\mc J_-(\chi_+):\ell^2(\Z_-)\to \ell^2(\Z_-)$.
    We define
    \begin{equation}
        \oo{i} = 
        \begin{dcases}
        \seq{\chi_+}{-i-1}, \quad i \in \Z_-,\\
        \seq{\chi_+}{i}, \quad i\in \Z_+,\\    
        \end{dcases}     
    \end{equation}
    and find $P_{\Z_{\pm}}\mc J(\chi)P_{\Z_{\pm}}$ equals $\mc J_\pm(\chi_+)$ in all but the upper left / lower right entry. 

    Let now $\bm x, \bm y \in \ell^2(\mathbb{Z}_\pm)$ be such that $\mc J_-(\chi_+)\bm y=\bm 0$ and $\mc J_+(\chi_+)\bm x = \bm 0$ and define
    \begin{equation}
        \bseq{z}{i} = 
        \begin{cases}
            \beta \bseq{y}{i}, \quad i \in \mathbb{Z}_{-} ,\\
            \alpha \bseq{x}{i}, \quad i \in \mathbb{Z}_{+},
        \end{cases}
    \end{equation}
    where $\alpha, \beta \in \mathbb{R}$ are to be determined later. 
Due to the difference in boundary values \cref{eq:physicalbv} of $\mc J_\pm(\chi_+)$ and the regular bands \cref{eq:jacobibands} of $\mc J(\chi)$, tracking the mismatch in the $-1$\textsuperscript{th} and $0$\textsuperscript{th} row now yields
    \begin{equation*}
        ((\mc J(\chi)-\lambda)\bm z)^{(i)} = \begin{cases}
           \ds  \frac{v_{-1}^2}{\ell_{-1}s_{-1}}\beta\bseq{y}{-1} - \frac{v_{-1}v_0}{s_{-1}\sqrt{\ell_{-1}{\ell_{0}}}}\alpha\bseq{x}{0}, & i=-1,\\
           \ds  \frac{v_{0}^2}{\ell_{0}s_{-1}}\alpha\bseq{x}{0} - \frac{v_{-1}v_0}{s_{-1}\sqrt{\ell_{-1}{\ell_{0}}}}\beta\bseq{y}{-1}, & i=0,\\
            0, & \text{else.}
        \end{cases}
    \end{equation*}
    This system has a nontrivial solution in terms of $\alpha$ and $\beta$ if and only if
    \begin{equation*}
        \det \begin{pmatrix}
          \ds  -\frac{v_{-1}v_0}{s_{-1}\sqrt{\ell_{-1}{\ell_{0}}}}\bseq{x}{0} & \ds \frac{v_{-1}^2}{\ell_{-1}s_{-1}}\bseq{y}{-1}\\
            \ds \frac{v_{0}^2}{\ell_{0}s_{-1}}\bseq{x}{0} & \ds -\frac{v_{-1}v_0}{s_{-1}\sqrt{\ell_{-1}{\ell_{0}}}}\bseq{y}{-1}
        \end{pmatrix} = 0,
    \end{equation*}
    which is always true. Thus we can choose $\alpha, \beta\in \R$, such that $(\mc J(\chi)-\lambda)\bm z= 0$ with $\bm z \in \ell^2(\Z)$. But, by \cref{prop: big_equivalence},  $(\mc J(\chi)-\lambda)$ is invertible, which is a contradiction.
\end{proof}

\subsection{Spectral bounds for finite arrays}
Having characterised the spectra of semi-infinite Jacobi operators, we are now in a position to do so for the finite case. Intuitively, this is due to the fact that for large system sizes $M\to \infty$, the finite system approaches an infinite system with a left and right edge. The following result formalises this intuition and provides a strict spectral bound for finite systems.
\begin{theorem}\label{thm:finite_spectra}
    Define
    \begin{equation*}
        \Sigma = \bigcup_{\chi_+\in \ldzp}\sigma(\mc J_+(\chi_+))\cup \sigma(\mc J_-(\chi_+)),
    \end{equation*}
    and let $\chi_M \in D^M$ for some $M \in \N$ with the corresponding capacitance matrix eigenproblem $VC$. Then, we have $ \sigma(VC) \subset\Sigma$.    
\end{theorem}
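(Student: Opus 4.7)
The plan is to prove the inclusion by contradiction. Suppose $\lambda \in \sigma(\mc J_N(\chi_M))$ admits an eigenvector $\bm v \in \R^N$ but $\lambda \notin \Sigma$. By the definition of $\Sigma$ together with \cref{prop: big_equivalence}, every operator $\mc J_{\pm}(\chi_+) - \lambda$ and $\mc J(\chi') - \lambda$ is then invertible, so in particular the block propagation matrix cocycle associated with any bi-infinite extension of $\chi_M$ is uniformly hyperbolic. The strategy is to use $\bm v$ to construct an $\ell^2$ eigenvector of some semi-infinite operator $\mc J_\pm(\chi_\pm)$, contradicting invertibility.

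To that end, I would first fix any bi-infinite extension $\chi' \in \ldz$ of $\chi_M$ and lift $\bm v$ to a unique formal sequence $\tilde{\bm v} \in \C^{\Z}$ satisfying $(\mc J(\chi') - \lambda)\tilde{\bm v} = 0$ pointwise. Concretely, the auxiliary values $\tilde{v}^{(-1)}$ and $\tilde{v}^{(N)}$ are forced by requiring that the bulk recurrences at $i = 0$ and $i = N-1$ be compatible with the finite eigenvalue equations at those positions (which use the physical edge values $b_0^{phys}$ and $b_{N-1}^{phys}$). Once these are set, the recurrence propagates $\tilde{\bm v}$ in both directions. The key observation is that because the left physical edge value of $\mc J_N(\chi_M)$ coincides with that of $\mc J_+(\chi'|_{\Z_{\geq 0}})$, the restriction $\tilde{\bm v}|_{\Z_{\geq 0}}$ automatically satisfies $(\mc J_+(\chi_+) - \lambda)\tilde{\bm v}|_{\Z_{\geq 0}} = 0$ at every position $i \geq 0$; the analogous computation for the right edge shows that, after shifting indices by $N-1$, $\tilde{\bm v}|_{\Z_{\leq N-1}}$ satisfies $(\mc J_-(\chi_-) - \lambda)\cdot = 0$ for the corresponding left extension.

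It then suffices to show that for a suitable choice of extension $\chi'$, at least one of $\tilde{\bm v}|_{\Z_{\geq 0}}$ or $\tilde{\bm v}|_{\Z_{\leq N-1}}$ lies in $\ell^2$. Applying the $Q$-transformation as in the proof of \cref{thm:semiinfinite-johnson} shows that the direction of $\tilde{\bm v}$ at position $0$, in propagation matrix coordinates, is precisely $(1,0)^\top$, while by the symmetric computation for the right edge its direction at position $N-1$ is $Y_{N-1}^{-1}(1,0)^\top$ where $Y_{N-1}$ denotes the within-resonator propagation. In the uniformly hyperbolic regime, $\tilde{\bm v}|_{\Z_{\geq 0}}$ lies in $\ell^2$ iff $(1,0)^\top$ is the stable direction of the cocycle at $0$, and $\tilde{\bm v}|_{\Z_{\leq N-1}}$ lies in $\ell^2$ iff $Y_{N-1}^{-1}(1,0)^\top$ is the unstable direction at $N-1$; both are exactly the edge-mode criteria from \cref{thm:semiinfinite-johnson}. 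Realising one of these by varying the right or left tail of $\chi'$ then contradicts the assumed invertibility of all $\mc J_\pm(\chi_+) - \lambda$.

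The main obstacle is establishing that at least one of these matching conditions can always be achieved. My preferred route is a direct Coburn-style argument mirroring \cref{thm:coburn}: combine the embedding $\bm v_{\mathrm{emb}}$ of $\bm v$ into $\ell^2(\Z_{\geq 0})$ (satisfying the $\mc J_+$ eigenvalue equation up to a rank-two boundary error $\bm e_{\text{right}}$ supported at $\{N-1,N\}$ and proportional to $v^{(N-1)}$) with the analogous embedding for $\mc J_-$ (boundary error proportional to $v^{(0)}$), and exploit \cref{thm:coburn} to prevent both $\mc J_+(\chi_+) - \lambda$ and $\mc J_-(\chi_+) - \lambda$ from being simultaneously invertible on a reflected bi-infinite sequence constructed from $\chi_M$ -- yielding via the reflection determinant identity from the proof of \cref{thm:coburn} an $\ell^2(\Z)$ kernel element of $\mc J(\chi) - \lambda$. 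This directly contradicts invertibility on the bi-infinite side guaranteed by \cref{prop: big_equivalence} and closes the argument.
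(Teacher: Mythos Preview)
Your first approach is on the right track, but the obstacle you identify is real and your proposed resolutions do not close it. The claim that one can ``realise one of these [matching conditions] by varying the right or left tail of $\chi'$'' is unjustified, and in fact can fail: \cref{cor:edgemode_characterisation} shows that under mild source--sink hypotheses one has $(1,0)^\top\notin s(0)$ for \emph{every} block sequence, so no choice of right tail will ever place $(1,0)^\top$ in the stable direction. The symmetric obstruction applies at the right edge. Your ``preferred route'' via Coburn is also not a valid argument as written: \cref{thm:coburn} guarantees that at least one of $\mc J_\pm(\chi_+)-\lambda$ is \emph{invertible}, which is the wrong direction for producing a kernel element; the determinant identity in that proof glues two existing semi-infinite kernel vectors, whereas here you have neither.

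The missing idea is to choose the extension $\chi'$ to be the \emph{periodic repetition} of $\chi_M$. Then the block propagation cocycle is periodic with monodromy $\mc P^\lambda_{tot}=\prod_{j=1}^M\mc P^\lambda_{\chi_M^{(j)}}$, and the finite eigenvalue equation forces $\mc P^\lambda_{tot}(1,0)^\top=\xi(1,0)^\top$ for some $\xi\in\R$ (this is exactly the observation in the Remark following the theorem). Since $\lambda\notin\Sigma$ makes the periodic bi-infinite operator invertible, $\mc P^\lambda_{tot}$ is hyperbolic and $\abs{\xi}\neq 1$; hence $(1,0)^\top$ is automatically one of the two eigendirections of $\mc P^\lambda_{tot}$, i.e.\ lies in $s(0)$ or $u(0)$ for the periodic cocycle. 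Your first approach then goes through immediately. The paper's proof implements the same idea at the transfer--matrix level: it writes $\bm z=(\dots,r_{-1}\bm x,r_0\bm x,r_1\bm x,\dots)$, observes that the interface matching system at each period boundary has vanishing determinant (the analogue of the identity in the proof of \cref{thm:coburn}), obtains a one-parameter geometric family $r_k=c^k r_0$, and concludes that one half is in $\ell^2$.
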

\begin{proof}
    Let $\chi \in D^{\Z}$ and $\chi_+\in \ldzp$ be the sequences corresponding to bi-infinite and semi-infinite repetitions of the sequence $\chi_M$, respectively. For the sake of contradiction, suppose that $\mc J_N(\chi_M)-\lambda$ is singular, but $\lambda \notin \Sigma$. Let $\bm x\in \R^N$ be in the kernel of $\mc J_N(\chi_M)$. We now define
   \begin{equation*}
        \bm z = (\dots, r_{-1}\bm x, r_0 \bm x, r_1\bm x, \dots)^\top,
    \end{equation*}
    where the sequence of real numbers $\{r_n\}_{n\in \mathbb{Z}}$ will be determined later. A direct calculation yields
   \begin{equation*}
         ((\mc J(\chi)-\lambda)\bm z)^{(i)}=\begin{cases}
           \ds  \frac{v_{N-1}^2}{\ell_{N-1}s_{N-1}}r_{k-1}\bseq{x}{N-1} - \frac{v_{N-1}v_0}{s_{N-1}\sqrt{\ell_{N-1}{\ell_{0}}}}r_k\bseq{x}{0}, & i=kN-1, k\in\Z,\\
           \ds  \frac{v_{0}^2}{\ell_{0}s_{N-1}}r_k\bseq{x}{0} - \frac{v_{N-1}v_0}{s_{N-1}\sqrt{\ell_{N-1}{\ell_{0}}}}r_{k-1}\bseq{x}{N-1}, & i=kN, k\in\Z,\\
            0, & \text{else.}
            \end{cases}
    \end{equation*}
    As before, this system has a nontrivial solution in $\{r_n\}_{n\in \mathbb{Z}}$ if and only if 
    \begin{equation*}
        \det \begin{pmatrix}
          \ds  -\frac{v_{N-1}v_0}{s_{N-1}\sqrt{\ell_{N-1}{\ell_{0}}}}\bseq{x}{0} & \ds \frac{v_{N-1}^2}{\ell_{N-1}s_{N-1}}\bseq{x}{N-1}\\
          \ds   \frac{v_{0}^2}{\ell_{0}s_{N-1}}\bseq{x}{0} & \ds -\frac{v_{N-1}v_0}{s_{N-1}\sqrt{\ell_{N-1}{\ell_{0}}}}\bseq{x}{N-1}
        \end{pmatrix} = 0,
    \end{equation*}
    which is always true. In particular, one of $\{r_n\}_{n\in\mathbb{N}},\{r_n\}_{n\in\mathbb{Z}\setminus\mathbb{N}} $ is bounded. Hence, either $\mc J_+(\chi_+)-\lambda$ or $\mc J_-(\chi_+)-\lambda$ cannot be invertible, a contradiction because we assumed $\lambda\notin \Sigma$.
\end{proof}

This result now fully justifies the observations in \cref{fig:saxonhutner_edgemode}. In addition to always satisfying the source-sink condition, standard blocks from \cref{ex:standard_blocks} also always have $(1,0)^\top$ contained in the maximal connected set containing all sinks, preventing by \cref{cor:edgemode_characterisation} the existence of any edge modes. As a consequence, we have $\sigma(\mc J_+(\chi_+)) = \sigma(\mc J_-(\chi_-)) = \sigma(\mc J(\chi))$ for any pseudo-ergodic $\chi_+\in \ldzp$ and $\chi \in \ldz$ with $\chi_+\prec\chi$. This implies that $\Sigma = \sigma(\mc J(\chi))$ and thus also $\sigma(VC)\subset \sigma(\mc J(\chi))$ for any finite block disordered array. Conversely, for non-standard blocks allowing for the existence of edge modes, these edge modes contribute to $\Sigma$ and may thus also show up in finite systems, as can be seen in \cref{fig:saxonhutner_edgemode}(\textsc{b}). 

At this point, we would also like to point out that the converse of \cref{thm:finite_spectra} is provided by \cite[Theorem 3.9]{ammari.barandun.ea2025Universal}, ensuring that for any sequence of finite arrays converging to an infinite pseudo-ergodic one, the corresponding density of states of the finite arrays must converge to the infinite one.
By \cref{thm:saxonhutner}, we know that the bands of the constituent blocks where one or more of their block propagation matrices fail to be hyperbolic, must be filled out for any infinite system with a pseudo-ergodic block sequence. The convergence of the finite systems to the infinite systems in terms of the density of states then ensures that the bands get filled out, also for the finite systems. Furthermore, this convergence also ensures that, while edge modes in the bandgap may be possible for the finite system, they constitute an increasingly small fraction of the overall eigenmodes, as the system size increases.

\begin{remark}
Finally, we would like to remark how \cref{thm:coburn,thm:finite_spectra} are connected to the uniform hyperbolicity of the block propagation matrix cocycle. For \cref{thm:coburn} we take $\chi_+\in\ldzp$ pseudo-ergodic and $\chi$ the bi-infinite doubling of $\chi_+$, as in the proof of \cref{thm:coburn}. Having both left and right edge modes at some frequency $\lambda$ in the gap would then imply $u(0) = s(0)$ of the block propagation matrix cocycle associated with $\chi$, a contradiction due to its uniform hyperbolicity.

\cref{thm:finite_spectra} can be obtained after realising that, for some $\chi_M \in D^M$, we have
\[
    \lambda \in \sigma(\mc J_N(\chi_M)) \iff \mc P_{tot}^\lambda \begin{pmatrix}
        1\\0 
    \end{pmatrix} = \xi \begin{pmatrix}
        1\\0
    \end{pmatrix},
\]
where $P_{tot}^\lambda = \prod_{j=1}^M\mc P_{\seq{\chi_M}{j}}^\lambda$. Because this implies that $(1,0)^\top$ is an eigenvector of $\mc P_{tot}^\lambda$, we can (depending on whether $\abs{\xi}$ is larger or smaller than one) extend $(1,0)^\top$ into an eigenvector of either the negative or positive semi-infinite system.
\end{remark}

\subsection*{Acknowledgements}
   The work was supported in part by Swiss National Science Foundation grant number 200021-236472.  
   
\subsection*{Data Availability Statement} The authors confirm that the current paper does not have associated data.
\subsection*{Competing interests Statement} The authors declare no competing interests.


\appendix

\section{Dynamically defined cocycles}\label{sec:dyndefined}
In this paper, we have considered $\operatorname{SL}(2,\R)$-cocycles as maps $A\in \ell^\infty(\Z,\operatorname{SL}(2,\R))$ that are determined by a block sequence $\chi\in \ldz$ or a resonator sequence $\alpha\in \lrz$. Since in this case the cocycle is essentially modelled as a sequence of $\operatorname{SL}(2,\R)$-matrices, we call it \emph{sequentially defined}.

However, there is an alternative but closely related formalism of \emph{dynamically defined cocycles}. The key difference now is to consider the cocycle map $A:\Omega\to \operatorname{SL}(2,\R)$ to be a map on $\Omega$, the space of all allowable sequences. This together with a \emph{shift operator} $S:\Omega\to \Omega$ defines the following dynamical system:
\begin{align*}
    (S,A):\Omega\times\R^2 &\to \Omega\times\R^2\\
    (\chi, \bm v) &\mapsto (S(\chi), A(\chi)\bm v).
\end{align*}

In this formalism, uniform hyperbolicity is defined as follows.
\begin{definition}[Uniform hyperbolicity for dynamically defined cocycles]\label{def:DynamicUH}
We say that a dynamically defined cocycle $(S, \Omega, A)$
is \emph{uniformly hyperbolic} (UH) if
there exist two maps $s,u:\Omega\to \rp$ such that
\begin{enumerate}[label=(UH\arabic*)]
  \item \textbf{$A$-invariance.}  
        For every $\chi\in \Omega$,
        \[
          A(\chi)u(\chi) = u(S(\chi))
          \quad\text{and}\quad
           A(\chi)s(\chi) = s(S(\chi));
        \]
  \item \textbf{Uniform growth.}  
        There exist $C>0$ and $\eta>1$ such that
        \[
          \norm{A_{-n}(\chi)\vec{u}}, \norm{A_n(\chi)\vec{s}} \leq C\eta^{-n}
        \]
        for all $\chi\in\Omega$, $n\in \N$ and all unit vectors
        $\vec{u}\in u(\chi)$ and
        $\vec{s}\in s(\chi)$.
\end{enumerate}
Here, analogously to the sequential case, $A_n(\chi) = A(S^{n-1}(\chi))\cdots A(\chi)$.
\end{definition}

To illustrate the difference between these two formalisms, we consider the cocycle of block propagation matrices. In both cases, the key ingredient in their construction is the map
\begin{equation}\label{eq:blockpropmap}
    \begin{aligned}
        \Psi:\{1,\dots,D\}&\to \operatorname{SL}(2,\R)\\ 
        d &\mapsto \mc P^\lambda_d,
    \end{aligned}
\end{equation}
taking a block index $d$ to the propagation matrix of the corresponding block $\mc P^\lambda_d$, as constructed in \cref{def:blockprop}. 

In \cref{def:blockprop}, $\Psi$ was implicitly used, for any $\chi\in \ldz$, to sequentially define the cocycle
\begin{align*}
    \mc P^\lambda:\Z&\to \operatorname{SL}(2,\R)\\
    j &\mapsto \Psi(\seq{\chi}{j}) = \mc P^\lambda_{\seq{\chi}{j}}.
\end{align*}

Alternatively, consider $\Omega\subset \ldz$ so that the shift map $S:\ldz \to \ldz$ restricted to $S:\Omega\to \Omega$ is a well-defined homeomorphism. Using $\Psi$, we can then dynamically define the cocycle as 
\begin{align*}
    \mc P^\lambda:\Omega&\to \operatorname{SL}(2,\R)\\
    \chi &\mapsto \Psi(\seq{\chi}{0}) = \mc P^\lambda_{\seq{\chi}{0}}.
\end{align*}

For these two formalisms, the concepts of uniform hyperbolicity are related as follows \cite[Proposition 1]{zhang2020Uniform}.
\begin{proposition} 
    Let $\lambda\geq0$ and let $\chi_0\in \ldz$. Then,
    the sequentially defined $\mc P^\lambda:\Z \to \operatorname{SL}(2,\R)$ is uniformly hyperbolic in the sense of \cref{def:UH} if and only if the dynamically defined $(S,\Omega, \mc P^\lambda)$ with $\mc P^\lambda:\Omega \to \operatorname{SL}(2,\R)$ and $\Omega = \overline{\operatorname{Orb}(\chi_0)}$ is uniformly hyperbolic in the sense of \cref{def:DynamicUH} 
\end{proposition}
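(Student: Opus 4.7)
My plan is to dispatch the two implications separately, observing that the sequential formulation is essentially the dynamical one restricted to the orbit of $\chi_0$, while passing from the sequential to the dynamical version amounts to extending the cocycle data continuously from $\operatorname{Orb}(\chi_0)$ to its closure $\Omega$. The key book‑keeping observation in both directions is the identity
\begin{equation*}
  \mc P^\lambda_n(S^j\chi_0) \;=\; \Psi\bigl(\chi_0^{(j+n-1)}\bigr)\cdots \Psi\bigl(\chi_0^{(j)}\bigr) \;=\; \mc P^\lambda_n(j),
\end{equation*}
connecting the dynamical cocycle iteration (left) to the sequential one (right).

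For the direction ($\Leftarrow$), assume dynamical UH with directions $u,s:\Omega\to\rp$, and simply set $u_{\mathrm{seq}}(j):=u(S^j\chi_0)$ and $s_{\mathrm{seq}}(j):=s(S^j\chi_0)$. The dynamical invariance $\mc P^\lambda(S^j\chi_0)u(S^j\chi_0)=u(S^{j+1}\chi_0)$ is exactly sequential (UH1), and the dynamical growth bound restricted to the orbit yields sequential (UH2).

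The substantive direction is ($\Rightarrow$). I would invoke \cite[Theorem 1]{zhang2020Uniform} (already used in the proof of \cref{lem:DSiffUH}), which characterises UH of an $\on{SL}(2,\R)$‑cocycle by the uniform lower bound $\norm{\mc P^\lambda_n(j)}\ge c\xi^n$ for some $c>0$, $\xi>1$, uniformly in $j\in\Z$ and $n\in\N$. Since $\Psi$ depends on a single coordinate, the map $\chi\mapsto \mc P^\lambda_n(\chi)=\Psi(\chi^{(n-1)})\cdots\Psi(\chi^{(0)})$ is continuous on $\ldz$ with its product topology, hence so is $\chi\mapsto\norm{\mc P^\lambda_n(\chi)}$. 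The bound holds on the orbit by the above identity, and the orbit is dense in $\Omega$ by construction, so by continuity we obtain $\norm{\mc P^\lambda_n(\chi)}\ge c\xi^n$ for every $\chi\in\Omega$ and every $n\in\N$.

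To conclude dynamical UH in the sense of \cref{def:DynamicUH}, one must still produce continuous $\mc P^\lambda$‑invariant maps $u,s:\Omega\to\rp$. I would construct them via the singular value decomposition of $\mc P^\lambda_n(\chi)$: the uniform gap between the singular values forced by the norm lower bound (combined with $\det\mc P^\lambda_n(\chi)=1$) ensures that the most‑contracted right singular direction of $\mc P^\lambda_n(\chi)$ converges to some $s(\chi)\in\rp$ at a uniform exponential rate as $n\to\infty$, with the analogous construction applied to $(\mc P^\lambda_n(S^{-n}\chi))^{-1}$ giving $u(\chi)$. Continuity follows from the uniform convergence rate together with continuity of each finite approximant, while shift‑invariance follows directly from the limit definition combined with the cocycle identity. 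The main obstacle is precisely this last step: ensuring that the pointwise extension of the stable and unstable directions by continuity to $\Omega\setminus\operatorname{Orb}(\chi_0)$ remains $\mc P^\lambda$‑invariant and satisfies the uniform growth condition globally; this is handled by deriving both properties as \emph{consequences} of the SVD construction, rather than by attempting a naive continuous extension of the sequential $u_{\mathrm{seq}}, s_{\mathrm{seq}}$ from $\Z$ to $\Omega$.
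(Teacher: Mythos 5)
Your argument is correct. Note that the paper itself does not prove this proposition: it is quoted directly from Zhang's Proposition~1 (the reference already used for \cref{lem:DSiffUH}), so there is no internal proof to compare against, and your write-up is a faithful reconstruction of the standard argument behind that citation. The easy direction, restricting the dynamical directions to the orbit via $j\mapsto u(S^j\chi_0)$, $j\mapsto s(S^j\chi_0)$, is exactly right. For the substantive direction your three-step route --- (i) convert sequential uniform hyperbolicity into the uniform norm bound $\norm{\mc P^\lambda_n(j)}\ge c\xi^n$ via Zhang's Theorem~1, (ii) propagate it from the dense orbit to all of $\Omega$ using that $\chi\mapsto\mc P^\lambda_n(\chi)$ depends on only finitely many coordinates and is therefore locally constant, hence continuous, and (iii) rebuild the invariant directions from the singular value decomposition, using $\det\mc P^\lambda_n(\chi)=1$ to force an exponential singular-value gap --- is the standard proof and is consistent with how the paper argues in \cref{lem:DSiffUH}. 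Two small points are worth making explicit. First, $\Omega=\overline{\operatorname{Orb}(\chi_0)}$ is shift-invariant, so the backward bound $\norm{\mc P^\lambda_{-n}(\chi)}=\norm{\mc P^\lambda_n(S^{-n}\chi)}\ge c\xi^n$ (using $\norm{A^{-1}}=\norm{A}$ for $A\in\operatorname{SL}(2,\R)$) also extends to all of $\Omega$, which is needed for the construction of $u$. Second, \cref{def:DynamicUH} as stated does not actually require $u,s:\Omega\to\rp$ to be continuous, but the quantitative SVD convergence is still essential: the points of $\Omega\setminus\operatorname{Orb}(\chi_0)$ carry genuinely new directions, and the uniform rate is what delivers the growth constants in (UH2) there, so this part of your construction is doing necessary work and not merely supplying continuity.
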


An immediate consequence is that if $\chi_0\in \ldz$ is \emph{pseudo-ergodic}, then the uniform hyperbolicity of the associated sequentially defined cocycle is equivalent to the uniform hyperbolicity of $(S,\Omega, \mc P^\lambda)$ over the full shift $\Omega = \ldz$.

\printbibliography

@article{disorder,
      title={Subwavelength Localisation in Disordered Systems}, 
      author={Habib Ammari and Silvio Barandun and Alexander Uhlmann},
 year = {2025},
journal={ Proc. Royal Soc. A},
volume={481}, 
number={2319},
pages={48120250407},
doi={10.1098/rspa.2025.0407},
}

@misc{ammari2025competingedgebulklocalisation,
      title={Competing edge and bulk localisation in non-reciprocal disordered systems}, 
      author={Habib Ammari and Silvio Barandun and Clemens Thalhammer and Alexander Uhlmann},
      year={2025},
      eprint={2507.02096},
      archivePrefix={arXiv},
      url={https://arxiv.org/abs/2507.02096}, 
}

@article{ammariMathematicalFoundationsNonHermitian2024,
    AUTHOR = {Ammari, Habib and Barandun, Silvio and Cao, Jinghao and
              Davies, Bryn and Hiltunen, Erik Orvehed},
     TITLE = {Mathematical foundations of the non-{H}ermitian skin effect},
   JOURNAL = {Arch. Ration. Mech. Anal.},
  FJOURNAL = {Archive for Rational Mechanics and Analysis},
    VOLUME = {248},
      YEAR = {2024},
    NUMBER = {3},
     PAGES = {Paper No. 33, 34},
       DOI = {10.1007/s00205-024-01976-y},
       URL = {https://doi.org/10.1007/s00205-024-01976-y},
}

@article{leroy,
      title={The air bubble: experiments on an usual harmonic oscillator}, 
      author={Leroy, V. and Devaud, M., and Bacri, J. C.},
 year = {2002},
journal={Am. J. Phys.},
volume={10}, 
pages={1012--1019},
}

@article{ammari.fitzpatrick.ea2018Minnaert,
    AUTHOR = {Ammari, Habib and Fitzpatrick, Brian and Gontier, David and
              Lee, Hyundae and Zhang, Hai},
     TITLE = {Minnaert resonances for acoustic waves in bubbly media},
   JOURNAL = {Ann. Inst. H. Poincar\'{e} C Anal. Non Lin\'{e}aire},
  FJOURNAL = {Annales de l'Institut Henri Poincar\'{e} C. Analyse Non
              Lin\'{e}aire},
    VOLUME = {35},
      YEAR = {2018},
    NUMBER = {7},
     PAGES = {1975--1998},
}

@book {book-lukic,
    AUTHOR = {Luki\'c, Milivoje},
     TITLE = {A first course in spectral theory},
    SERIES = {Graduate Studies in Mathematics},
    VOLUME = {226},
 PUBLISHER = {American Mathematical Society, Providence, RI},
      YEAR = {2022},
     PAGES = {xvi+472},
       DOI = {10.1090/gsm/226},
       URL = {https://doi.org/10.1090/gsm/226},
}

@article{ammari.millien.ea2017Mathematical,
  title = {Mathematical Analysis of Plasmonic Nanoparticles: {{The}} Scalar Case},
  author = {Ammari, Habib and Millien, Pierre and Ruiz, Matias and Zhang, Hai},
  year = {2017},
  journal = {Arch. Ration. Mech. Anal.},
  volume = {224},
  number = {2},
  pages = {597--658},
}

@article{ammari.li.ea2023Mathematical,
  title = {Mathematical Analysis of Electromagnetic Scattering by Dielectric Nanoparticles with High Refractive Indices},
  author = {Ammari, Habib and Li, Bowen and Zou, Jun},
  year = {2023},
  journal = {Transactions of the American Mathematical Society},
  shortjournal = {Trans. Amer. Math. Soc.},
  volume = {376},
  number = {1},
  pages = {39--90}
}

@article{ammari2015superresolution,
	title={A mathematical theory of super-resolution by using a system of sub-wavelength {Helmholtz} resonators},
	author={Ammari, Habib and Zhang, Hai},
	journal={Commun. Math. Phys.},
	volume={337},
	number={1},
	pages={379--428},
	year={2015},
}

@article{chandler2016coburn,
  title={Coburn's lemma and the finite section method for random Jacobi operators},
  author={Chandler-Wilde, Simon N and Lindner, Marko},
  journal={J. Funct. Anal.},
  volume={270},
  number={2},
  pages={802--841},
  year={2016},
  publisher={Elsevier}
}

@article{lindner2012finite,
  title={Finite sections of random Jacobi operators},
  author={Lindner, Marko and Roch, Steffen},
  journal={SIAM J. Numer. Anal.},
  volume={50},
  number={1},
  pages={287--306},
  year={2012},
  publisher={SIAM}
}

@preamble{ "\providecommand{\noopsort}[1]{} " }

@book{cbms, 
title={Mathematical Theories for Metamaterials: From Condensed Matter Theory to Subwavelength Physics}, 
series={NSF--CBMS Regional Conf. Ser.}, 
author = {Ammari, Habib and Davies, Bryn and Hiltunen, Erik Orvehed},
publisher={American Mathematical Society, Providence, to appear},
year={2025},
}

@article {stabilityskin,
    AUTHOR = {Ammari, Habib and Barandun, Silvio and Davies, Bryn and
              Hiltunen, Erik Orvehed and Liu, Ping},
     TITLE = {Stability of the non-{H}ermitian skin effect in one dimension},
   JOURNAL = {SIAM J. Appl. Math.},
    VOLUME = {84},
      YEAR = {2024},
    NUMBER = {4},
     PAGES = {1697--1717},
      ISSN = {0036-1399,1095-712X},
   MRCLASS = {35B34 (15B57 35C20 35P25 81Q12)},
  MRNUMBER = {4781658},
MRREVIEWER = {Zhenhai\ Liu},
       DOI = {10.1137/23M1610537},
       URL = {https://doi.org/10.1137/23M1610537},
}

@article{furstenberg1963Noncommuting,
  title = {Noncommuting Random Products},
  author = {Furstenberg, Harry},
  year = {1963},
  journal = {Transactions of the American Mathematical Society},
  volume = {108},
  pages = {377--428},
  issn = {0002-9947,1088-6850},
  doi = {10.2307/1993589},
  mrnumber = {163345}
}

@incollection{kotani1984Ljapunov,
  title = {Ljapunov {{Indices Determine Absolutely Continuous Spectra}} of {{Stationary Random One-dimensional Schr{\"o}dinger Operators}}},
  booktitle = {North-{{Holland Mathematical Library}}},
  author = {Kotani, Shinichi},
  editor = {It{\^o}, Kiyosi},
  year = {1984},
  series = {Stochastic {{Analysis}}},
  volume = {32},
  pages = {225--247},
  publisher = {Elsevier},
  doi = {10.1016/S0924-6509(08)70395-7},
  url = {https://www.sciencedirect.com/science/article/pii/S0924650908703957}
}

@book{aizenman2015Random,
  title = {Random Operators: Disorder Effects on Quantum Spectra and Dynamics},
  shorttitle = {Random Operators},
  author = {Aizenman, Michael},
  year = {2015},
  series = {Graduate {{Studies}} in {{Mathematics}} ; {{Volume}} 168},
  publisher = {American Mathematical Society},
  address = {Providence, Rhode Island},
  collaborator = {Warzel, Simone},
  isbn = {978-1-4704-2783-2},
  langid = {english},
  keywords = {Order-disorder models,Random operators,Stochastic analysis}
}

@article{alkorn.zhang2022correspondence,
  title = {On the Correspondence between Domination and the Spectrum of {{Jacobi}} Operators},
  author = {Alkorn, Kateryna and Zhang, Zhenghe},
  year = {2022},
  journal = {Transactions of the American Mathematical Society},
  volume = {375},
  number = {11},
  pages = {8101--8149},
  issn = {0002-9947, 1088-6850},
  doi = {10.1090/tran/8746},
  url = {https://www.ams.org/tran/2022-375-11/S0002-9947-2022-08746-8/},
  abstract = {In this paper, we first develop a notion of dominated splitting for M(2, C)-sequences and show it is a stable property under {\textbardbl} {$\cdot$} {\textbardbl}{$\infty$}-perturbation. Then we show an energy parameter belongs to the spectrum of a Jacobi operator, possibly singular, if and only if the associated Jacobi cocycle does not admit dominated splitting. This generalizes the results obtained by the second author [J. Spectr. Theory 10 (2020), pp. 1471--1517] in the scenario of Schr{\textasciidieresis}odinger operators. Finally, we consider dynamically defined Jacobi operators whose base dynamics is only assumed to be topologically transitive. We show an energy parameter belongs to the spectrum of the operator defined by the base point with a dense orbit if and only if the dynamically defined Jacobi cocycle does not admit dominated splitting. This includes the original Johnson's theorem obtained by R. Johnson [J. Differential Equations 61 (1986), pp. 54--78] for Schr{\textasciidieresis}odinger operators and the main theorem obtained by C. Marx [Nonlinearity 27 (2014), pp. 3059--3072] for Jacobi operators as special cases.},
  copyright = {https://www.ams.org/publications/copyright-and-permissions},
  langid = {english}
}

@article{ammari.barandun.ea2025Truncated,
  title = {Truncated {{Floquet}}--{{Bloch}} Transform for Computing the Spectral Properties of Large Finite Systems of Resonators},
  author = {Ammari, Habib and Barandun, Silvio and Uhlmann, Alexander},
  year = {2025},
  journal = {Transactions of the London Mathematical Society},
  volume = {12},
  number = {1},
  pages = {e70009},
  issn = {2052-4986},
  doi = {10.1112/tlm3.70009},
  url = {https://onlinelibrary.wiley.com/doi/abs/10.1112/tlm3.70009},
  abstract = {In subwavelength physics, a challenging problem is to characterise the spectral properties of finite systems of subwavelength resonators. In particular, it is important to identify localised modes as well as bandgaps and associated mobility edges. It was shown numerically that the truncated Floquet--Bloch transform can be used to characterise the spectral properties of finite periodic and some aperiodic large systems of resonators. In this paper, for the first time, the mathematical foundations of this transform are provided. In particular, it is proved that the truncated Floquet--Bloch transform enables an accurate recovery of the structure's bandgap and defect localised eigenmodes.},
  copyright = {{\copyright} 2025 The Author(s). Transactions of the London Mathematical Society is copyright {\copyright} London Mathematical Society.},
  langid = {english}
}

@misc{ammari.barandun.ea2025Universal,
  title = {Universal Estimates for the Density of States for Aperiodic Block Subwavelength Resonator Systems},
  author = {Ammari, Habib and Barandun, Silvio and Davies, Bryn and Hiltunen, Erik Orvehed and Uhlmann, Alexander},
  year = {2025},
  number = {arXiv:2505.16677},
  eprint = {2505.16677},
  publisher = {arXiv},
  doi = {10.48550/arXiv.2505.16677},
  url = {http://arxiv.org/abs/2505.16677},
  abstract = {We consider the spectral properties of aperiodic block subwavelength resonator systems in one dimension, with a primary focus on the density of states. We prove that for random block configurations, as the number of blocks \$M{\textbackslash}to {\textbackslash}infty\$, the integrated density of states converges to a non-random, continuous function. We show both analytically and numerically that the density of states exhibits a tripartite decomposition: it vanishes identically within bandgaps; it forms smooth, band-like distributions in shared pass bands (a consequence of constructive eigenmode interactions); and, most notably, it exhibits a distinct fractal-like character in hybridisation regions. We demonstrate that this fractal-like behaviour stems from the limited interaction between eigenmodes within these hybridisation regions. Capitalising on this insight, we introduce an efficient meta-atom approach that enables rapid and accurate prediction of the density of states in these hybridisation regions. This approach is shown to extend to systems with quasiperiodic and hyperuniform arrangements of blocks.},
  archiveprefix = {arXiv},
  keywords = {Condensed Matter - Disordered Systems and Neural Networks,Condensed Matter - Materials Science,Mathematical Physics,Mathematics - Mathematical Physics}
}

@article{ammari.davies.ea2024Functional,
  title = {Functional Analytic Methods for Discrete Approximations of Subwavelength Resonator Systems},
  author = {Ammari, Habib and Davies, Bryn and Hiltunen, Erik Orvehed},
  year = {2024},
  journal = {Pure and Applied Analysis},
  volume = {6},
  number = {3},
  doi = {10.2140/paa.2024.6.873},
  url = {http://arxiv.org/abs/2106.12301},
  abstract = {We survey functional analytic methods for studying subwavelength resonator systems. In particular, rigorous discrete approximations of Helmholtz scattering problems are derived in an asymptotic subwavelength regime. This is achieved by re-framing the Helmholtz equation as a non-linear eigenvalue problem in terms of integral operators. In the subwavelength limit, resonant states are described by the eigenstates of the generalised capacitance matrix, which appears by perturbing the elements of the kernel of the limiting operator. Using this formulation, we are able to describe subwavelength resonance and related phenomena. In particular, we demonstrate large-scale effective parameters with exotic values. We also show that these systems can exhibit localised and guided waves on very small length scales. Using the concept of topologically protected edge modes, such localisation can be made robust against structural imperfections.},
  archiveprefix = {arXiv},
  keywords = {Mathematical Physics,Mathematics - Analysis of PDEs,Mathematics - Mathematical Physics}
}

@article{avila.bochi.ea2010Uniformly,
    AUTHOR = {Avila, Artur and Bochi, Jairo and Yoccoz, Jean-Christophe},
     TITLE = {Uniformly hyperbolic finite-valued {${\rm SL}(2,\Bbb
              R)$}-cocycles},
   JOURNAL = {Comment. Math. Helv.},
    VOLUME = {85},
      YEAR = {2010},
    NUMBER = {4},
     PAGES = {813--884},
       DOI = {10.4171/CMH/212},
       URL = {https://doi.org/10.4171/CMH/212},
}

@article{bourgain.goldstein2000Nonperturbative,
  title = {On {{Nonperturbative Localization}} with {{Quasi-Periodic Potential}}},
  author = {Bourgain, J. and Goldstein, M.},
  year = {2000},
  journal = {Annals of Mathematics},
  volume = {152},
  number = {3},
  eprint = {2661356},
  eprinttype = {jstor},
  pages = {835--879},
  publisher = {[Annals of Mathematics, Trustees of Princeton University on Behalf of the Annals of Mathematics, Mathematics Department, Princeton University]},
  issn = {0003-486X},
  doi = {10.2307/2661356},
  url = {https://www.jstor.org/stable/2661356}
}

@article{chandler-wilde.lindner2016Coburns,
  title = {Coburn's Lemma and the Finite Section Method for Random {{Jacobi}} Operators},
  author = {{Chandler-Wilde}, Simon N. and Lindner, Marko},
  year = {2016},
  journal = {Journal of Functional Analysis},
  volume = {270},
  number = {2},
  pages = {802--841},
  issn = {0022-1236},
  doi = {10.1016/j.jfa.2015.09.019},
  url = {https://www.sciencedirect.com/science/article/pii/S0022123615003894},
  abstract = {We study the spectra and pseudospectra of semi-infinite and bi-infinite tridiagonal random matrices and their finite principal submatrices, in the case where each of the three diagonals varies over a separate compact set, say U,V,W{$\subset$}C. Such matrices are sometimes termed stochastic Toeplitz matrices A+ in the semi-infinite case and stochastic Laurent matrices A in the bi-infinite case. Their spectra, {$\Sigma$}=specA and {$\Sigma$}+=specA+, are independent of A and A+ as long as A and A+ are pseudoergodic (in the sense of Davies (2001) [20]), which holds almost surely in the random case. This was shown in Davies (2001) [20] for A; that the same holds for A+ is one main result of this paper. Although the computation of {$\Sigma$} and {$\Sigma$}+ in terms of U, V and W is intrinsically difficult, we give upper and lower spectral bounds, and we explicitly compute a set G that fills the gap between {$\Sigma$} and {$\Sigma$}+ in the sense that {$\Sigma\cup$}G={$\Sigma$}+. We also show that the invertibility of one (and hence all) operators A+ implies the invertibility -- and uniform boundedness of the inverses -- of all finite tridiagonal square matrices with diagonals varying over U, V and W. This implies that the so-called finite section method for the approximate solution of a system A+x=b is applicable as soon as A+ is invertible, and that the finite section method for estimating the spectrum of A+ does not suffer from spectral pollution. Both results illustrate that tridiagonal stochastic Toeplitz operators share important properties of (classical) Toeplitz operators. Indeed, one of our main tools is a new stochastic version of the Coburn lemma for classical Toeplitz operators, saying that a stochastic tridiagonal Toeplitz operator, if Fredholm, is always injective or surjective. In the final part of the paper we bound and compare the norms, and the norms of inverses, of bi-infinite, semi-infinite and finite tridiagonal matrices over U, V and W. This, in particular, allows the study of the resolvent norms, and hence the pseudospectra, of these operators and matrices.},
  keywords = {Coburn lemma,Finite section method,Jacobi operator,Pseudoergodic,Random operator,Spectral pollution}
}

@article{cheben.halir.ea2018Subwavelength,
  title = {Subwavelength Integrated Photonics},
  author = {Cheben, Pavel and Halir, Robert and Schmid, Jens H. and Atwater, Harry A. and Smith, David R.},
  year = {2018},
  journal = {Nature},
  volume = {560},
  number = {7720},
  pages = {565--572},
  publisher = {Nature Publishing Group},
  issn = {1476-4687},
  doi = {10.1038/s41586-018-0421-7},
  url = {https://www.nature.com/articles/s41586-018-0421-7},
  abstract = {In the late nineteenth century, Heinrich Hertz demonstrated that the electromagnetic properties of materials are intimately related to their structure at the subwavelength scale by using wire grids with centimetre spacing to manipulate metre-long radio waves. More recently, the availability of nanometre-scale fabrication techniques has inspired scientists to investigate subwavelength-structured metamaterials with engineered optical properties at much shorter wavelengths, in the infrared and visible regions of the spectrum. Here we review how optical metamaterials are expected to enhance the performance of the next generation of integrated photonic devices, and explore some of the challenges encountered in the transition from concept demonstration to viable technology.},
  copyright = {2018 Springer Nature Limited},
  langid = {english},
  keywords = {Metamaterials,Silicon photonics}
}

@article{cummer.christensen.ea2016Controlling,
  title = {Controlling Sound with Acoustic Metamaterials},
  author = {Cummer, Steven A. and Christensen, Johan and Al{\`u}, Andrea},
  year = {2016},
  journal = {Nature Reviews Materials},
  volume = {1},
  number = {3},
  pages = {16001},
  publisher = {Nature Publishing Group},
  issn = {2058-8437},
  doi = {10.1038/natrevmats.2016.1},
  url = {https://www.nature.com/articles/natrevmats20161},
  abstract = {Acoustic metamaterials can manipulate and control sound waves in ways that are not possible in conventional materials. Metamaterials with zero, or even negative, refractive index for sound offer new possibilities for acoustic imaging and for the control of sound at subwavelength scales. The combination of transformation acoustics theory and highly anisotropic acoustic metamaterials enables precise control over the deformation of sound fields, which can be used, for example, to hide or cloak objects from incident acoustic energy. Active acoustic metamaterials use external control to create effective material properties that are not possible with passive structures and have led to the development of dynamically reconfigurable, loss-compensating and parity--time-symmetric materials for sound manipulation. Challenges remain, including the development of efficient techniques for fabricating large-scale metamaterial structures and converting laboratory experiments into useful devices. In this Review, we outline the designs and properties of materials with unusual acoustic parameters (for example, negative refractive index), discuss examples of extreme manipulation of sound and, finally, provide an overview of future directions in the field.},
  copyright = {2016 Macmillan Publishers Limited},
  langid = {english},
  keywords = {Acoustics,Materials for devices,Mechanical engineering,Metamaterials}
}

@book{damanik.fillman2022OneDimensional,
  title = {One-{{Dimensional Ergodic Schr{\"o}dinger Operators}}},
  author = {Damanik, David and Fillman, Jake},
  year = {2022},
  series = {Graduate {{Studies}} in {{Mathematics}}},
  volume = {221},
  publisher = {American Mathematical Society},
  address = {Providence, Rhode Island},
  doi = {10.1090/gsm/221},
  url = {https://www.ams.org/gsm/221},
}

@article{damanik.sims.ea2004Localization,
  title = {Localization for Discrete One-Dimensional Random Word Models},
  author = {Damanik, David and Sims, Robert and Stolz, G{\"u}nter},
  year = {2004},
  journal = {Journal of Functional Analysis},
  volume = {208},
  number = {2},
  pages = {423--445},
  issn = {0022-1236},
  doi = {10.1016/j.jfa.2003.07.011},
  url = {https://www.sciencedirect.com/science/article/pii/S0022123603002969},
  abstract = {We consider Schr{\"o}dinger operators in {$\ell$}2(Z) whose potentials are obtained by randomly concatenating words from an underlying set W according to some probability measure {$\nu$} on W. Our assumptions allow us to consider models with local correlations, such as the random dimer model or, more generally, random polymer models. We prove spectral localization and, away from a finite set of exceptional energies, dynamical localization for such models. These results are obtained by employing scattering theoretic methods together with Furstenberg's theorem to verify the necessary input to perform a multiscale analysis.},
  keywords = {Anderson model,Localization,Random operators}
}

@article{feppon.cheng.ea2023Subwavelength,
  title = {Subwavelength {{Resonances}} in {{One-Dimensional High-Contrast Acoustic Media}}},
  author = {Feppon, Florian and Cheng, Zijian and Ammari, Habib},
  year = {2023},
  journal = {SIAM J. Appl. Math.},
  volume = {83},
  number = {2},
  pages = {625--665},
  publisher = {{Society for Industrial and Applied Mathematics}},
  issn = {0036-1399},
  doi = {10.1137/22M1503841},
  url = {https://epubs.siam.org/doi/full/10.1137/22M1503841},
  abstract = {.We present the mathematical theory of one-dimensional infinitely periodic chains of subwavelength resonators. We analyze both Hermitian and non-Hermitian systems. Subwavelength resonances and associated modes can be accurately predicted by a finite dimensional eigenvalue problem involving a capacitance matrix. We are able to compute the Hermitian and non-Hermitian Zak phases, showing that the former is quantized and the latter is not. Furthermore, we show the existence of localized edge modes arising from defects in the periodicity in both the Hermitian and non-Hermitian cases. In the non-Hermitian case, we provide a complete characterization of the edge modes.}
}

@article{jitomirskaya1999MetalInsulator,
  title = {Metal-{{Insulator Transition}} for the {{Almost Mathieu Operator}}},
  author = {Jitomirskaya, Svetlana Ya.},
  year = {1999},
  journal = {Annals of Mathematics},
  volume = {150},
  number = {3},
  eprint = {121066},
  eprinttype = {jstor},
  pages = {1159--1175},
  issn = {0003-486X},
  doi = {10.2307/121066},
  url = {https://www.jstor.org/stable/121066},
  abstract = {We prove that for Diophantine {$\omega$} and almost every {\texttheta} , the almost Mathieu operator, (H\textsubscript{{$\omega$},{$\lambda$},{\texttheta}}{$\Psi$} ) (n) = {$\Psi$} (n + 1) + {$\Psi$} (n - 1) + {$\lambda$} cos 2{$\pi$} ({$\omega$} n + {\texttheta} ){$\Psi$} (n), exhibits localization for {$\lambda$} {$>$} 2 and purely absolutely continuous spectrum for {$\lambda$} {$<$} 2. This completes the proof of (a correct version of) the Aubry-Andr{\'e} conjecture.}
}

@article{johnson1986Exponential,
  title = {Exponential Dichotomy, Rotation Number, and Linear Differential Operators with Bounded Coefficients},
  author = {Johnson, Russell A},
  year = {1986},
  journal = {Journal of Differential Equations},
  volume = {61},
  number = {1},
  pages = {54--78},
  issn = {0022-0396},
  doi = {10.1016/0022-0396(86)90125-7},
  url = {https://www.sciencedirect.com/science/article/pii/0022039686901257}
}

@article{lemoult.fink.ea2011Acoustic,
  title = {Acoustic {{Resonators}} for {{Far-Field Control}} of {{Sound}} on a {{Subwavelength Scale}}},
  author = {Lemoult, Fabrice and Fink, Mathias and Lerosey, Geoffroy},
  year = {2011},
  journal = {Physical Review Letters},
  volume = {107},
  number = {6},
  pages = {064301},
  publisher = {American Physical Society},
  doi = {10.1103/PhysRevLett.107.064301},
  url = {https://link.aps.org/doi/10.1103/PhysRevLett.107.064301},
  abstract = {We prove experimentally that broadband sounds can be controlled and focused at will on a subwavelength scale by using acoustic resonators. We demonstrate our approach in the audible range with soda cans, that is, Helmholtz resonators, and commercial computer speakers. We show that diffraction-limited sound fields convert efficiently into subdiffraction modes in the collection of cans that can be controlled coherently in order to obtain focal spots as thin as 1/25 of a wavelength in air. We establish that subwavelength acoustic pressure spots are responsible for a strong enhancement of the acoustic displacement at focus, which permits us to conclude with a visual experiment exemplifying the interest of our concept for subwavelength sensors and actuators.}
}

@article{lemoult.kaina.ea2016Soda,
  title = {Soda {{Cans Metamaterial}}: {{A Subwavelength-Scaled Phononic Crystal}}},
  shorttitle = {Soda {{Cans Metamaterial}}},
  author = {Lemoult, Fabrice and Kaina, Nad{\`e}ge and Fink, Mathias and Lerosey, Geoffroy},
  year = {2016},
  journal = {Crystals},
  volume = {6},
  number = {7},
  pages = {82},
  publisher = {Multidisciplinary Digital Publishing Institute},
  issn = {2073-4352},
  doi = {10.3390/cryst6070082},
  url = {https://www.mdpi.com/2073-4352/6/7/82},
  abstract = {Photonic or phononic crystals and metamaterials, due to their very different typical spatial scales---wavelength and deep subwavelength---and underlying physical mechanisms---Bragg interferences or local resonances---, are often considered to be very different composite media. As such, while the former are commonly used to manipulate and control waves at the scale of the unit cell, i.e., wavelength, the latter are usually considered for their effective properties. Yet we have shown in the last few years that under some approximations, metamaterials can be used as photonic or phononic crystals, with the great advantage that they are much more compact. In this review, we will concentrate on metamaterials made out of soda cans, that is, Helmholtz resonators of deep subwavelength dimensions. We will first show that their properties can be understood, likewise phononic crystals, as resulting from interferences only, through multiple scattering effects and Fano interferences. Then, we will demonstrate that below the resonance frequency of its unit cell, a soda can metamaterial supports a band of subwavelength varying modes, which can be excited coherently using time reversal, in order to beat the diffraction limit from the far field. Above this frequency, the metamaterial supports a band gap, which we will use to demonstrate cavities and waveguides, very similar to those obtained in phononic crystals, albeit of deep subwavelength dimensions. We will finally show that multiple scattering can be taken advantage of in these metamaterials, by correctly structuring them. This allows to turn a metamaterial with a single negative effective property into a negative index metamaterial, which refracts waves negatively, hence acting as a superlens.},
  copyright = {http://creativecommons.org/licenses/by/3.0/},
  langid = {english},
  keywords = {acoustics,metamaterial,multiple scattering,phononic crystals}
}

@article{ma.sheng2016Acoustic,
  title = {Acoustic Metamaterials: {{From}} Local Resonances to Broad Horizons},
  shorttitle = {Acoustic Metamaterials},
  author = {Ma, Guancong and Sheng, Ping},
  year = {2016},
  journal = {Science Advances},
  volume = {2},
  number = {2},
  pages = {e1501595},
  publisher = {American Association for the Advancement of Science},
  doi = {10.1126/sciadv.1501595},
  url = {https://www.science.org/doi/10.1126/sciadv.1501595},
  abstract = {Within a time span of 15 years, acoustic metamaterials have emerged from academic curiosity to become an active field driven by scientific discoveries and diverse application potentials. This review traces the development of acoustic metamaterials from the initial findings of mass density and bulk modulus frequency dispersions in locally resonant structures to the diverse functionalities afforded by the perspective of negative constitutive parameter values, and their implications for acoustic wave behaviors. We survey the more recent developments, which include compact phase manipulation structures, superabsorption, and actively controllable metamaterials as well as the new directions on acoustic wave transport in moving fluid, elastic, and mechanical metamaterials, graphene-inspired metamaterials, and structures whose characteristics are best delineated by non-Hermitian Hamiltonians. Many of the novel acoustic metamaterial structures have transcended the original definition of metamaterials as arising from the collective manifestations of constituent resonating units, but they continue to extend wave manipulation functionalities beyond those found in nature.}
}

@article{mane1978Contributions,
  title = {Contributions to the Stability Conjecture},
  author = {Ma{\~n}{\'e}, Ricardo},
  year = {1978},
  journal = {Topology},
  volume = {17},
  number = {4},
  pages = {383--396},
  issn = {0040-9383},
  doi = {10.1016/0040-9383(78)90005-8},
  url = {https://www.sciencedirect.com/science/article/pii/0040938378900058}
}

@article{marx2014Dominated,
  title = {Dominated Splittings and the Spectrum of Quasi-Periodic {{Jacobi}} Operators},
  author = {Marx, C A},
  year = {2014},
  journal = {Nonlinearity},
  volume = {27},
  number = {12},
  pages = {3059},
  publisher = {IOP Publishing},
  issn = {0951-7715},
  doi = {10.1088/0951-7715/27/12/3059},
  url = {https://dx.doi.org/10.1088/0951-7715/27/12/3059},
  abstract = {We prove that the resolvent set of any, possibly singular, quasi- periodic Jacobi operator is characterized as the set of all energies whose associated Jacobi cocycles induce a dominated splitting. This extends a well-known result by Johnson for Schr{\"o}dinger operators.},
  langid = {english}
}

@article{minnaert1933XVI,
  title = {{{XVI}}. {{On}} Musical Air-Bubbles and the Sounds of Running Water},
  author = {Minnaert, M.},
  year = {1933},
  journal = {The London, Edinburgh, and Dublin Philosophical Magazine and Journal of Science},
  volume = {16},
  number = {104},
  pages = {235--248},
  publisher = {Taylor \& Francis},
  issn = {1941-5982},
  doi = {10.1080/14786443309462277},
  url = {https://doi.org/10.1080/14786443309462277}
}

@book{pastur1992Spectra,
  title = {Spectra of Random and Almost-Periodic Operators},
  author = {Pastur, Leonid Andreevi{\v c}},
  year = {1992},
  series = {Grundlehren Der Mathematischen {{Wissenschaften}} 297},
  publisher = {Springer-Verl.},
  address = {Berlin},
  collaborator = {Pastur, Leonid A. and Pastur, Leonid Andreevi{\v c} and Figotin, Aleksandr and Figotin, Alexander and Figotin, Alexander},
  isbn = {978-3-540-50622-5},
  langid = {english},
  lccn = {P 76474: 297},
  keywords = {Almost periodic operators,Differentialoperator,Fastperiodischer Hamilton-Operator,MARKOVFELDER UND ANDERE ZUFALLSFELDER (WAHRSCHEINLICHKEITSRECHNUNG),RANDELEMENTMETHODE (NUMERISCHE MATHEMATIK),Random operators,SCHRODINGEROPERATOREN (FUNKTIONALANALYSIS),Spectral theory (Mathematics),Spektraldarstellung,Spektraltheorie,SPEKTRALTHEORIE LINEARER OPERATOREN (FUNKTIONALANALYSIS),SPEKTRALTHEORIE VON DIFFERENTIALOPERATOREN UND INTEGRALOPERATOREN (FUNKTIONALANALYSIS),SPEZIELLE TYPEN VON MATRIZEN (ALGEBRA),ZUFALLSFUNKTIONEN MEHRERER VERANDERLICHER (WAHRSCHEINLICHKEITSRECHNUNG),Zufallsoperator}
}

@article{zhang2020Uniform,
  title = {Uniform Hyperbolicity and Its Relation with Spectral Analysis of {{1D}} Discrete {{Schr{\"o}dinger}} Operators},
  author = {Zhang, Zhenghe},
  year = {2020},
  journal = {Journal of Spectral Theory},
  volume = {10},
  number = {4},
  pages = {1471--1517},
  issn = {1664-039X},
  doi = {10.4171/jst/333},
  url = {https://ems.press/journals/jst/articles/17415},
  abstract = {Zhenghe Zhang},
  langid = {english}
}

\end{document}